\newtheorem{definition}{Definition}
\newtheorem{theorem}{Theorem}
\newtheorem{corollary}{Corollary}
\newtheorem{lemma}{Lemma}
\newtheorem{remark}{Remark}
\begin{document}

\title{Quantum algorithms for classical Boolean functions via adaptive measurements: Exponential reductions in space-time resources}

\author{Austin K. Daniel}
\email{austindaniel@unm.edu}
\affiliation{Department of Physics and Astronomy, Center for Quantum Information and Control, University of New Mexico, Albuquerque, New Mexico 87106, USA}
\author{Akimasa Miyake}
\email{amiyake@unm.edu}
\affiliation{Department of Physics and Astronomy, Center for Quantum Information and Control, University of New Mexico, Albuquerque, New Mexico 87106, USA}
\date{\today}

\begin{abstract}

The limited computational power of constant-depth quantum circuits can be boosted by adapting future gates according to the outcomes of mid-circuit measurements.  We formulate the computation of a variety of Boolean functions in terms of adaptive measurement-based quantum computation (MBQC) using a cluster state resource and a classical side-processor that can add bits modulo 2, so-called $l2$-MBQC. Our adaptive approach overcomes a known challenge that computing these functions in the nonadaptive setting requires a resource state that is exponentially large in the size of the computational input. Inspired by quantum signal processing constructions for computing symmetric Boolean functions via a sequence of classically conditioned quantum gates on a single qubit, we develop a unified framework to translate single-qubit quantum computation to adaptive MBQC, so that trade-offs of the space-time resources (i.e., qubit count, quantum circuit depth, classical memory size, and number of calls to the side-processor) can be optimized. In particular, we construct adaptive $l2$-MBQC algorithms that compute the mod-$p$ functions, which play a crucial role in an oracular separation of computational classes, with the best known efficient scaling in the space-time resources.
    
\end{abstract}

\maketitle

\section{Introduction}

Measurement-based quantum computation (MBQC) is a scheme by which local measurements performed on a multipartite entangled state can drive a coherent quantum computation \cite{raussendorf2001a, raussendorf2003measurement, jozsa2006introduction}.  This model can be re-interpreted as how quantum correlations boost the computational power of a sub-universal classical side-processor to full universality \cite{anders2009computational}.  Since this result, there has been a large body of literature concerning the power of MBQC with the assistance of a classical side-processor that can only compute parity functions (so-called $l2$-MBQC) \cite{raussendorf2013contextuality, hoban2011non-adaptive, frembs2018contextuality, frembs2022hierarchies, abramsky2017contextual, mori2018periodic, mackeprang2022the}.  A majority of prior work has focused on the nonadaptive case, using generalized Greenberger-Horne-Zeillinger (GHZ) states as the entangled resource.   Meanwhile, to utilize the full power of quantum computation, the measurements must be performed \emph{adaptively}, updating the basis in which each measurement is performed according measurement outcomes obtained at a previous time with the help of the classical side-processor \cite{raussendorf2001a, terhal2004adaptive}.  Recent work has leveraged adaptivity to prove novel complexity theoretic separations between classes of shallow-depth quantum and classical circuits \cite{browne2010computational, takahashi2016collapse}.

In this paper, we construct efficient adaptive $l2$-MBQC protocols that use a 1D cluster state resource \cite{briegel2001persistent} to compute a variety of Boolean functions \cite{o2014analysis}.  For clarity, we have included a sketch of the thought process leading to these improvements in Fig.~\ref{fig:flow_chart}.  In particular, our construction is inspired by past works that use classically-conditioned single-qubit circuits to compute Boolean functions \cite{ablayev2005computational, cosentino2013dequantizing, clementi2017classical, mansfield2018quantum, emeriau2022quantum}, a model we refer to as one-qubit computation (1QC).  We embed these 1QCs into cluster states by viewing them as computational tensor networks \cite{gross2007novel}. In particular, we leverage and extend the recent result of Ref.~\cite{maslov2020quantum}, using quantum signal processing \cite{low2016methodology, low2017optimal, haah2019product} to construct efficient, constant-time $l2$-MBQC protocols for computing the family of so-called mod-$p$ functions (i.e., functions that count the number of 1's in a bit string modulo a prime number $p$).  Meanwhile, any nonadaptive protocol requires an exponentially large GHZ resource state to accomplish the same task.  Our new cluster state based construction requires fewer space-time resources than previous constructions.  To clarify this point, we recast several old results for computing Boolean functions via constant-depth quantum circuits with so-called unbounded quantum fan-out gates \cite{moore1999quantum, green2002counting, hoyer2005quantum, browne2010computational, takahashi2016collapse, anand2022power} as $l2$-MBQCs implemented on cluster states and analyze their computational resource costs in terms of qubit count, classical memory size, quantum circuit depth, and number of discrete time steps in the adaptive protocol.

We see this work as timely and valuable for the following reasons.  Implementations of small circuits that are capable of demonstrating a quantum advantage have recently been used to benchmark near-term quantum devices \cite{maslov2020quantum, demirel2021correlations, sheffer2022playing, daniel2022quantum}.  Meanwhile, adaptive $l2$-MBQC requires the use of mid-circuit measurements, which have recently become a tool that is available on today's quantum computers \cite{chen2021exponential, pino2021demonstration}.  This work lies at the intersection of these ideas, showing how mid-circuit measurements can be used to demonstrate a quantum advantage for computing a particular Boolean function using fewer computational resources.  On the theoretical side, our construction is an alternative proof of an old theorem in \cite{moore1999quantum}, showing the class $\mathsf{QNC}^0[2]$ (constant-depth quantum circuits with the aid of a classical parity gate) is strictly more powerful than the class $\mathsf{AC}^0[p]$ (constant-depth classical circuits with unbounded fan-in AND, OR, and mod-$p$ gates) for any prime number $p$.  Recent break-through works have shown how one may remove the parity oracle on the quantum side for some tasks and still obtain meaningful, yet less powerful, complexity theoretic separations \cite{bravyi2018quantum, gall2018average, watts2019exponential, grier2019interactive, caha2022single}, though separating $\mathsf{QNC}^0$ and $\mathsf{AC}^0[p]$ remains elusive \cite{watts2019exponential, grier2019interactive, caha2022single}.  We see our work as a step towards separating these classes.  Finally, our work gives new insight into the study of quantum contextuality \cite{budroni2021quantum}.  Quantum contextuality is known to be the resource responsible for the quantum advantage in computing nonlinear Boolean functions via $l2$-MBQC \cite{raussendorf2013contextuality, abramsky2017contextual}.  A generic framework to describe contextual quantum computations in the language of cohomology has long been sought out \cite{raussendorf2017contextuality, raussendorf2019cohomological, aasnaess2020cohomology, frembs2018contextuality} and recent work has aimed to cast adaptive quantum computations in this framework \cite{raussendorf2022putting}.  Our work brings a plethora of new and old examples to light in the framework of $l2$-MBQC.

\begin{figure}
    \centering
    \includegraphics[width=\linewidth]{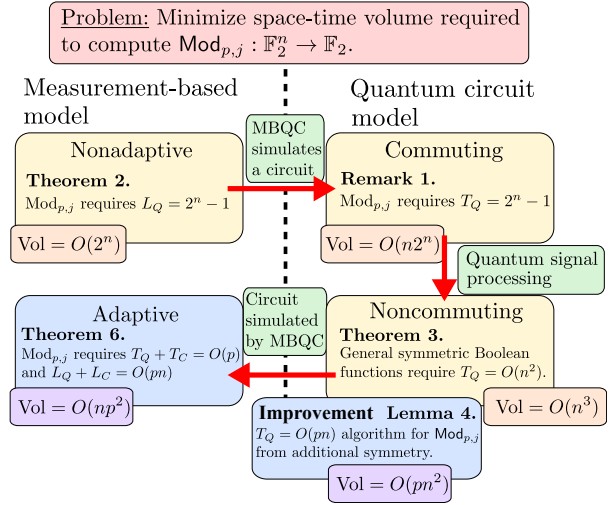}
    \caption{A flow chart of key ideas in this paper.  We wish to construct efficient algorithms to compute the Boolean function $\mathsf{Mod}_{p,j}$, defined in Eq.~(\ref{eq:mod_p_def}), on an $n$-bit input string. A unifying theme is the use of trade-offs between space-time resources when moving between the measurement-based and quantum circuit models. It was shown in Refs.~\cite{hoban2011non-adaptive, mori2018periodic} that nonadaptive $l2$-MBQC algorithms require an exponential size resource state, which we review in Thm.~\ref{thm:nmbqc_resource_cost}.  We observe in Remark~\ref{rmk:nmbqc_c1qc} that this fundamental obstruction comes from the fact that nonadaptive $l2$-MBQCs simulate the output of quantum circuits consisting of only commuting gates.  Our idea is to instead use the quantum signal processing technique developed in Ref.~\cite{maslov2020quantum}, which can compute any symmetric Boolean function via a quadratic depth single-qubit circuit, to construct efficient adaptive $l2$-MBQCs based on the 1D cluster state.  In Lemma~\ref{lem:nc1QCmod_p} we improve this construction for $\mathsf{Mod}_{p,j}$, enabling our constant-time linear-space $l2$-MBQC algorithm presented in Theorem~\ref{thm:mod_p_mbqc}.  Our algorithm has better scaling in the required space-time volume over previous algorithms (c.f., Tab.~\ref{tab:resource_cost_table}). }
    \label{fig:flow_chart}
\end{figure}

The paper is organized as follows.  In Sec.~\ref{sec:background} we define and review some fundamental facts about $l2$-MBQC, $l2$-1QC, and Boolean functions.  In Sec.~\ref{sec:prior_work} we review known results about the resource costs for computing particular Boolean functions via $l2$-MBQC in both the nonadaptive and adaptive settings.  In Sec.~\ref{sec:adaptive_l2MBQC} we give our main result.  Namely, we construct an adaptive $l2$-MBQC protocol based on the 1D cluster state that computes the mod-$p$ functions with the best known space-time resource costs.  In Sec.~\ref{sec:classical_comparison} we discuss the complexity theoretic ramifications of our construction.  Finally, in Sec.~\ref{sec:conclusion_outlook} we conclude with a brief outlook on open questions and possible extensions of our results.

\section{Background}
\label{sec:background}

In this section we first introduce the two computational models studied in this work, $l2$-MBQC and $l2$-1QC, and discuss the computational resources we will be counting for each.
A rough sketch of $l2$-MBQC and $l2$-1QC as hybrid quantum-classical circuits is given in Fig.~\ref{fig:1QC_MQC}.  We then introduce the Boolean functions we would like to compute and review some of their properties that will be of later use.

\begin{figure}
    \centering
    \includegraphics[width=\linewidth]{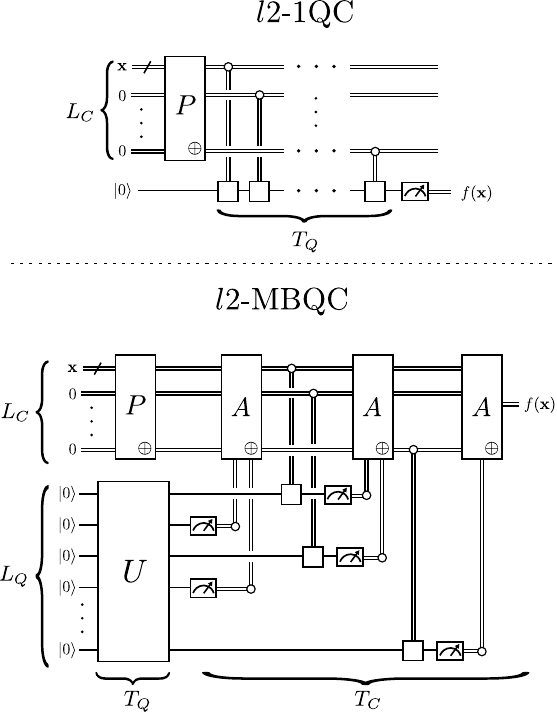}
    \caption{(above) A depiction of $l2$-1QC as a hybrid quantum-classical circuit.  Quantum and classical bits in the circuit are represented by single and double lines, respectively.  First, a single qubit initialized as $|0\rangle$ while a mod-2 linear classical side-processor (labeled by an $\oplus$) computes $L_C$ many different parity functions of the computational input $\mathbf{x}\in\mathbb{F}_2^n$ according to the binary preprocessing matrix $P$. The single qubit is evolved under a sequence of $T_Q$ many single qubit rotations, each conditioned on one of the $L_C$ many classical bits. We would like the outcome of the final measurement to equal $f(\mathbf{x})$ with probability 1. (below) A depiction of an adaptive $l2$-MBQC as a hybrid quantum-classical circuit.  Initially, an $L_Q$-qubit resource state is prepared by a depth $T_Q$ quantum circuit $U$.  Meanwhile, a mod-2 linear classical side-processor computes many different parity functions of the computational input $\mathbf{x}\in\mathbb{F}_2^n$ according to the binary preprocessing matrix $P$. A sequence of adaptive measurements are then performed on the prepared quantum state in bases determined by the preprocessing matrix $P$ along with adaptation matrix $A$, which computes parity functions of the measurement outcomes.  In the final round, the classical side-processor should output $f(\mathbf{x})$ with probability 1.}
    \label{fig:1QC_MQC}
\end{figure}

\subsection{Notation and preliminaries}

This work concerns Boolean functions \cite{o2014analysis}, i.e., maps $f:\mathbb{F}_2^n\rightarrow\mathbb{F}_2$, where $\mathbb{F}_2 := (\{0,1\}, \oplus, \cdot)$ denotes the finite field of order two under addition and multiplication modulo 2.  We will always use $\oplus$ to denote addition modulo 2 and $+$ to denote addition over the reals; however, we will sometimes add a ``mod 2" to the end of an expression for clarity.  Furthermore, $\mathbb{F}_2^n$ denotes the $n$-dimension vector space over the field $\mathbb{F}_2$ (i.e., the vector space formed by binary strings under element-wise addition modulo 2).  Finally, we denote by $[n]$ the set of natural number from 1 to $n$ (i.e., $[n]=\{1,\ldots,n\}$). 

The other major topic of this work is quantum computation.  Let $|0\rangle$ and $|1\rangle$ denote the computational basis states for a single qubit.  Let $Z$, $X$, and $Y$ denote the Pauli matrices and let $I$ be the identity matrix.  Let $H=(\sum_{j,k=0}^1 (-1)^{jk} |j\rangle\langle k|)/\sqrt{2}$ denote the Hadamard operator.  Let $\mathrm{C}Z = \sum_{j,k=0}^1 (-1)^{jk} |jk\rangle\langle jk|$. Similarly $R_\sigma(\theta) = \exp(-i\sigma\theta/2)$ is used to denote a single-qubit rotation by an angle $\theta$ about the axis $\sigma\in\{X,Y,Z\}$ of the Bloch sphere.

\subsection{$l2$-Measurement-based quantum computation}

We begin with a definition of $l2$-MBQC, a form of measurement-based quantum computing assisted by a classical computer that can only add bits modulo 2---called the mod-2 linear classical side-processor---to compute a target Boolean function $f:\mathbb{F}_2^n\rightarrow\mathbb{F}_2$.

\begin{definition}{(l2-MBQC)\cite{raussendorf2013contextuality}}
\label{def:l2_MBQC}
An l2-MBQC with classical input $\mathbf{x}\in\mathbb{F}_2^n$ and classical output $y\in\mathbb{F}_2$ is a measurement-based quantum computation driven by single-qubit measurements performed on a $L_Q$-qubit multipartite resource state $|\psi\rangle\in(\mathbb{C}^{2})^{\otimes L_Q}$ satisfying the following properties:
\begin{itemize}
    \item[1.] For each qubit $k \in [L_Q]$, there is a binary choice for the measurement basis $s_k\in\mathbb{F}_2$ corresponding to a projective measurement in the eigenbasis of an observable $O_k(s_k)$.
    \item[2.] Each measurement outcome $m_k$ is a binary number (i.e., $m_k\in\mathbb{F}_2$ $\forall k\in[L_Q]$). Namely, $O_k(s_k)^2=I$.
    \item[3.] The computational output $y\in\mathbb{F}_2$ is the parity of some subset of the measurement outcomes. I.e.,
    \begin{align}
        y = \mathbf{o}^T \mathbf{m} \oplus c \textrm{ }\mathrm{mod}\textrm{ }2,
    \end{align}
    where $\mathbf{m}\in\mathbb{F}_2^{L_Q}$ is the string of measurement outcomes, $\mathbf{o}\in\mathbb{F}_2^{L_Q}$ specifies which outcomes are relevant, and $c\in \mathbb{F}_2$ is some constant.
    \item[4.] The choice of measurement setting $\mathbf{s} \in \mathbb{F}_2^{L_Q}$ is related to the measurement outcomes $\mathbf{m}\in\mathbb{F}_2^{L_Q}$ and the input string $\mathbf{x}\in\mathbb{F}_2^n$ via
    \begin{align}
        \mathbf{s} = P\mathbf{x} \oplus A\mathbf{m} \textrm{ }\mathrm{mod}\textrm{ }2,
        \label{eq:l2MBQC_Setting}
    \end{align}
    where $P\in\mathbb{F}_2^{L_Q\times n}$ is called the preprocessing matrix and $A\in\mathbb{F}_2^{L_Q\times L_Q}$ is called the adaptation matrix. The preprocessing matrix $A$ must be lower triangular for a suitable ordering of the qubits to ensure a well-defined causal ordering of the measurements. 
\end{itemize}
We say the $l2$-MBQC computes a target Boolean function $f:\mathbb{F}_2^n\rightarrow \mathbb{F}_2$ whenever $f(\mathbf{x})=y$.
\end{definition}

We will be interested in the space-time resources required to compute a particular Boolean function via $l2$-MBQC.  The quantities of interest are sketched in Fig.~\ref{fig:1QC_MQC}.  These are: $L_Q$, the size of the quantum resource state; $L_C$, the number of unique bits output by the preprocessing matrix $P$; $T_Q$, the depth of the quantum circuit required to prepare the resource state $|\psi\rangle$; and $T_C$, the number of rounds of back and forth communication between the mod-2 linear classical side-processor and the quantum resource.  We will take as our figure of merit the total space-time volume of the algorithm, given by $(L_Q + L_C)(T_Q + T_C)$. 
The special case when $A=0$ (i.e., $A$ is the zero matrix) is called nonadaptive $l2$-MBQC.  It has been shown that the $N$-qubit GHZ state $|\mathrm{GHZ}_N\rangle = (|0\rangle^{\otimes N} + |1\rangle^{\otimes N})/\sqrt{2}$ can be used to compute any Boolean function $f:\mathbb{F}_2^n\rightarrow\mathbb{F}_2$, however, there exist worst-case Boolean functions for which $L_Q=N=2^n-1$ \cite{hoban2011non-adaptive}.  In Sec.~\ref{sec:prior_work} we will discuss how prior works have used adaptivity to overcome this exponential resource cost \cite{takahashi2016collapse, mori2018periodic}.  In Sec.~\ref{sec:adaptive_l2MBQC}, we will construct another method that is based on adaptive $l2$-MBQC with the 1D cluster state and quantum signal processing that gives even better overall scaling in the space-time resource costs. 

\subsection{One-qubit quantum computations}

We now define $l2$-1QC, which is a type of limited-space quantum computation. Here, a single qubit of active computational space is evolved under a sequence of $T_Q$ many unitaries each conditioned on one of $L_C$ many bits of classical memory in order to compute a target Boolean function.

\begin{definition}{($l2$-1QC)}
An $l2$-1QC with classical input $\mathbf{x}\in\mathbb{F}_2^n$ and classical output $y\in\mathbb{F}_2$ is a single-qubit quantum circuit supplemented by a $L_C$-bit ``read-only" classical memory register. The circuit contains $T_Q$ many gates applied sequentially that satisfy the following properties:
\begin{itemize}
    \item[1.] For each gate $k\in[T_Q]$, there is a corresponding binary choice $s_k\in\mathbb{F}_2$ for the unitary to apply, denoted $U_k(s_k)\in\mathrm{U}(2)$.
    \item[2.] The collection of classical bits $\mathbf{s}\in\mathbb{F}_2^{T_Q}$ is related to the computational input $\mathbf{x}\in\mathbb{F}_2^n$ via
    \begin{align}
        \mathbf{s} = P\mathbf{x} \textrm{ }\mathrm{mod}\textrm{ }2,
    \end{align}
    where $P\in\mathbb{F}_2^{T_Q\times n}$ is called the preprocessing matrix.
    \item[3.] The computational output $y\in\mathbb{F}_2$ is the measurement outcome obtained for the single qubit at the end of the circuit.
\end{itemize}
We say the $l2$-1QC computes a target Boolean function $f:\mathbb{F}_2^n\rightarrow\mathbb{F}_2$  $f(\mathbf{x})=y$.  In the case that each unitary is simply conditioned on the individual bits from the string $\mathbf{x}$, we will drop the prefix ``$l2$" and simply call it a 1QC since no preprocessing of the input is required. 
\end{definition}

From the above definition, a $l2$-1QC that computes a Boolean function $f:\mathbb{F}_2^n\rightarrow\mathbb{F}_2$ necessarily satisfies
\begin{align}
    \langle y|U_{T_Q}(s_{T_Q}) \cdots U_1(s_1) |0\rangle \propto \delta_{f(\mathbf{x})}^y.
    \label{eq:1QC_Def}
\end{align}
The special case when all the unitaries $U_j(s_j)$ commute is called a commuting $l2$-1QC.  Otherwise, we call it a noncommuting $l2$-1QC.  We will be interested in the space-time resource costs $L_C$ and $T_Q$, sketched in Fig.~\ref{fig:1QC_MQC}, required to compute a particular Boolean function.

\subsection{Boolean functions}
In this section we will give a brief overview of some properties of Boolean functions we will use in this paper.  A Boolean function $f$ is a map $f: \mathbb{ F }_2^n  \rightarrow  \mathbb{ F }_2$.  Any Boolean function may be expressed in terms of its truth-table (i.e., a list of $2^n$ binary numbers $f(\mathbf{x})$ corresponding the value $f$ evaluates to for each string $\mathbf{x}\in\mathbb{F}_2^n$) or equivalently in terms of its algebraic normal form (ANF) (i.e., a polynomial $f(\mathbf{x})\in \mathbb{F}_2[x_1,\ldots,x_{n}]$).    A Boolean function $f$ can be expressed in ANF as
\begin{align}
    f(\mathbf{x}) = \bigoplus_{S\subseteq[n]} a_S \prod_{j\in S}x_j,
\end{align}
where $a_S \in \mathbb{ F }_2$ for each $S\subseteq[n]$.  The degree of $f$, denoted $\mathrm{deg}(f)$, is the maximal degree of any monomial in the ANF. 

In this paper, we will primarily be interested in symmetric Boolean functions. A symmetric Boolean function is a Boolean function that is invariant to permutations of the bits in the input string $\mathbf{x}\in\mathbb{F}_2^n$.  Consequentially, these functions only depend on the hamming weight of the input string $|\mathbf{x}|=x_1+\cdots+x_n$ and their algebraic normal forms satisfy $a_S = a_{S'}$ whenever $|S|=|S'|$.

For different computational models there are different ways to characterize the complexity of computing a particular Boolean function.  Let us briefly look at each and comment on their physical meaning.  

The weakest classical computational model we consider is the mod-2 linear classical-side processor itself.  While it cannot compute nonlinear Boolean functions deterministically it can do so probabilistically with a success probability depending on the fraction of inputs the target function $f(\mathbf{x})$ agrees with a linear function $\mathbf{k}\cdot\mathbf{x}$, where $\mathbf{k}\in\mathbb{F}_2^n$ and $\mathbf{k}\cdot\mathbf{x} = k_1x_1 \oplus \cdots \oplus k_n x_n$ denotes the dot product of two binary vectors.  This is quantified by the Boolean function's largest Fourier coefficient, $\hat{f}_\mathrm{max} = \max_{\mathbf{k}\in \mathbb{F}_2^n}\{|\hat{f}(\mathbf{k})|\}$ where
\begin{align}
    \hat{f}(\mathbf{k}) = \frac{1}{2^n} \sum_{\mathbf{x}\in\mathbb{F}_2^n}(-1)^{f(\mathbf{x}) + \mathbf{k}\cdot\mathbf{x}}.
    \label{eq:walsh_hadamard_def}
\end{align}
The above formula is often called the Walsh-Hadamard transform of the Boolean function $f$.  It constitutes a Boolean analogue of the Fourier transform.  The probability for this model to produce the correct computational output given an input string $\mathbf{x}\in\mathbb{F}_2^n$ drawn uniformly at random is then
\begin{align}
    \mathrm{pr}(y=f(\mathbf{x})) \leq \frac{1 + \hat{f}_\mathrm{max}}{2}.
\end{align}

The first quantum computational model we will consider is that of nonadaptive $l2$-MBQC.  While this model can compute any Boolean function determistically, it has been shown in Ref.~\cite{mori2018periodic} that the number of qubits required depends on the function's periodic Fourier sparsity.  Namely, any Boolean function can be expressed via a periodic Fourier decomposition; i.e., via a sum of the $2^n$ linear functions $\mathbf{s}\cdot\mathbf{x}$ weighted by corresponding real angles $\phi_{\mathbf{s}}\in[0,2\pi)$ such that
\begin{align}
    (-1)^{f(\mathbf{x}) + f(\mathbf{0})} = \cos\left(\sum_{\mathbf{s}\in\mathbb{F}_2^n} (\mathbf{s}\cdot\mathbf{x})\phi_\mathbf{s}\right).
    \label{eq:periodic_fourier}
\end{align}
Ref.~\cite{mori2018periodic} showed that while such an expansion always exists, it is not unique.  In Appendix~\ref{sec:boolean_appendix} we show how to obtain all such periodic Fourier decompositions by solving a system of linear equations with different inhomogenous terms.  Denote the set of all such functions $\phi^{(f)}:\mathbb{F}_2^n\rightarrow [0,2\pi]$ that specify a periodic Fourier decomposition of $f$ as $\Phi_f$.  The Periodic-Fourier sparsity (denoted $\hat{p}_f$) is defined as the minimum number of nonzero values that any $\phi_\mathbf{s}\in\Phi_f$ can have. Namely,
\begin{align}
    \hat{p}_f = \min_{\phi_f\in\Phi_f} |\mathrm{supp}\{\phi_f(\mathbf{s})\}|.
\end{align}
In the next section we will show that nonadaptive $l2$-MBQC of a Boolean function $f$ requires $L_Q=L_C=\hat{p}_f$.  Furthermore, noncommutative $l2$-1QC of the same function requires $T_Q=L_C=\hat{p}_f$.

A Boolean function that naturally appears in the context of $l2$-MBQC \cite{hoban2011non-adaptive} and $l2$-1QC \cite{clementi2017classical} is the pairwise-AND function, denoted $C_n^2:\mathbb{F}_2^n\rightarrow\mathbb{F}_2$.  This function is defined as
\begin{align}
    C_n^2(\mathbf{x}) = 
    \begin{cases}
    0 &\textrm{if }|\mathbf{x}| = 0\textrm{ or 1 mod 4} \\
    1 &\textrm{if }|\mathbf{x}| = 2\textrm{ or 3 mod 4}
    \end{cases}
\end{align}
and has ANF 
\begin{align}
    C_n^2(\mathbf{x}) = \bigoplus_{j=1}^{n-1} \bigoplus_{k=j+1}^n x_j x_k.
\end{align}
This function is also known as the second-least-significant-bit (or SLSB) function since it computes the second least significant bit in the binary representation of the integer number $|\mathbf{x}|$.  
This function has largest Fourier amplitude $|\hat{f}|_\mathrm{max} = 2^{-n/2}$.  Furthermore, its periodic Fourier sparsity is $\hat{p}_{f} = n+1$ and the corresponding periodic Fourier decomposition is given by
\begin{align}
C_n^2(\mathbf{x}) = \cos\left( \frac{\pi}{2} \sum_{\substack{S\subseteq [n] \\ |S|= 1,n}}  \left(\bigoplus_{j\in S} x_j \right) \right).
\end{align}

A family of Boolean functions that are important in classical circuit complexity theory are the mod-$p$ functions \cite{smolensky1987algebraic}, denoted $\mathsf{Mod}_{p,j}:\mathbb{F}_2^n\rightarrow\mathbb{F}_2$ for each $j\in\{0,\ldots,p-1\}$. The mod-$p$ functions effectively count the number of 1's in a string $\mathbf{x}\in\mathbb{F}_2^n$ modulo $p$.  Namely,
\begin{align}
    \mathsf{Mod}_{p,j}(\mathbf{x}) =
    \begin{cases}
    0 &\textrm{ if }|\mathbf{x}|=j\textrm{ mod }p \\
    1 &\textrm{ otherwise}
    \end{cases}.
    \label{eq:mod_p_def}
\end{align}
The ANF for the mod-$3$ function is
\begin{align}
    \mathsf{Mod}_{p,j}(\mathbf{x}) =   \bigoplus_{\substack{S\subseteq[n] \\ |S|\neq 3-j \textrm{ mod } 3}} \prod_{j\in S} x_j.
    \label{eq:mod_3_anf}
\end{align}
We prove this fact in Appendix~\ref{sec:mod3ANF} and furthermore prove the following fact.
\begin{lemma}
\label{lem:mod_p_anf}
For each possible integer $n$ and prime number $p$ at least one of the functions $\mathsf{Mod}_{p,j}:\mathbb{F}_2^n\rightarrow\mathbb{F}_2$ has degree-$n$ ANF.
\end{lemma}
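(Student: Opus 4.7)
The plan is to identify the coefficient of the top-degree monomial $\prod_{j\in[n]} x_j$ in the ANF of $\mathsf{Mod}_{p,j}$ and show that for at least one $j \in \{0,\ldots,p-1\}$ this coefficient is nonzero. By Möbius inversion on the Boolean lattice, the coefficient of $\prod_{j\in[n]} x_j$ in any Boolean function $f:\mathbb{F}_2^n\rightarrow\mathbb{F}_2$ equals $\bigoplus_{\mathbf{x}\in\mathbb{F}_2^n} f(\mathbf{x})$. Specializing to $f = \mathsf{Mod}_{p,j}$, this parity equals $N_{n,p,j} \bmod 2$ for $n\geq 1$, where I set $N_{n,p,j} := \bigl|\{\mathbf{x}\in\mathbb{F}_2^n : |\mathbf{x}| \equiv j \pmod{p}\}\bigr| = \sum_{k \equiv j \,(\mathrm{mod}\,p)} \binom{n}{k}$. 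Hence the lemma reduces to: for each $n\geq 1$ and prime $p$, at least one of $N_{n,p,0},\ldots,N_{n,p,p-1}$ is odd.

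Next I would recast this reduction as a polynomial divisibility problem in $\mathbb{F}_2[x]$. Observe that $N_{n,p,j} \bmod 2$ is precisely the coefficient of $x^j$ in $(1+x)^n \bmod (x^p - 1)$ over $\mathbb{F}_2[x]$. Consequently, \emph{all} the $N_{n,p,j}$ are simultaneously even if and only if $(x^p - 1)$ divides $(1+x)^n$ in $\mathbb{F}_2[x]$. It therefore suffices to rule this divisibility out.

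For an odd prime $p$, factor $x^p - 1 = (x+1)\, g(x)$ in $\mathbb{F}_2[x]$ with $g(x) = 1 + x + \cdots + x^{p-1}$. Evaluating at $x=1$ gives $g(1) = p \equiv 1 \pmod{2}$, so $(x+1) \nmid g(x)$, and therefore $\gcd\bigl(g(x),\, (1+x)^n\bigr) = 1$. If $(x^p-1) \mid (1+x)^n$ then $g(x) \mid (1+x)^n$, which forces $g(x)$ to be a unit in $\mathbb{F}_2[x]$---a contradiction, since $\deg g = p-1 \geq 2$. For the remaining case $p=2$, the claim is trivial at $n=1$ because $\mathsf{Mod}_{2,j}(x) = x \oplus j$ already has degree $1$.

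The main obstacle I anticipate is the divisibility step: one has to correctly carry out the factorization of $x^p - 1$ over characteristic two and verify the coprimality of $g(x)$ with $(1+x)^n$ uniformly in $n$. Once that is settled, the rest of the argument is essentially a bookkeeping exercise translating between the top-degree ANF coefficient, the parities of the binomial sums $N_{n,p,j}$, and the residue $(1+x)^n \bmod (x^p - 1)$ in $\mathbb{F}_2[x]$.
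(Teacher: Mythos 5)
Your proof is correct for every odd prime $p$ and all $n\geq 1$, and it takes a genuinely different (and shorter) route than the paper. The paper works out the \emph{entire} coefficient vector of the ANF in the basis of complete $\mu$-tic functions, derives the recursion $\mathbf{a}_{\mu}=M\mathbf{a}_{\mu-1}$ with $M=I+X^{(p)}$ over $\mathbb{F}_2$, and argues that $\mathbf{a}_n\neq\mathbf{0}$ via the kernel and image of $M$ (the kernel is spanned by $\mathbf{1}$ and $\mathbf{1}\notin\mathrm{im}(M)$ for odd $p$). You instead isolate only the top coefficient via the M\"obius formula $a_{[n]}=\bigoplus_{\mathbf{x}}f(\mathbf{x})$, identify the resulting parities with the residue $(1+x)^n \bmod (x^p-1)$ in $\mathbb{F}_2[x]$, and kill the divisibility $(x^p-1)\mid(1+x)^n$ by the factorization $x^p-1=(x+1)g(x)$ with $g(1)=p\equiv 1$. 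The two arguments are secretly the same algebra --- the paper's $M=I+X^{(p)}$ is multiplication by $1+x$ in $\mathbb{F}_2[x]/(x^p-1)$, and its vector $\mathbf{a}_n$ is essentially your residue $(1+x)^n\bmod(x^p-1)$ --- but your gcd argument is cleaner and avoids computing the base case and the closed-form expression for $a_\mu^{(j)}$ that the paper includes. One further point in your favor: the lemma as literally stated is false for $p=2$ and $n\geq 2$ (both $\mathsf{Mod}_{2,0}$ and $\mathsf{Mod}_{2,1}$ are affine, so neither has degree-$n$ ANF; equivalently $N_{n,2,0}=N_{n,2,1}=2^{n-1}$ are both even). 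The paper's proof silently breaks there as well, since its claim that no $\mathbf{b}$ satisfies $M\mathbf{b}=\mathbf{1}$ holds only for odd $p$ (for $p=2$, $M(1,0)^T=(1,1)^T$). Your explicit restriction of the $p=2$ case to $n=1$ is therefore not a gap but the honest scope of the statement; the lemma should be read as applying to odd primes, which is the only case the paper uses downstream.
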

\noindent This fact has been mentioned before in the literature, though we provide a direct proof in the appendix for clarity.  As we will see in the next subsection, the above lemma implies that for every $n$ the periodic Fourier sparsity is $\hat{p}_f = 2^n -1$ for at least one of the mod-$p$ functions.

One function that is known to have worst-case resource scaling in nonadaptive $l2$-MBQC is the $n$-tuple AND function, which we denote $\mathsf{AND}_n:\mathbb{F}_2^n\rightarrow\mathbb{F}_2$. It has truth table
\begin{align}
    \mathsf{AND}_n(\mathbf{x}) =
    \begin{cases}
    1 &\textrm{if }\mathbf{x} = \mathbf{1} \\
    0&\textrm{otherwise}
    \end{cases}
\end{align}
and ANF
\begin{align}
    \mathsf{AND}_n(\mathbf{x}) = \prod_{j=1}^n x_j.
\end{align}
This function has largest Fourier amplitude $|\hat{f}|_\mathrm{max} = 1 - 2^{1-n}$, which can readily be seen by the fact that the constant function $f(\mathbf{x})=0$ agrees with $\mathsf{AND}_n$ on all but one input.  Furthermore, its periodic Fourier sparsity is $\hat{p}_{f} = 2^n-1$ and the corresponding Periodic-Fourier decomposition for this function \cite{mori2018periodic} is given by
\begin{align}
\mathsf{AND}_n(\mathbf{x}) = \cos\left( \frac{\pi}{2^{n-1}} \sum_{\substack{S\subseteq [n] \\ S\neq \varnothing}} (-1)^{|S|-1} \left(\bigoplus_{j\in S} x_j \right) \right).
\end{align}
A related function that also has worst-case resource scaling is the $n$-tuple OR function, denoted $\mathsf{OR}_n:\mathbb{F}_2^n \rightarrow \mathbb{F}_2$.  This function has truth table
\begin{align}
\mathsf{OR}_n(\mathbf{x}) =
\begin{cases}
0 &\textrm{if } \mathbf{x} = \mathbf{0} \\
1 &\textrm{otherwise} 
\end{cases}.
\end{align}
The two functions are related via de-Morgan's law, $\mathsf{OR}_n(\mathbf{x}) = \mathsf{AND}_n(\mathbf{x}\oplus \mathbf{1}) \oplus 1$, and thus any $l2$-MBQC algorithm to compute $\mathsf{OR}_n$ gives an algorithm to compute $\mathsf{AND}_n$. The Periodic-Fourier sparsity of this function is $\hat{p}_f = 2^n - 1$ and the corresponding Periodic-Fourier decomposition \cite{mori2018periodic} is
\begin{align}
\label{eq:or_n_pfd}
\mathsf{OR}_n(\mathbf{x}) = \cos\left( \pi \sum_{\substack{S\subseteq [n] \\ S\neq\varnothing}}  \phi_S\left( \bigoplus_{j\in S} x_j \right) \right)
\end{align}
where $\phi_S = (-1)^{|S|-1} (2^{n-|S|+1}-1)/2^{n-1}$.

\section{Exposition of prior results}
\label{sec:prior_work}

In this section, we review several known quantum algorithms for computing various Boolean functions via $l2$-MBQC and $l2$-1QC.  We first review in Sec.~\ref{sec:NMBQC} the nonadaptive $l2$-MBQC algorithm introduced in Ref.~\cite{hoban2011non-adaptive} and show it requires an exponential size resource state (i.e., $L_Q=2^n-1$) for a variety of functions including $\mathsf{AND}_n$, $\mathsf{OR}_n$, and $\mathsf{Mod}_{p,j}$.  We then review in Sec.~\ref{sec:OR_Red_MBQC} a variety of adaptive $l2$-MBQC algorithms introduced in Ref.~\cite{takahashi2016collapse} based on the so-called ``OR-reduction" that are capable of computing symmetric Boolean functions efficiently.  Finally, we draw a correspondence between nonadaptive $l2$-MBQC and commutative $l2$-1QCs in Sec.~\ref{sec:1QC_review} and show how a recent construction \cite{maslov2020quantum} of noncommutative $l2$-1QCs based on quantum signal processing overcomes this, which we will build upon more in Sec.~\ref{sec:adaptive_l2MBQC}.

\subsection{Nonadaptive $l2$-MBQC and periodic Fourier decompositions}
\label{sec:NMBQC}

Ref.~\cite{hoban2011non-adaptive} showed that any Boolean function $f:\mathbb{F}_2^n\rightarrow\mathbb{F}_2$ can be computed via nonadaptive $l2$-MBQC on a sufficiently large GHZ state, where the $N$-qubit GHZ state is defined as $|\textrm{GHZ}_N\rangle = (|0\rangle^{\otimes N} + |1\rangle^{\otimes N})/\sqrt{2}$.  The content of this result is summarized in the following theorem.
\begin{theorem}{\cite{hoban2011non-adaptive}}
\label{thm:NMBQC_universal}
Any function $f:\mathbb{F}_2^n\rightarrow\mathbb{F}_2$ with periodic Fourier decomposition $\cos(\sum_{\mathbf{p}\in\mathbb{F}_2^n} (\mathbf{p}\cdot\mathbf{x})\phi_j)$ can be computed via nonadaptive $l2$-MBQC on the $\hat{p}_f$-qubit GHZ state.
\end{theorem}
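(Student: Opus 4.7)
The plan is to compile a minimal periodic Fourier decomposition of $f$ directly into a nonadaptive $l2$-MBQC by exploiting the well-known multipartite correlation of the GHZ state under equatorial single-qubit measurements. First I would fix a minimizing $\phi^{(f)}\in\Phi_f$ and index its $\hat{p}_f$ nonzero entries by $(\mathbf{s}_1,\phi_1),\ldots,(\mathbf{s}_{\hat{p}_f},\phi_{\hat{p}_f})$. I then take the resource to be $|\mathrm{GHZ}_{\hat{p}_f}\rangle$, with one qubit per nonzero term. On qubit $k$ I declare the two allowed observables $O_k(0)=X$ and $O_k(1)=\cos(\phi_k)\,X+\sin(\phi_k)\,Y$; each is Hermitian with spectrum $\{\pm 1\}$. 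I set the $k$-th row of the preprocessing matrix $P$ equal to $\mathbf{s}_k^T$ and $A=0$, so that $s_k=\mathbf{s}_k\cdot\mathbf{x}$ mod $2$, and I declare the output to be $y=(\bigoplus_k m_k)\oplus f(\mathbf{0})$, which the mod-$2$ side-processor can compute.

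The core of the proof is the standard GHZ correlation identity
\begin{equation}
\langle\mathrm{GHZ}_N|\bigotimes_{k=1}^N\bigl(\cos\theta_k\,X+\sin\theta_k\,Y\bigr)|\mathrm{GHZ}_N\rangle=\cos\Bigl(\sum_{k=1}^N\theta_k\Bigr),
\end{equation}
which I would derive by writing each single-qubit factor as $e^{-i\theta_k}|0\rangle\langle 1|+e^{i\theta_k}|1\rangle\langle 0|$ and observing that the tensor product sends $|0\rangle^{\otimes N}\mapsto e^{i\sum_k\theta_k}|1\rangle^{\otimes N}$ (and conjugately for $|1\rangle^{\otimes N}$). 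Substituting $\theta_k=s_k\phi_k$ and invoking the periodic Fourier hypothesis shows that for every input $\mathbf{x}$ this expectation equals $(-1)^{f(\mathbf{x})+f(\mathbf{0})}\in\{\pm 1\}$.

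The last step, which I regard as the main (albeit short) obstacle, is to pass from an expectation value of $\pm 1$ to deterministic parity of the individual outcomes. Since each $O_k(s_k)$ has spectrum $\{\pm 1\}$, the tensor product $\bigotimes_k O_k(s_k)$ is a Hermitian involution, and an extremal expectation value of $\pm 1$ on $|\mathrm{GHZ}_{\hat{p}_f}\rangle$ forces the latter to be an eigenvector with that eigenvalue. Because the eigenvalue on an outcome sector $\mathbf{m}$ is $(-1)^{\bigoplus_k m_k}$, the parity $\bigoplus_k m_k$ is deterministically $f(\mathbf{x})\oplus f(\mathbf{0})$, so that $y=f(\mathbf{x})$ with probability $1$. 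Individual outcomes $m_k$ need not be deterministic, but only their joint parity enters the output, so this suffices to establish the theorem.
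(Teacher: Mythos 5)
Your proposal is correct and follows essentially the same route as the paper's proof: the same $\hat{p}_f$-qubit GHZ resource, the same equatorial observables $X(s_k\phi_k)$ determined by $P_{k,\cdot}=\mathbf{s}_k^T$ with $A=0$, the same correlation identity $\langle\mathrm{GHZ}_N|\bigotimes_k X(\theta_k)|\mathrm{GHZ}_N\rangle=\cos(\sum_k\theta_k)$, and the same output $y=\bigoplus_k m_k\oplus f(\mathbf{0})$. The only difference is that you make explicit the final step---that an extremal expectation value $\pm1$ of an involution forces the GHZ state into the corresponding eigenspace, so the joint parity is deterministic---which the paper compresses into ``it then follows.''
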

\begin{proof}
Let $\{\mathbf{p}_j\}_{j=1}^{\hat{p}_f}$ index the linear functions appearing in the periodic Fourier decomposition of $f$ with periodic Fourier sparsity $\hat{p}_f$. Let $X(\boldsymbol{\theta}) = \bigotimes_{j=1}^N X_j(\theta_j)$ where $X(\theta) = \cos(\theta) X + \sin(\theta) Y$ and $\boldsymbol{\theta} \in [0,2\pi)^N$.  One can readily verify that $\langle \mathrm{GHZ}_N | X(\boldsymbol{\theta}) | \mathrm{GHZ}_N\rangle = \cos(\sum_{j=1}^N\theta_j)$. It then follows that if $N = \hat{p}_f$ and the angles $\{\theta_j = (\mathbf{p}_j\cdot\mathbf{x})\phi_j\}_{j=1}^{\hat{p}_f}$ form a periodic Fourier decomposition as in Eq.~(\ref{eq:periodic_fourier}), then the above measurements performed on the state $|\mathrm{GHZ}_{\hat{p}_f}\rangle$ constitute a $l2$-MBQC that computes the function $f$ with $P_{j,k}=(\mathbf{p}_j)_k$, $A=0$, $\mathbf{o}=\mathbf{1}$, and $c=f(\mathbf{0})$.
\end{proof}
\noindent It follows from the above proof that this scheme has resource costs $L_Q = \hat{p}_f$, $L_C = \hat{p}_f$, $T_Q = T_{\mathrm{GHZ}}$, and $T_C = 1$.  Here $T_{\mathrm{GHZ}_{L_Q}}$ is the quantum circuit depth required to prepare the GHZ state on $L_Q$ many qubits.  With the assistance of a mod-2 linear classical side-processor this can be done in constant time \cite{briegel2001persistent}.

Functions with periodic Fourier sparsity $\hat{p}_f=\mathrm{poly}(n)$ can be computed efficiently via nonadaptive $l2$-MBQC \cite{hoban2011non-adaptive}.  An example of such a function is the pairwise-AND function which has $\hat{p}_f = n+1$.  This $l2$-MBQC protocol is equivalent to the so-called Mermin inequalities \cite{mermin1990extreme}.

In the worst-case, the periodic Fourier sparsity of a function can be $\hat{p}_f = 2^n-1$.  The corresponding nonadaptive $l2$-MBQC protocol thus requires exponentially many qubits prepared in a GHZ state.  The following theorem characterizes a family of Boolean functions with this worst-case resource scaling.
\begin{restatable}{lemma}{pfslem}{\cite{mori2018periodic}}
\label{lem:pfslem_label}
Any function $f:\mathbb{F}_2^n\rightarrow\mathbb{F}_2$ with maximal degree ANF (i.e. any function containing the monomial $x_1\cdots x_n$ in its ANF) has periodic Fourier sparsity $\hat{p}_f=2^n-1$.
\end{restatable}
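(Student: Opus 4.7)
The plan is to prove something slightly stronger: for any valid periodic Fourier decomposition $\phi\in\Phi_f$, every one of the $2^n-1$ coefficients $\phi_\mathbf{s}$ indexed by $\mathbf{s}\neq\mathbf{0}$ must be strictly nonzero once $f$ contains the monomial $x_1\cdots x_n$ in its ANF. Since trivially $\hat p_f\leq 2^n-1$, this yields the claimed equality. Throughout I set $\tilde f(\mathbf{x}):=f(\mathbf{x})\oplus f(\mathbf{0})$, so that Eq.~(\ref{eq:periodic_fourier}) is equivalent to $E(\mathbf{x})\equiv\pi\tilde f(\mathbf{x})\pmod{2\pi}$ with $E(\mathbf{x}):=\sum_\mathbf{s}(\mathbf{s}\cdot\mathbf{x})\phi_\mathbf{s}$.

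The first step is to extract a dyadic-rationality constraint on the superset sum $\sigma(U):=\sum_{\mathrm{supp}(\mathbf{s})\supseteq U}\phi_\mathbf{s}$ for each nonempty $U\subseteq[n]$. I would evaluate $E$ at $\mathbf{x}=\mathbf{1}_T$ for every $T\subseteq U$ and form the signed combination $\sum_{T\subseteq U}(-1)^{|U|-|T|}$. On the left, decomposing $T=T_1\sqcup T_2$ with $T_1\subseteq U\cap\mathrm{supp}(\mathbf{s})$ and $T_2\subseteq U\setminus\mathrm{supp}(\mathbf{s})$ causes the contribution from any $\mathbf{s}$ whose support does not contain $U$ to vanish, and collapses the remainder to $(-1)^{|U|+1}2^{|U|-1}\sigma(U)$. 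On the right, the same signed combination applied to $\tilde f$ is congruent modulo $2$ to the ANF coefficient $a_U$, via integer Möbius inversion on the Boolean lattice. This yields $\sigma(U)=\pi r_U/2^{|U|-1}$ for some integer $r_U$ with $r_U\equiv a_U\pmod 2$.

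The second step is to invert back to $\phi_{\mathbf{1}_U}$ via the Möbius identity $\phi_{\mathbf{1}_U}=\sum_{V\supseteq U}(-1)^{|V|-|U|}\sigma(V)$ on the subset lattice and rescale by $2^{n-1}/\pi$:
\begin{equation*}
\frac{2^{n-1}}{\pi}\phi_{\mathbf{1}_U}=(-1)^{n-|U|}r_{[n]}+\sum_{V\supseteq U,\,V\neq[n]}(-1)^{|V|-|U|}r_V\,2^{n-|V|}.
\end{equation*}
Every term in the sum on the right is even, because $n-|V|\geq 1$ whenever $V\neq[n]$, while the hypothesis $a_{[n]}=1$ forces $r_{[n]}$ to be odd. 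Consequently $2^{n-1}\phi_{\mathbf{1}_U}/\pi$ is an odd integer, hence nonzero, so $\phi_{\mathbf{1}_U}\neq 0$ for every nonempty $U\subseteq[n]$, which is exactly the claim.

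The main care will be in rigorously justifying the parity congruence $r_U\equiv a_U\pmod 2$: the evaluations $\tilde f(\mathbf{1}_T)\in\{0,1\}$ are only the mod-$2$ reductions of the integer-valued quantity $\sum_{S\subseteq T}a_S$, and each $E(\mathbf{1}_T)$ is only specified modulo $2\pi$. So one must lift $\tilde f$ to its integer ANF and carefully track how unknown multiples of $2$ and of $2\pi$ propagate through both alternating sums before reading off the congruence for $r_U$. Once that bookkeeping is in place, the remainder is routine subset-counting on the Boolean lattice.
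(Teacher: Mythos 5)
Your proof is correct, and it follows the same overall strategy as the paper's: relate the periodic-Fourier angles to the ANF coefficients by inverting a linear system over the Boolean lattice, observe that every angle is a dyadic rational multiple of $\pi$, and conclude from $a_{[n]}=1$ that $2^{n-1}\phi_{\mathbf{s}}/\pi$ is an odd integer, hence nonzero. Where you genuinely differ is in how the inversion is organized. The paper writes a single $(2^n-1)\times(2^n-1)$ system with matrix $M_{\mathbf{y},\mathbf{p}}=2^{|\mathbf{y}|-1}\chi_{\mathbf{p}\cap\mathbf{y}}$ and then posits and verifies an explicit closed form for $M^{-1}$ (Eq.~(\ref{eq:sierpinski_inverse})), which costs a separate appendix of three-case subset counting. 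You instead factor the inversion through the superset sums $\sigma(U)=\sum_{\mathrm{supp}(\mathbf{s})\supseteq U}\phi_{\mathbf{s}}$: the alternating sum of the evaluations $E(\mathbf{1}_T)$ over $T\subseteq U$ isolates $\sigma(U)$ directly (the sum over $T_2\subseteq U\setminus\mathrm{supp}(\mathbf{s})$ kills every $\mathbf{s}$ whose support does not contain $U$, and the surviving $T_1$-sum evaluates to $(-1)^{|U|+1}2^{|U|-1}$), after which a second, textbook M\"obius inversion recovers $\phi_{\mathbf{1}_U}$. This buys a proof with no ad hoc inverse matrix to guess and verify. The bookkeeping you flag as delicate does go through: for nonempty $U$ the signed integer sum $\sum_{T\subseteq U}(-1)^{|U|-|T|}\tilde f(\mathbf{1}_T)$ is congruent mod $2$ to $a_U$ (the constant shift by $f(\mathbf{0})$ contributes $2^{|U|}f(\mathbf{0})\equiv 0$), and the unknown multiples of $2\pi$ in each $E(\mathbf{1}_T)$ only shift $r_U$ by even integers, so $r_{[n]}$ stays odd. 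A final remark: your conclusion actually establishes the slightly stronger fact, implicit in the paper's setup, that $\phi_{\mathbf{1}_U}\notin\pi\mathbb{Z}$ for $n\geq 2$, which is the correct notion of a non-removable term for the $l2$-MBQC interpretation, since a coefficient in $\pi\mathbb{Z}$ could be absorbed into the mod-2 linear post-processing.
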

We give a new proof of Lemma~\ref{lem:pfslem_label} in Appendix~\ref{sec:proof_of_ntic_lemma} by explicitly determining the structure of all possible periodic Fourier decompositions of a given function.  From it, we immediately obtain the following fact.
\begin{theorem}{\cite{mori2018periodic}}
\label{thm:nmbqc_resource_cost}
Any nonadaptive $l2$-MBQC of a Boolean function $f:\mathbb{F}_2^n\rightarrow\mathbb{F}_2$ with degree $n$ ANF requires a $(2^n-1)$-qubit GHZ state.
\end{theorem}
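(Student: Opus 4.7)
The plan is to establish a converse to Theorem~\ref{thm:NMBQC_universal}: any nonadaptive $l2$-MBQC on an $N$-qubit GHZ state that deterministically computes $f$ must satisfy $N \geq \hat{p}_f$. Combined with Lemma~\ref{lem:pfslem_label}, which gives $\hat{p}_f = 2^n - 1$ for any Boolean function with degree-$n$ ANF, this forces $N \geq 2^n-1$, matching the upper bound in Theorem~\ref{thm:NMBQC_universal} and pinning the resource down to a $(2^n-1)$-qubit GHZ state.

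First I would reduce the binary observables $O_k(s_k)$ of Definition~\ref{def:l2_MBQC} to the canonical form $O_k(s_k) = \cos(\phi_k s_k + \alpha_k)X + \sin(\phi_k s_k + \alpha_k)Y$. This uses the gauge freedom of the GHZ state (invariant up to a global phase under tensor products of $Z$ rotations whose angles sum to zero) to strip off $Z$-components, together with the determinism requirement, which rules out residual off-plane components because they would pull in the $|0\rangle\langle 0|$- and $|1\rangle\langle 1|$-diagonal pieces of $|\mathrm{GHZ}_N\rangle\langle\mathrm{GHZ}_N|$ and these cannot be made to combine with the lone off-diagonal $|0\rangle\langle 1|^{\otimes N}$ term to give a deterministic $\pm1$ for every input $\mathbf{x}$. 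Substituting $s_k = P_k \cdot \mathbf{x}$ (with $P_k$ the $k$-th row of $P$) into the identity $\langle \mathrm{GHZ}_N | \bigotimes_k X(\theta_k) | \mathrm{GHZ}_N\rangle = \cos(\sum_k \theta_k)$ established inside the proof of Theorem~\ref{thm:NMBQC_universal}, the deterministic output condition becomes
\begin{align*}
(-1)^{f(\mathbf{x}) \oplus c} = \cos\!\left( \sum_{k: o_k=1} \phi_k(P_k \cdot \mathbf{x}) + C \right),
\end{align*}
with $C = \sum_{k: o_k = 1}\alpha_k$. Evaluating at $\mathbf{x}=\mathbf{0}$ fixes $C$ modulo $\pi$ so that it can be absorbed into the overall constant offset, and what remains is a periodic Fourier decomposition of $f$ in the sense of Eq.~(\ref{eq:periodic_fourier}) with at most $N$ nonzero angles (after collecting terms with coincident linear forms $P_k \cdot \mathbf{x}$). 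This proves $\hat{p}_f \leq N$, and Lemma~\ref{lem:pfslem_label} then closes the argument.

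The main obstacle I anticipate is the reduction to $X$-$Y$ plane observables: one has to rule out configurations in which off-plane components on different qubits conspire through the GHZ correlations to produce a deterministic output via unexpected cancellations. The cleanest route is probably to pass to the stabilizer description of $|\mathrm{GHZ}_N\rangle$, expanding any product observable in the stabilizer basis and using the deterministic-parity condition over all $2^n$ inputs to force the non-$X$-$Y$ pieces to vanish term-by-term. Once that reduction is in place, the rest of the argument is algebraic: the periodic Fourier decomposition extracted above, combined with the degree-$n$ ANF hypothesis fed into Lemma~\ref{lem:pfslem_label}, immediately yields the quoted $(2^n-1)$-qubit lower bound.
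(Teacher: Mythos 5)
Your top-level strategy coincides with the paper's: the theorem is obtained by feeding the degree-$n$ ANF hypothesis into Lemma~\ref{lem:pfslem_label} (so $\hat{p}_f=2^n-1$) and combining with the fact that a nonadaptive $l2$-MBQC on an $N$-qubit GHZ state forces $N\geq\hat{p}_f$. The difference is that the paper does not re-derive that second ingredient --- it imports the necessity direction from Refs.~\cite{hoban2011non-adaptive,mori2018periodic} and declares the theorem an immediate corollary of Lemma~\ref{lem:pfslem_label}, whose new proof (the M\"obius-inversion/Sierpinski-matrix computation in the appendix) is where the paper actually does work --- whereas you attempt to supply the converse of Theorem~\ref{thm:NMBQC_universal} yourself. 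Your sketch of that converse is sound in outline, and the step you flag as the main obstacle closes more easily than you suggest: writing $O_k(s_k)=b_kX+c_kY+d_kZ$ with $b_k^2+c_k^2+d_k^2=1$, all cross terms between $Z$-components and equatorial components vanish in the GHZ expectation, leaving $\mathrm{Re}\prod_k(b_k+ic_k)+\tfrac{1}{2}\bigl(1+(-1)^N\bigr)\prod_k d_k$, whose modulus is at most $\prod_k\sqrt{b_k^2+c_k^2}+\prod_k|d_k|\leq 1$ by Cauchy--Schwarz for $N\geq 2$; determinism with nonconstant (indeed nonlinear) $f$ then forces $d_k=0$ for all $k$ and $\mathbf{o}=\mathbf{1}$, no stabilizer expansion needed. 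With that step made explicit your argument is complete and is, if anything, more self-contained than the paper's.
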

\noindent It then follows from Lemma~\ref{lem:mod_p_anf} that the mod-$p$ functions constitute a family of functions that require $L_Q= 2^n-1$ to compute via nonadaptive $l2$-MBQC.  As we will see in the next section, this difficulty can be circumvented in the adaptive regime.

\subsection{Adaptive $l2$-MBQC and OR-reductions}
\label{sec:OR_Red_MBQC}

There is a large body of literature concerning constant-depth quantum circuits with so-called unbounded fan-out gates (See Ref.~\cite{anand2022power} for a nice overview).  The quantum fan-out gate is defined as the $n$-qubit unitary $F$ satisfying $F|x_1,x_2,\ldots,x_n\rangle = |x_1,x_1\oplus x_2,\ldots, x_1\oplus x_n\rangle$ in the computational basis.  It was observed in Ref.~\cite{browne2010computational} that the gate $F$ can be implemented in a measurement-based manner by a constant-depth circuit with one round of measurement, mod-2 linear classical side-processing, and feedback (i.e., $T_Q + T_C = O(1)$).  This scheme follows from the standard measurement-based reduction of a 1D cluster state to a GHZ state \cite{briegel2001persistent}, which is obtained by measuring every other qubit in the Pauli-$X$ basis.

Later, Ref.~\cite{takahashi2016collapse} showed that constant-depth quantum circuits with fan-out can compute any symmetric Boolean function.  The underlying mechanism of their proof is the so-called OR-reduction introduced in Ref.~\cite{hoyer2005quantum}, which uses a constant-depth quantum circuit with fan-out to reduce the problem of computing $\mathsf{OR}_n$ to that of computing $\mathsf{OR}_{\lceil\log_2(n)\rceil}$.  The latter can be efficiently computed using Thm.~\ref{thm:NMBQC_universal} and Eq.~\ref{eq:or_n_pfd}.  Ref.~\cite{mori2018periodic} rephrased this as a two round adaptive $l2$-MBQC using $O(\log(n))$ many $O(n)$-qubit GHZ states.  For completeness, we review this scheme in detail and rephrase it as an adaptive $l2$-MBQC performed on a 1D cluster state in Appendix~\ref{sec:previous_work_as_cluster_mbqc}.

While the work of Refs.~\cite{takahashi2016collapse, mori2018periodic} give efficient adaptive $l2$-MBQC algorithms that compute $\mathsf{Mod}_{p,j}$ with $L_C + L_Q = O(n^2\log(n))$ and $T_Q + T_C = O(1)$, their construction is not optimal in either qubit count or overall space-time volume of the algorithm.  An older result by Moore~\cite{moore1999quantum} constructed constant-depth quantum circuits with fan-out that reduce the computation of $\mathsf{Mod}_{p,0}$ to the problem of computing $\mathsf{OR}_{\lceil \log(p) \rceil}$.  Unlike the previous reduction, implementation of this circuit via adaptive $l2$-MBQC requires a 2D cluster state on a $\lceil \log_2(p) \rceil \times O(n)$ grid due to subtleties of implementing discrete Fourier transforms modulo a prime with qubits.  We give explicit details for this implementation in Appendix~\ref{sec:previous_work_as_cluster_mbqc} and show the overall resource costs $L_C+L_Q = O(n\log(p))$ and $T_C+T_Q = O(p^2\log^3(p))$.  On the other hand, in Sec.~\ref{sec:adaptive_n_n} we will give an alternative $l2$-MBQC algorithm using 1D cluster states with improved scaling in the overall space-time volume.  This construction is inspired by recent progress in engineering 1QC that compute a target symmetric Boolean function, which we now review.

\subsection{Efficient 1QCs for symmetric Boolean functions}
\label{sec:1QC_review}

We first mention that all the results on nonadaptive $l2$-MBQC with a GHZ resource state in Theorem~\ref{thm:NMBQC_universal} and Theorem~\ref{thm:nmbqc_resource_cost} also apply to commutative $l2$-1QCs.  The two models are closely related, which can readily been seen from old ideas in quantum circuit parallelization~\cite{moore1999quantum} or by viewing the state $|\mathrm{GHZ}_N\rangle$ as a computational tensor network \cite{gross2007novel}. In particular, let $|m(\theta)\rangle$ denote the eigenstate of $X(\theta)$ with eigenvalue $(-1)^{m}$ and let $|\mathbf{m}(\boldsymbol{\theta})\rangle = \bigotimes_{j=1}^N |m_j(\theta_j)\rangle$. 
\begin{remark}
\label{rmk:nmbqc_c1qc}
A single round of single-qubit nonadaptive measurements in the eigenbases of observables $\{X_j(\theta_j)\}_{j=1}^N$ on the state $|\mathrm{GHZ}_N\rangle$ simulates a single-qubit quantum circuit consisting of commuting rotations $\{R_X(\theta_j)\}_{j=1}^N$.  In particular,
\begin{align}
\label{eq:GHZ_1QC_Correspond}
     |\langle \mathbf{m}(\boldsymbol{\theta})|\mathrm{GHZ}_N\rangle|^2 &\propto |\langle \bigoplus_{j=1}^N m_j | \prod_{j=1}^N R_X( \theta_j) |0\rangle |^2.
\end{align}
\end{remark}
The above remark follows from old ideas regarding the parallelization of commuting gates \cite{moore1998some}; however, we show this in Appendix~\ref{sec:MBQC-1QC_correspondence} using a tensor network representation of the GHZ state.
Eq.~\ref{eq:GHZ_1QC_Correspond} says that if the angles $\mathbf{\theta}$ are chosen so that we get a nonadaptive $l2$-MBQC that computes a Boolean function $f$, then the angles can also be used to construct a commutative $l2$-1QC that computes the same function.  It follows that commutative $l2$-1QC can be used to compute any Boolean function $f:\mathbb{F}_2^n\rightarrow\mathbb{F}_2$ with resource costs $L_C = T_Q = \hat{p}_f = \Omega(2^n)$.

Meanwhile, noncommuting 1QCs (i.e., without the classical side-processor), or rather classically conditioned single-qubit quantum circuits that compute a particular Boolean function, have a long history of study.  Abaylev~\cite{ablayev2005computational} and later Cosentino~\cite{cosentino2013dequantizing} showed that one can use either the $A_5$ subgroup of $\mathrm{SU}(2)$ or the subgroup generated by $\{H,T\}$
in combination with Barrington's theorem \cite{barrington1988finite, barrington1989bounded} to compute any symmetric Boolean function with resource costs $L_C=n$ and $T_Q=\mathrm{poly}(n)$. However, these methods work by dissecting a classical circuit of $\mathsf{NOT}$ and 2-bit $\mathsf{AND}$ gates and effectively achieve the same resource scaling as a related classical model, so-called width-5 classical branching programs.

More recently, Maslov \textit{et al.}~\cite{maslov2020quantum} showed how one can make better use of the single-qubit state space with an approach based on quantum signal processing \cite{haah2019product}. In particular, they show the following.
\begin{theorem}[\cite{maslov2020quantum}]
\label{thm:sym_boole_qsp_1qc}
Any symmetric Boolean function $f:\mathbb{F}_2^n\rightarrow\mathbb{F}_2$ can be computed by a 1QC with resource costs $L_C=n$ and $T_Q = 4n^2 + 5n + 2$.
\end{theorem}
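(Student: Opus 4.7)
The plan is to compute $f$ using the quantum signal processing (QSP) framework on a single qubit. Because $f$ is symmetric, I can write $f(\mathbf{x})=g(|\mathbf{x}|)$ for some $g:\{0,\ldots,n\}\to\mathbb{F}_2$, so the 1QC only needs to resolve the Hamming weight $k:=|\mathbf{x}|$ deterministically at its output.

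First I would choose the signal rotation $W=R_X(2\theta_0)$ with $\theta_0=\pi/(2(n+1))$, so that the $n+1$ points $x_k:=\cos(k\cdot 2\theta_0)$ are distinct in $[-1,1]$. The 1QC is organized as $L:=4n+1$ ``QSP layers'' separated by fixed $R_Z(\phi_\ell)$ phase rotations, where each layer consists of $n$ classically-conditioned signal rotations --- one per input bit, with bit $x_j$ controlling the rotation $W^{x_j}$ --- executed consecutively with no intervening phases. Because all the $W^{x_j}$ share the same rotation axis and thus commute, each layer collectively implements $W^{|\mathbf{x}|}$ regardless of the ordering of the bits, and so the overall circuit output depends only on $k$ as required. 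Counting gives $Ln$ signal rotations plus $L+1$ phase rotations, for a total of $T_Q=Ln+L+1=4n^2+5n+2$; the classical memory cost $L_C=n$ is immediate, since only the $n$ input bits are used for conditioning (each reused $L$ times across the signal slots).

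The remaining step is to choose the phases $\{\phi_\ell\}$ so that the resulting QSP unitary $U(k)$ deterministically outputs $g(k)$ when the final qubit is measured in the computational basis. By the QSP existence theorem (Haah's product-formula construction), this reduces to exhibiting a real polynomial $P$ of definite parity, degree at most $L$, unit-bounded modulus $|P(x)|\le 1$ on $[-1,1]$, and satisfying $P(x_k)=g(k)\in\{0,1\}$ at each node $x_k$. The main obstacle is showing that degree $L=4n+1$ actually suffices for this interpolation while respecting the sup-norm bound: a naive Lagrange interpolant has degree only $n$ but generically overshoots $[-1,1]$ between nodes, and the extra degree budget of $3n+1$ would be spent multiplying or composing the interpolant with a low-degree shaping polynomial that pulls off-node values back into $[-1,1]$ without disturbing $P(x_k)\in\{0,1\}$. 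Once such a bounded interpolant is in hand, Haah's algorithm produces the explicit phase sequence $\{\phi_\ell\}$, completing the construction.
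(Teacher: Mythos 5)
Your circuit architecture and bookkeeping coincide exactly with the paper's: $L=4n+1$ QSP layers with signal angle $\pi/(n+1)$ per input bit, each layer realized as $n$ commuting bit-conditioned $X$-rotations, adjacent unconditioned $Z$-rotations merged, giving $n(4n+1)+4n+2=4n^2+5n+2$ gates and $L_C=n$. The gap is in the one nontrivial step, which you yourself flag as ``the main obstacle'' and then do not resolve: the existence of a parity-definite polynomial $P$ of degree at most $4n+1$ with $|P|\le 1$ on $[-1,1]$ and $P(x_k)=g(k)\in\{0,1\}$ at the $n+1$ nodes. Your proposed repair---multiplying or composing the degree-$n$ Lagrange interpolant with a low-degree ``shaping polynomial''---is not a construction, and at least the composition variant provably fails within the degree budget: the nodes $x_k=\cos(k\pi/(n+1))$ are Chebyshev--Lobatto points, so the Lagrange interpolant of worst-case $\{0,1\}$ data has sup norm $\Theta(\log n)$ on $[-1,1]$, and by Markov's inequality any polynomial $h$ bounded by $1$ on $[-C\log n,\,C\log n]$ with $h(0)=0$ and $h(1)=1$ must have degree $\Omega(\sqrt{\log n})$, so the composed polynomial has degree $\omega(n)$, exceeding $4n+1$. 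Since controlling the interpolant between the nodes is precisely where the factor of $4$ in $L=4n+1$ comes from, leaving this unargued leaves the theorem unproved.

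The paper (following Maslov \emph{et al.}) gets around this by working in the Laurent-polynomial form of QSP rather than with ordinary polynomials in $\cos\phi$: writing $U=AI+iBX+iCY+iDZ$ with $A(\phi)=\sum_{j\ \mathrm{odd}}a_j\cos(j\phi/2)$ and $B(\phi)=\sum_{j\ \mathrm{odd}}b_j\sin(j\phi/2)$ of degree at most $4n+1$ (hence $4n+2$ free coefficients), it imposes at each node $\phi_k=\pi k/(n+1)$ both the value conditions $A=1-f$, $B=f$ and the derivative conditions $A'=B'=0$. This produces a square $(4n+2)\times(4n+2)$ linear system---which is where the degree $4n+1$ actually comes from---and the derivative conditions make every node a critical point of $A^2+B^2$, which is what allows the completion to polynomials $C,D$ satisfying condition (QSP1), after which Haah's algorithm extracts the angles $\{\xi_j\}$. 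To fix your argument you should either carry out an explicit bounded-interpolant construction (and verify the parity and degree constraints) or adopt this value-plus-derivative linear system; as written, the existence of the phase sequence is asserted rather than proved.
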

\begin{proof}
We simply sketch the ideas here and elaborate on the details in Appendix~\ref{sec:Previous_1QC_Work}. Using the so-called quantum signal processing decomposition, it is shown that given a target symmetric Boolean function $f:\mathbb{F}_2^n\rightarrow\mathbb{F}_2$ there exists a collection of $4n+1$ angles $\{\xi_j\in[0,2\pi)\}_{j=1}^{4n+1}$ such that
\begin{align}
    U(\mathbf{x}) = \prod_{j=1}^{4n+1} \left( R_Z(\xi_{j}) R_X\left(\frac{|\mathbf{x}| \pi}{n+1}\right) R_Z(\xi_{j})^\dagger \right). \label{eq:QSP_sym_Boole}
\end{align}
satisfies $U(\mathbf{x})=(iX)^{f(\mathbf{x})}$.  In particular, this implies that $|\langle y | U(\mathbf{x}) |0\rangle|^2 = \delta_{f(\mathbf{x})}^{y}$.  Thus $U(\mathbf{x})$ constitutes a 1QC of the function $f$.  Furthermore, notice that each rotation $R_X\left(\frac{|\mathbf{x}| \pi}{n+1}\right)$ is actually a sequence of $n$ many commuting rotations conditioned on each bit from the string $\mathbf{x}\in\mathbb{F}_2^n$.  Moreover, by combining sequential unconditioned $Z$-rotations, $U(\mathbf{x})$ can be written as a sequence of $n(4n+1) + 4n + 2$ many rotations, giving the aforementioned resource costs.
\end{proof}
\noindent It was mentioned in Ref.~\cite{maslov2020quantum} that although this construction works it is not always optimal since $f$ may have additional symmetries.  In Sec.~\ref{sec:adaptive_n_n} we will use this fact to construct 1QCs that compute $\mathsf{Mod}_{p,j}$ with resource cost $T_Q = (2p-1)n + 2p$.

\section{Adaptive $l2$-MBQC of mod-$p$ functions}
\label{sec:adaptive_l2MBQC}

In this section we construct adaptive $l2$-MBQC protocols that compute the mod-$p$ functions in constant-time using a linear size 1D cluster state.  We first review basic properties of adaptive $l2$-MBQC with the 1D cluster state in Sec.~\ref{sec:1D_cluster_MBQC}.  We then show in Theorem~\ref{thm:l2-MBQC_QSP_1DC} how a naive implementation of the 1QCs from Theorem~\ref{thm:sym_boole_qsp_1qc} gives efficient $l2$-MBQC that compute $\mathsf{Mod}_{p,j}$ in linear time.  We then improve on this construction.   In Lemma~\ref{lem:nc1QCmod_3} and Theorem~\ref{thm:aMQC_mod3} we give a simplified protocol to compute $\mathsf{Mod}_{3,0}:\mathbb{F}_2^n\rightarrow\mathbb{F}_2$.  This adaptive $l2$-MBQC scheme is depicted as a hybrid quantum-classical circuit in Fig.~\ref{fig:mod_3_circuit}.  In Lemma~\ref{lem:nc1QCmod_p} and Theorem~\ref{thm:mod_p_mbqc} we give a general construction based on quantum signal processing that computes $\mathsf{Mod}_{p,j}:\mathbb{F}_2^n\rightarrow\mathbb{F}_2$ for any odd number $p$ and $j=0,\ldots,p-1$.  For clarity, the resource costs for our construction are compared in Tab.~\ref{tab:resource_cost_table} with those of the previous constructions reviewed in Sec.~\ref{sec:prior_work}.

\begin{table*}[]
    \centering
    
    \begin{tabular}{|c|c|c|c|c|c|}
    \hline
	\multirow{2}{4em}{Algorithm} & \multicolumn{4}{c|}{$l2$-MBQC resource cost} & \multirow{2}{4em}{Complexity} \\
	\cline{2-5}
        &  $L_Q$ & $T_Q$ & $L_C$ & $T_C$ &  \\
      \hline
      \hline
       Periodic Fourier \cite{hoban2011non-adaptive}  & $2^n-1$ & $3$ & $2^n-1$ & $2$  & $\mathsf{NCHVM}\subset \mathsf{QNC}^0[2]$ \\
       Barrington's Thm. \cite{ablayev2005computational, cosentino2013dequantizing}& $O(\mathrm{poly}(n))$ & $3$& $O(\mathrm{poly}(n))$& $O(\mathrm{poly}(n))$ & $\mathsf{NCHVM}\subset \mathsf{QNC}^0[2]$  \\
       OR-reduction \cite{takahashi2016collapse, mori2018periodic} & $\Theta(n^2\log(n))$ & $3$ & $\Theta(n\log(n))$ & $3$ & $\mathsf{ACC}^0\subseteq \mathsf{QNC}^0[2]$  \\
       Quantum Signal Processing \cite{maslov2020quantum} & $\Theta(n^2)$ & $3$ & $\Theta(n)$ & $\Theta(n)$ & $\mathsf{NCHVM}\subset \mathsf{QNC}^0[2]$  \\
       Moore's counting circuit \cite{moore1999quantum} & $O(n\log(p) + p^2\log^3(p))$ & $5$ & $\Theta(n\log(p))$ & $O(p^2\log^3(p))$ & $\mathsf{QNC}^0[2]\not\subset \mathsf{AC}^0[q]$  \\
       This work & $\Theta(pn)$ & $3$ & $n+2$ & $\Theta(p)$ & $\mathsf{QNC}^0[2] \not\subset \mathsf{AC}^0[q]$ \\
       \hline
    \end{tabular}
    
    \caption{Summary of the resource costs required for previous algorithms that compute the mod-$p$ functions.  In the right-most column labeled ``Complexity" we list relevant complexity theoretic separations that each quantum algorithm is capable of demonstrating as discussed in Sec.~\ref{sec:classical_comparison}.  $\mathsf{NCHVM}$ denotes the family of noncontextual hidden variable models, which corresponds to the power of the mod-2 linear classical side-processor by itself.  In the separation $\mathsf{QNC}^0[2]\not\subset\mathsf{AC}^0[q]$, $q$ can be taken to be any prime number less than $p$.}
    \label{tab:resource_cost_table}
\end{table*}

\subsection{Adaptive $l2$-MBQC with the 1D cluster state}
\label{sec:1D_cluster_MBQC}

Before we get into the main result, we first review a few facts about the 1D cluster state and show how it can simulate arbitrary single-qubit quantum circuits via adaptive $l2$-MBQC.

The $N$-qubit 1D cluster state is defined by the following quantum circuit
\begin{align}
    |\mathrm{1DC}_N\rangle = \left(\prod_{j=1}^{N-1} \mathrm{C}Z_{j,j+1} \right) H^{\otimes N} |0\rangle^{\otimes N}.
\end{align}
This state is prepared by a depth-3 circuit of Hadamard and nearest-neighbor $\mathrm{C}Z$-gates on a 1D chain.  
This state is sometimes called a ``one-qubit universal" resource for MBQC since the output of any one-qubit quantum circuit can computed via a sequence of adaptive single-qubit measurements.  Namely, consider a $(2N+1)$-qubit 1D cluster state and suppose each qubit $j\in[2N+1]$ is measured in the eigenbasis of $X(\theta_j)$ for some $\theta_j\in[0,2\pi)$. The probability to observe the outcome $\mathbf{m}\in\mathbb{F}_2^{2N+1}$ satisfies 
\begin{align}
    &|\langle \mathbf{m}(\boldsymbol{\theta})| \mathrm{1DC}_{2N+1}\rangle|^2 \propto \nonumber\\ 
    &|\langle \bigoplus_{j=0}^N m_{2j+1} | R_X(\tilde{\theta}_{2N+1}) \left[\prod_{j=1}^N R_Z(\tilde{\theta}_{2j})R_X(\tilde{\theta}_{2j-1}) \right] |0\rangle|^2,
    \label{eq:adaptive_MBQC_su2}
\end{align}
where $\boldsymbol{\theta}\in[0,2\pi)^{2N+1}$, $\tilde{\theta}_j = (-1)^{\sum_{k=1}^n A_{jk}m_k}\theta_{j}$, and
\begin{align}
\label{eq:1DC_adaptation_matrix}
    A_{j,k} = 
    \begin{cases}
    1 &\textrm{if }k<j\textrm{ and }k\neq j\textrm{ mod }2 \\
    0 & \textrm{otherwise}
    \end{cases}.
\end{align}
For completeness, we prove this fact in Appendix~\ref{sec:MBQC-1QC_correspondence}.
One can simulate the output of a one-qubit circuit consisting of alternating $X$ and $Z$ rotations if each qubit $j$ is measured adaptively.  In particular, if we wait to measure each qubit $j$ only after all qubits before it (i.e., with label $k<j$) are measured, then we can measure in the eigenbasis of $X(\tilde{\theta}_j)$ instead of $X(\theta_j)$. This allows us to recover any desired one-qubit quantum circuit via the Euler angle decomposition.  The output of the one-qubit circuit is then the parity of all the measurement outcomes for qubits with an odd label.

Adaptive measurement is not always necessary.  Namely, if $\theta_j = n\pi$ for some $n\in\mathbb{Z}$, the factor $(-1)^{\sum_{k=1}^n A_{jk}m_k}$ simply adds a global phase to the circuit, which doesn't influence Eq.~(\ref{eq:adaptive_MBQC_su2}).  This is partly how we will obtain adaptive $l2$-MBQC protocols with $T_C=\Theta(1)$ in our main result. 

\subsection{Implementing QSP circuits with the 1D cluster state}

The alternating pattern of $X$ and $Z$-rotations in Eq.~(\ref{eq:adaptive_MBQC_su2}) suggests that the 1D cluster state is well suited for implementing single-qubit circuits with the QSP decomposition structure, like in Eq.~(\ref{eq:QSP_sym_Boole}).  Since the unitary in Eq.~(\ref{eq:QSP_sym_Boole}) consists of $8n+3$ blocks of mutually commuting unitaries, it can be implemented via adaptive $l2$-MBQC with a 1D cluster state of $O(n^2)$ qubits in time $T_C = O(n)$.  In particular, we have the following theorem.
\begin{theorem}
\label{thm:l2-MBQC_QSP_1DC}
Any symmetric Boolean function $f:\mathbb{F}_2^n\rightarrow \mathbb{F}_2$ can be computed via an adaptive $l2$-MBQC on a 1D cluster state with resource costs $L_C=n$, $L_Q = 8n^2 + 10n +1$, $T_C = 8n + 2$, and $T_Q=3$.
\end{theorem}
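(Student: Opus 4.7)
The plan is to reduce the result directly to Theorem~\ref{thm:sym_boole_qsp_1qc} by transcribing the QSP-based 1QC for $f$ into a 1D cluster state MBQC via the correspondence in Eq.~(\ref{eq:adaptive_MBQC_su2}). Recall that the QSP circuit is $U(\mathbf{x}) = \prod_{j=1}^{4n+1} R_Z(\xi_j) R_X(|\mathbf{x}|\pi/(n+1)) R_Z(\xi_j)^\dagger$, and since $|\mathbf{x}|=\sum_k x_k$ each $R_X$ block decomposes as a product of $n$ commuting rotations $R_X(x_k\pi/(n+1))$, one per input bit. Thus the full circuit becomes a sequence of alternating $R_X$-blocks of length $n$ and single $R_Z$ rotations, with $4n^2+5n+2$ elementary gates in total.

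Next I would transcribe this sequence onto a 1D cluster state using the standard dictionary: each $R_X$ rotation is assigned to an odd-indexed cluster qubit and each $R_Z$ rotation to an even-indexed cluster qubit. To fit the strict $R_X R_Z R_X R_Z \cdots R_X$ alternation required by Eq.~(\ref{eq:adaptive_MBQC_su2}), I would pad between consecutive $R_X$ rotations inside a block with trivial $R_Z(0)$ measurements on the intermediate even qubits, and pad between consecutive $R_Z$ rotations (where $R_Z(\xi_j)^\dagger$ is adjacent to $R_Z(\xi_{j+1})$) with trivial $R_X(0)$ measurements on the intermediate odd qubits; analogous padding handles the two boundaries of the cluster. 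The preprocessing matrix $P\in\mathbb{F}_2^{L_Q\times n}$ sends the $k$-th classical bit to every odd qubit that implements $R_X(x_k\pi/(n+1))$, while all pad qubits and all $R_Z$ qubits receive the zero row. The adaptation matrix is the lower-triangular $A$ of Eq.~(\ref{eq:1DC_adaptation_matrix}), and the output is the XOR of the odd-qubit outcomes. Correctness then follows immediately from combining Eq.~(\ref{eq:adaptive_MBQC_su2}) with the QSP identity $|\langle y|U(\mathbf{x})|0\rangle|^2=\delta_{f(\mathbf{x})}^y$.

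Counting the resources is then largely bookkeeping. The 1D cluster state is prepared in $T_Q=3$ depth as in Sec.~\ref{sec:1D_cluster_MBQC}; only the $n$ input bits ever appear in $P$, giving $L_C=n$; and a careful tally of real rotations plus pad rotations produces the claimed $L_Q=8n^2+10n+1$. For $T_C$, I would observe that every trivial $R_X(0)$ or $R_Z(0)$ qubit has angle a multiple of $\pi$, so the adaptation factor $(-1)^{\sum_k A_{jk}m_k}$ only affects a global phase and such qubits may be measured non-adaptively in parallel at the very start. The remaining nontrivial measurements can be scheduled into $8n+2$ adaptive rounds, one per $R_X$-block (measuring the $n$ real $R_X$ qubits in parallel) and one per real $R_Z$ rotation between blocks.

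The main obstacle is verifying that the within-block parallelization is compatible with the lower-triangular dependency structure in Eq.~(\ref{eq:1DC_adaptation_matrix}): one must show that the $n$ adaptive $R_X$ angles inside a given block depend only on already-measured qubits (namely the trivial even-indexed pads of that block and all earlier qubits), so they may indeed be selected and measured in a single round. This reduces to checking that $A_{jk}=0$ whenever $j$ and $k$ have the same parity, which is immediate from Eq.~(\ref{eq:1DC_adaptation_matrix}) and decouples the $n$ odd qubits of a block from each other. Once that decoupling is in hand, Eq.~(\ref{eq:adaptive_MBQC_su2}) and Theorem~\ref{thm:sym_boole_qsp_1qc} combine to give the desired adaptive $l2$-MBQC for any symmetric Boolean function $f$.
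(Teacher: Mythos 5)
Your overall strategy --- embed the QSP circuit of Theorem~\ref{thm:sym_boole_qsp_1qc} into a 1D cluster state via Eq.~(\ref{eq:adaptive_MBQC_su2}), padding with trivial rotations to match the alternating $X$/$Z$ pattern --- is exactly the paper's route, and your scheduling argument for $T_C$ (trivial pads measured nonadaptively up front, one adaptive round per block of mutually commuting rotations) is also the right mechanism. However, there is a genuine gap in how you condition the $X$-rotations on the input. You assign to each input bit a qubit implementing $R_X(x_k\pi/(n+1))$ and let the adaptation matrix flip its sign. The resulting measurement observable then ranges over \emph{three} bases, $X(0)$, $X(+\pi/(n+1))$, and $X(-\pi/(n+1))$, and is a function of the pair $(x_k, \bigoplus_j A_{kj}m_j)$ rather than of the single bit $s_k=(P\mathbf{x}\oplus A\mathbf{m})_k$: for instance $(x_k,a)=(0,0)$ and $(1,1)$ give the same $s_k$ but different bases. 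This violates condition~1 of Definition~\ref{def:l2_MBQC}, which demands a binary choice of observable $O_k(s_k)$. The paper's proof is centered on precisely this point: one must first reparametrize each conditioned rotation using $x=(1-(-1)^x)/2$, writing
\begin{align}
R_X\!\left(\frac{\pi|\mathbf{x}|}{n+1}\right)=R_X\!\left(\frac{n\pi}{2(n+1)}\right)\prod_{j=1}^n R_X\!\left((-1)^{x_j+1}\frac{\pi}{2(n+1)}\right),
\end{align}
so that every conditioned measurement is a rotation by $\pm\pi/(2(n+1))$ with sign given by the single XOR bit $x_j\oplus(A\mathbf{m})_j$.

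This omission is not merely cosmetic: it is what fixes the qubit count. With the reparametrization each block contains $n+1$ $X$-rotations interleaved with $n$ $R_Z(0)$ pads, i.e.\ $2n+1$ qubits per block, and $L_Q=(4n+1)(2n+1)+4n=8n^2+10n+1$. With your $n$ rotations per block the tally comes out differently (roughly $8n^2+10n-1$ even after generous boundary padding), so the ``careful tally'' you defer to cannot actually produce the stated $L_Q$. The rest of your argument --- the zero rows of $P$ on pad and $R_Z$ qubits, the lower-triangular $A$ of Eq.~(\ref{eq:1DC_adaptation_matrix}) decoupling same-parity qubits so a whole block can be measured in one round, and the output being the parity of odd-qubit outcomes --- survives unchanged once the reparametrization is inserted.
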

\begin{proof}
We simply sketch the proof here. A detailed description of the $l2$-MBQC scheme can be obtained by slightly extending our construction in the proof of Theorem~\ref{thm:mod_p_mbqc}.  

First note that the adaptation matrix in Eq.~\ref{eq:1DC_adaptation_matrix} changes the sign of the measurement basis to $\pm \theta$.  To ensure all measurement settings are binary (i.e., rotation by $\pm \theta$ rather than 0, $+\theta$, or $-\theta$), we can use the fact that $x=(1-(-1)^x)/2$ $\forall x\in\mathbb{F}_2$ to decompose each sequence of $X$-rotations as
\begin{align}
    R_X\left( \frac{\pi |\mathbf{x}|}{n+1} \right) = R_X\left(\frac{\pi}{2(n+1)}\right) \times \\ \prod_{j=1}^n R_X\left((-1)^{x_j+1} \frac{\pi}{2(n+1)}\right). \nonumber
\end{align}
Inserting identities written as $R_Z(0)$ in between each of these $(n+1)$ many $X$-rotations recasts Eq.~(\ref{eq:QSP_sym_Boole}) in a form that matches the right-hand side of Eq.~(\ref{eq:adaptive_MBQC_su2}).  The circuit contains $4n+1$ many sequences of $X$-rotations (each containing $n$ many $R_Z(0)$ rotations sandwiched by $n+1$ $X$-rotations) interspersed by $4n$ nonzero $Z$-rotations.  It follows that the side of the cluster-state required and the number of times we must adapt are $L_Q = (4n+1)(2n+1) + 4n$ and $T_C = 4n+1 + 4n+1$.
\end{proof}

While the above construction is sufficient to compute any symmetric Boolean function (e.g., $\mathsf{Mod}_{p,j}$) using only $\Theta(n^2)$ many qubits initialized in a 1D cluster state, it has run time $\Theta(n)$.  Meanwhile, the constructions in Sec.~\ref{sec:OR_Red_MBQC} have constant run time.  In the next section, we will show how one can extend the utility of the QSP approach to acheive constant-time $l2$-MBQCs.

\subsection{Simplified linear-space constant-time $l2$-MBQC for $\mathsf{Mod}_{3,0}$}
\label{sec:adaptive_n_n}

\begin{figure*}
    \centering
    \includegraphics[width=\linewidth]{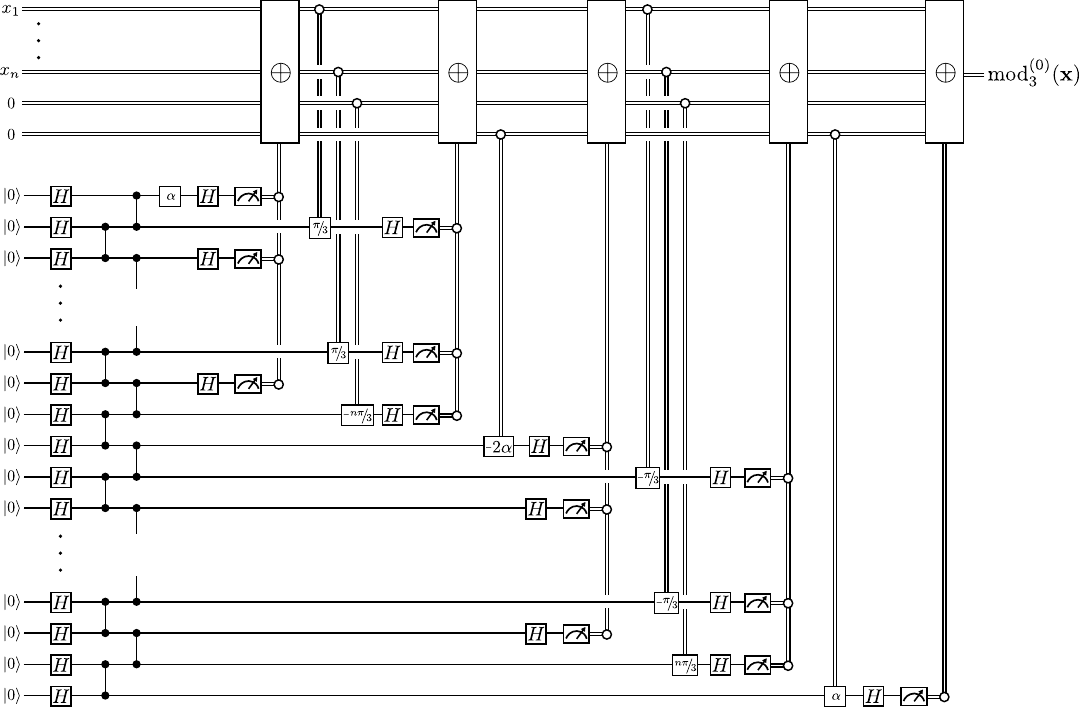}
    \caption{A sketch of the adaptive $l2$-MBQC protocol that computes the function $\mathsf{Mod}_{3,0}:\mathbb{F}_2^n\rightarrow\mathbb{F}_2$.  The scheme uses $n+2$ classical bits, the first $n$ of which are the string $\mathbf{x}$.  The other two bits carry the parity of all measurement outcomes obtained at that point for the even and odd qubits, respectively.  Each classically conditioned single-qubit gate labeled by an angle $\theta$ represents a $R_Z(\pm\theta)$ gate where the sign is determined by the bit it is conditioned on. The scheme proceeds in five rounds of interaction with the mod-2 linear classical side-processor and gives an MBQC realization of the circuit in Eq.~(\ref{eq:Mod3_Circuit}).}
    \label{fig:mod_3_circuit}
\end{figure*}

We begin with a particularly intuitive 1QC that computes the mod-3 function.  This simplified construction does not require us to resort to the machinery of quantum signal processing.
\begin{lemma}
\label{lem:nc1QCmod_3}
The function $\mathsf{Mod}_{3,0}:\mathbb{F}_2^n\rightarrow\mathbb{F}_2$ can be computed by a 1QC with $L_C=n$ and $T_Q=2n+1$.
\end{lemma}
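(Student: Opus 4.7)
The plan is to exhibit an explicit $1$QC meeting the claimed resource bounds and verify directly that it computes $\mathsf{Mod}_{3,0}$. Since $\mathsf{Mod}_{3,0}(\mathbf{x})$ depends only on $|\mathbf{x}| \bmod 3$, the natural starting point is to use bit-conditioned rotations by multiples of $2\pi/3$, so that commuting layers of such rotations accumulate into a single rotation whose periodicity in $|\mathbf{x}|$ exactly matches the target function.

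Concretely, I would try the sandwich ansatz
\[
U(\mathbf{x}) \;=\; \Bigl[\textstyle\prod_{j=1}^n R_{\sigma_L}(\theta_L\, x_j)\Bigr]\, W\, \Bigl[\textstyle\prod_{j=1}^n R_{\sigma_R}(\theta_R\, x_j)\Bigr],
\]
with $\sigma_L,\sigma_R\in\{X,Y,Z\}$, two fixed angles $\theta_L,\theta_R$ chosen as multiples of $2\pi/3$, and one fixed single-qubit unitary $W$ in the middle. Because rotations about the same axis commute, each product collapses to $R_{\sigma_L}(\theta_L |\mathbf{x}|)$ and $R_{\sigma_R}(\theta_R |\mathbf{x}|)$, so $U(\mathbf{x})|0\rangle$ depends on $\mathbf{x}$ only through $|\mathbf{x}|$, and by the choice of angles in fact only through $|\mathbf{x}|\bmod 3$. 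The gate count is $n+1+n=2n+1$ and the classical memory footprint is $L_C=n$, matching the claim. One then fixes the small number of free parameters (two axes, two angles, and the parameters of $W$) by demanding
\[
U(\mathbf{x})|0\rangle \,\propto\, |\mathsf{Mod}_{3,0}(\mathbf{x})\rangle \quad \text{for each of } |\mathbf{x}| \equiv 0,1,2 \pmod 3.
\]
The $|\mathbf{x}|\equiv 0$ case forces $W|0\rangle\propto|0\rangle$, effectively reducing $W$ to a $Z$-rotation, while the $|\mathbf{x}|\equiv 1$ and $|\mathbf{x}|\equiv 2$ cases are related by the complex-conjugation symmetry that swaps the two nontrivial residues. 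This collapses the remaining two vector conditions into a short trigonometric system solvable for concrete angle values.

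The main obstacle is precisely this algebraic step: verifying that a single rigid sandwich is flexible enough to realize all three targets simultaneously. If the simplest choice fails, I would either (i) take $\sigma_L\ne\sigma_R$ so that the two layers interact non-trivially through $W$, or (ii) replace $W$ by a short fixed composition while absorbing the outermost fixed rotations into state preparation and into the final computational-basis readout (since initial/final $R_Z$ rotations act only as global phases there), keeping the total gate count at $2n+1$. Once a valid assignment of axes, angles, and $W$ is obtained, correctness is confirmed by substituting $|\mathbf{x}|\bmod 3\in\{0,1,2\}$ into the closed-form expression for $U(\mathbf{x})|0\rangle$ and checking case by case that the output is $|\mathsf{Mod}_{3,0}(\mathbf{x})\rangle$ up to a global phase, which then satisfies the deterministic $1$QC output condition of Eq.~(\ref{eq:1QC_Def}).
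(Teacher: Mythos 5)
Your ansatz has the right \emph{shape}, but the restriction $\sigma_L,\sigma_R\in\{X,Y,Z\}$ is fatal: with Pauli rotation axes the ``short trigonometric system'' you defer to has no solution, for any angles. As you note, the $|\mathbf{x}|\equiv 0$ case forces $W$ to be diagonal, $W\propto R_Z(\gamma)$. Writing $a=\theta_L w/2$ and $b=\theta_R w/2$, the amplitude $\langle 0|R_{\sigma_L}(\theta_L w)\,R_Z(\gamma)\,R_{\sigma_R}(\theta_R w)|0\rangle$ for $\sigma_L,\sigma_R\in\{X,Y\}$ has the form $e^{-i\gamma/2}\cos a\cos b - \eta\, e^{i\gamma/2}\sin a\sin b$ with $\eta$ a fixed unit phase; its vanishing (outside trivially inconsistent degenerate cases) forces $\cos(a+\epsilon b)=0$ for a single sign $\epsilon=\pm1$ fixed by $\gamma$. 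Imposing this at $w=1$ gives $\theta_L+\epsilon\theta_R\equiv\pi\ (\mathrm{mod}\ 2\pi)$, while $w=2$ gives $\theta_L+\epsilon\theta_R\equiv\pi/2\ (\mathrm{mod}\ \pi)$ --- contradictory (and the first condition is already unattainable for multiples of $2\pi/3$). The cases where $\sigma_L$ or $\sigma_R$ is $Z$ fail even more simply, since a single $X$- or $Y$-rotation by $\theta w$ cannot map $|0\rangle$ to a state orthogonal to $|0\rangle$ at both $w=1$ and $w=2$. Neither fallback repairs this: option (i) stays inside $\{X,Y,Z\}$, and option (ii) only licenses discarding $Z$-rotations at the two ends (anything else changes the prepared state away from $|0\rangle$ or the readout away from the computational basis, or pushes the count to $2n+3$), which returns you to the same obstructed sandwich.

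The missing idea is that the conditioned rotations must be about a \emph{non-Pauli} axis. The paper takes $U_Q=\exp\bigl(i\tfrac{\pi}{3}\tfrac{X+Y+Z}{\sqrt{3}}\bigr)$, the order-3 Clifford rotation by $2\pi/3$ about the body diagonal of the Bloch sphere, whose adjoint action cyclically permutes $Z\mapsto X\mapsto Y\mapsto Z$, and sets $U(\mathbf{x})=\bigl[\prod_{j}U_Q^{x_j}\bigr]\,Z\,\bigl[\prod_{j}(U_Q^{x_j})^\dagger\bigr]=U_Q^{|\mathbf{x}|}\,Z\,U_Q^{-|\mathbf{x}|}$. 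This is precisely your sandwich with $\sigma_L=\sigma_R=(X+Y+Z)/\sqrt{3}$, $\theta_R=-\theta_L$, and $W=Z$ (diagonal, as your analysis requires): it evaluates to $Z$, $X$, or $Y$ according to $|\mathbf{x}|\bmod 3=0,1,2$, and since $Z|0\rangle=|0\rangle$ while $X|0\rangle$ and $Y|0\rangle$ are proportional to $|1\rangle$, the computational-basis readout is $\mathsf{Mod}_{3,0}(\mathbf{x})$, with exactly $2n+1$ gates each conditioned on one input bit. So your plan becomes the paper's proof the moment you enlarge the axis set beyond $\{X,Y,Z\}$; as written, it cannot be completed.
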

\begin{proof}
The proof is constructive.  Let
\begin{align}
U_Q = \exp\left(i\frac{\pi}{3} \frac{X+Y+Z}{\sqrt{3}}\right).
\end{align}
This adjoint action of this unitary on Pauli operators gives the following map.
\begin{align}
    U_Q\odot U_Q^\dagger = 
    \begin{cases}
    X\mapsto Z \\
    Y\mapsto X \\
    Z\mapsto Y 
    \end{cases},
\end{align}
where the $\odot$ denotes suspended evaluation.  Consider now the following sequence of unitary transformations
\begin{align}
    U(\mathbf{x}) = \left[\prod_{j=1}^nU_Q^{x_j}\right] Z \left[\prod_{j=1}^n(U_Q^{x_j})^\dagger\right].
    \label{eq:Mod3_Circuit}
\end{align}
It then follows that
\begin{align}
U(\mathbf{x}) = 
    \begin{cases}
    Z &\textrm{if } |\mathbf{x}| = 0 \textrm{ mod }3\\
    Y &\textrm{if }|\mathbf{x}| = 1 \textrm{ mod }3\\
    X &\textrm{if }|\mathbf{x}| = 2 \textrm{ mod }3
    \end{cases}
\end{align}
and hence
\begin{align}
    U(\mathbf{x}) |0\rangle \propto |\mathrm{mod}_3^{(0 )}(\mathbf{x})\rangle.
\end{align}
Thus, $U(\mathbf{x})$ realizes a 1QC of the mod-3 function with $L_C=n$ and $T_C=2n+1$.
\end{proof}

We remark that the circuit in Eq.~(\ref{eq:Mod3_Circuit}) consists of Clifford gates only and has potential to be implemented in a fault-tolerant setting \cite{bravyi2020quantum} using a method like gate teleportation as in Ref.~\cite{caha2022single}. 

Notice that the circuit $U(\mathbf{x})$ consists of three blocks of mutually commuting unitaries (i.e., the first $n$ gates $U_Q^\dagger$, the middle $Z$ gate, and the last $n$ gates $U_Q$).  It follows that one can realize an adaptive $l2$-MBQC protocol with this circuit as the underlying 1QC that only uses a constant number of rounds of interactivity with the classical side processor.  This $l2$-MBQC scheme, which we describe in the following theorem, is sketched in Fig.~\ref{fig:mod_3_circuit}.
\begin{theorem}
The function $\mathsf{Mod}_{3,0}:\mathbb{F}_2^n\rightarrow\mathbb{F}_2$ can be computed via an adaptive $l2$-MBQC on a 1D cluster state with resource costs $L_C=n+2$, $L_Q=4n+5$, $T_C = 5$, and $T_Q=3$.
\label{thm:aMQC_mod3}
\end{theorem}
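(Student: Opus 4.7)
\medskip
\noindent\textbf{Proof proposal.} The strategy is to take the noncommuting 1QC constructed in Lemma~\ref{lem:nc1QCmod_3}, Euler-decompose it into a sequence of alternating $Z$ and $X$ rotations, and then embed that sequence into the 1D cluster state using the MBQC identity of Eq.~(\ref{eq:adaptive_MBQC_su2}). The whole point of using the Lemma~\ref{lem:nc1QCmod_3} circuit rather than a generic QSP circuit is that it splits naturally into only three blocks of mutually commuting conditional unitaries, which should translate into a bounded number of adaptive rounds on the cluster.

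First, I would fix an Euler decomposition $U_Q = e^{i\varphi} R_Z(\alpha) R_X(\beta) R_Z(\gamma)$ (the axis $(X+Y+Z)/\sqrt{3}$ and angle $2\pi/3$ give explicit values, and the overall phase $\varphi$ is irrelevant). The conditional unitary $U_Q^{x_j}$ is then realised as the sequence $R_Z(\gamma x_j) R_X(\beta x_j) R_Z(\alpha x_j)$. To make each measurement setting binary-valued (a requirement of Def.~\ref{def:l2_MBQC}), I apply the same trick used in Thm.~\ref{thm:l2-MBQC_QSP_1DC}: use $x_j = (1-(-1)^{x_j})/2$ to split each conditional rotation into a fixed rotation by half the angle times a rotation whose sign is conditioned on $x_j$. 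The three blocks of Eq.~(\ref{eq:Mod3_Circuit}) thereby become three sub-sequences of alternating $R_X$/$R_Z$ rotations in which each rotation angle is fixed and each sign flip is conditioned either on a bit of $\mathbf{x}$ or on nothing at all (for the absorbed fixed halves and for the central $Z$).

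Next, I would place this alternating sequence on a 1D cluster state via Eq.~(\ref{eq:adaptive_MBQC_su2}): odd-labelled qubits implement the $R_X$ rotations and even-labelled qubits implement the $R_Z$ rotations, with the final single-qubit output read out as the parity of the odd-qubit outcomes. Counting the rotations after merging adjacent $R_Z$'s across the block boundaries and the central $Z$, the total length of the 1D cluster turns out to be $L_Q = 4n+5$. The key structural observation for $T_C$ is that, within each of the three commuting blocks, the signs that the cluster's adaptation matrix of Eq.~(\ref{eq:1DC_adaptation_matrix}) attaches to the measurement bases depend only on the \emph{parities} of the odd and even measurement outcomes accumulated so far; since all gates in a block commute, those parities can be updated by the side-processor using two extra classical registers rather than forcing per-qubit adaptation. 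This gives $L_C = n + 2$: the $n$ input bits plus the two running parities. Because adaptation is only needed at the boundaries between the three blocks (and once more for the final readout), and each boundary involves a message from quantum to classical followed by a message back, I expect exactly $T_C = 5$ rounds, matching the structure in Fig.~\ref{fig:mod_3_circuit}. The preparation depth $T_Q = 3$ is inherited directly from the standard Hadamard-then-$\mathrm{C}Z$ construction of the 1D cluster state.

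The main obstacle, and the step that deserves the most care, is the bookkeeping in the third step: verifying that grouping signs into the two running parities really is consistent with the lower-triangular form of the adaptation matrix $A$ in Def.~\ref{def:l2_MBQC}, and that the merging of $R_Z(\alpha x_j)$ and $R_Z(\gamma x_{j+1})$ across consecutive sites in the same block does not introduce a conditional $R_Z$ whose sign depends on two distinct $x_j$'s in a way that breaks the block-wise structure. The commutativity $[U_Q^{x_j}, U_Q^{x_k}]=0$ is what saves us here: within a block we are free to reorder the conditional gates arbitrarily, so any offending pairing can be rearranged into a form where the only non-trivial sign dependencies are the two block-wise parities. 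Once this is checked, the resource counts $L_C=n+2$, $L_Q=4n+5$, $T_C=5$, $T_Q=3$ follow by inspection of the resulting cluster-state protocol, and correctness follows from combining Eq.~(\ref{eq:adaptive_MBQC_su2}) with Lemma~\ref{lem:nc1QCmod_3}.
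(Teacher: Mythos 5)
Your overall plan (take the 1QC of Lemma~\ref{lem:nc1QCmod_3}, decompose into alternating $X$/$Z$ rotations, embed via Eq.~(\ref{eq:adaptive_MBQC_su2})) is the right one, but the specific decomposition you chose breaks the constant-round claim. If you write $U_Q^{x_j} = R_Z(\gamma x_j)R_X(\beta x_j)R_Z(\alpha x_j)$ from a generic Euler decomposition, then every site of the cluster carries a rotation whose angle is conditioned on some $x_j$ and is not a multiple of $\pi$. The adaptation rule of Eq.~(\ref{eq:1DC_adaptation_matrix}) then forces a strictly sequential chain: the sign correction for odd qubit $2j+1$ needs the outcomes of even qubits $<2j+1$, whose own sign corrections need the outcomes of odd qubits $<2j$, and so on. Commutativity of the \emph{whole} gates $U_Q^{x_j}$ does not rescue this, because the individual Euler factors do not commute ($[R_Z(\alpha x_j), R_X(\beta x_k)]\neq 0$), so you cannot reorder the factors to collect all conditional $X$-rotations together, nor can you legitimately merge $R_Z(\alpha x_j)R_Z(\gamma x_{j+1})$ into a single binary-settings measurement (its angle takes four values). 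The upshot is $T_C=\Theta(n)$, not $5$, and the site count does not come out to $4n+5$ either.

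The paper's proof avoids this by choosing the conjugation form $U_Q = R_Z(\beta)R_X(\alpha)\,R_Z(2\pi/3)\,R_X(\alpha)^\dagger R_Z(\beta)^\dagger$ with $\cos\alpha = 1/\sqrt{3}$, $\beta=\pi/4$. Since the conjugating factors are \emph{unconditioned}, they telescope in the product: $\prod_j U_Q^{x_j} = R_Z(\beta)R_X(\alpha)\,R_Z(2\pi|\mathbf{x}|/3)\,R_X(\alpha)^\dagger R_Z(\beta)^\dagger$. All input dependence collapses into two Hamming-weight $Z$-rotations (each split into $n+1$ binary-conditioned $\pm\pi/3$ rotations), the interleaving $X$-sites are all measured in the plain Pauli-$X$ basis (angle $0$, hence no adaptation needed), and only three fixed $X$-rotations ($\alpha$, $2\alpha$, $\alpha$) require genuine adaptation. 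That is exactly what yields $L_Q=4n+5$ and the five-round schedule. Your observations about the two running-parity registers giving $L_C=n+2$ and about $T_Q=3$ are correct and match the paper; the missing idea is the telescoping decomposition of $U_Q$ as a conjugated $Z$-rotation.
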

\begin{proof}
We begin by rewriting the circuit in Eq.~(\ref{eq:Mod3_Circuit}) in terms of $X$ and $Z$ rotations to match Eq.~(\ref{eq:adaptive_MBQC_su2}), obtaining $U_Q = R_Z(\beta)R_X(\alpha) R_Z(2\pi/3) R_X(\alpha)^\dagger R_Z(\beta)^\dagger$ were $\cos(\alpha)=-1/\sqrt{3}$, $\sin(\alpha)=\sqrt{2/3}$, and $\beta=-\pi/4$. The first and last $R_Z(\beta)$ rotations simply add a global phase and so $|\langle m | U(\mathbf{x}) |0\rangle|^2 = |\langle m | V(\mathbf{x}) |0\rangle|^2$  where
\begin{align}
    V(\mathbf{x}) = &R_X(\alpha) R_Z\left(\frac{2\pi|\mathbf{x}|}{3}\right)R_X(2\alpha)^\dagger \times \nonumber\\
    &R_Z\left(\frac{2\pi|\mathbf{x}|}{3}\right)^\dagger R_X(\alpha).
\end{align}

We further decompose the Pauli-$Z$ rotations in $V(\mathbf{x})$ as
\begin{align}
    R_Z\left(\frac{2\pi |\mathbf{x}|}{3}\right) = R_Z\left(\frac{n\pi}{3}\right)\prod_{j=1}^n R_Z\left((-1)^{x_j+1}\frac{\pi}{3}\right).
\end{align}

Now consider a $(4n+5)$-qubit 1D cluster state. The the above circuit can be implemented in a measurement-based manner via the following sequence of measurements.
\begin{itemize}
    \item[(1)] Qubit 1 is measured in the eigenbasis of $X(\alpha)$.  Meanwhile, each qubit with label $j=1\textrm{ mod }2$ and $1<j\leq 2n+1$ is measured in the Pauli-$X$ basis.  Each measurement outcome $m_j\in\mathbb{F}_2$ is returned to the mod-2 linear classical side-processor.
    \item[(2)] For each qubit with label $j = 0 \textrm{ mod }2$ and $1<j\leq 2n$, the side-processor computes and returns the bit
    \begin{align}
        s_j = \left(\bigoplus_{\substack{k<j \\ k=1\textrm{ mod }2}} m_k\right) \oplus x_{(j/2)}.
    \end{align}
    Each qubit is then measured in the eigenbasis of $X(\theta_j)$ where $\theta_j = (-1)^{s_j}\pi/3$. Meanwhile, for qubit $2n+2$ the side-processor computes and returns the bit
    \begin{align}
        s_{2n+2} =  \bigoplus_{\substack{k<2n+2\\k=1\textrm{ mod }2}} m_k.
    \end{align}
    The qubit is then measured in the eigenbasis of $X(\theta_{2n+2})$ where $\theta_{2n+2} = (-1)^{s_{2n+2}+1}n\pi/3$.  Each measurement outcome $m_j\in\mathbb{F}_2$ is returned to the side-processor.
    \item[(3)] For qubit $2n+3$, the side-processor computes and returns the bit
    \begin{align}
        s_{2n+3} = \bigoplus_{\substack{k<2n+3 \\ k = 0\textrm{ mod }2}} m_k.
    \end{align}
    The qubit is then measured in the eigenbasis of $X(\theta_{2n+3})$ where $\theta_{2n+3} = (-1)^{s_{2n+3}+1} 2\alpha$. Meanwhile, each qubit with label $j=1\textrm{ mod }2$ and $2n+3<j\leq 4n+3$ is measured in the Pauli-$X$ basis. Each measurement outcome $m_{j}\in\mathbb{F}_2$ is returned to the side-processor.
    \item[(4)] For each qubit with label $j=0\textrm{ mod }2$ and $2n+2<j\leq 4n+2$, the side-processor computes and returns the bit
    \begin{align}
        s_j = \left(\bigoplus_{\substack{k<j \\ k=1\textrm{ mod }2}} m_k\right) \oplus x_{(j-2n-2)/2}.
    \end{align}
    The qubit then is measured in the eigenbasis of $X(\theta_j)$ where $\theta_j = (-1)^{s_j+1}\pi/3$.  Meanwhile, for qubit $4n+4$ the side-processor computes and returns the bit
    \begin{align}
        s_{4n+4} = \left(\bigoplus_{\substack{k<4n+4 \\ k=1\textrm{ mod }2}} m_k\right).
    \end{align}
    The qubit is then measured in the eigenbasis of $X(\theta_{4n+4})$ where $\theta_{4n+4} = (-1)^{s_{4n+4}}n\pi/3$.  Each measurement outcome $m_j\in\mathbb{F}_2$ is returned to the side-processor.
    \item[(5)] For qubit $4n+5$ the side-processor computes and returns the bit
    \begin{align}
        s_{4n+5} = \bigoplus_{\substack{k< 4n+5 \\ k= 0\textrm{ mod }2}} m_k.
    \end{align}
    The qubit is then measured in the eigenbasis of $X(\theta_{4n+5})$ where $\theta_{4n+5} = (-1)^{s_{4n+5}}\alpha$.  The measurement outcome $m_{4n+5}\in\mathbb{F}_2$ is returned to the side-processor, which then computes and returns the computational output
    \begin{align}
        y = \bigoplus_{j = 1\textrm{ mod }2} m_j.
    \end{align}
\end{itemize}
It follows from Eq.~(\ref{eq:adaptive_MBQC_su2}) that $y=\mathsf{Mod}_{3,0}(\mathbf{x})$.
\end{proof}

\subsection{Linear-space constant-time $l2$-MBQC for $\mathsf{Mod}_{p,j}$}

The entire family of mod-$p$ functions defined in Eq.~(\ref{eq:mod_p_def}) can be computed via the quantum signal processing technique used in Ref.~\cite{maslov2020quantum}.  The following theorem improves on their construction for mod-$p$ functions, giving a quadratic reduction in the required resources and enabling constant-time adaptive $l2$-MBQCs.  This quadratic improvement is due to the fact that the truth table of the mod-$p$ functions is invariant to shifting the Hamming weight by $p$, so we only need to look at a piece of the truth table rather than the whole thing.  Giving a circuit with $O(p)$ many rotations by the Hamming weight.

\begin{lemma}
\label{lem:nc1QCmod_p}
For any odd number $p$ and $j\in\{0,\ldots,p-1\}$, the function $\mathsf{Mod}_{p,j}:\mathbb{F}_2^n\rightarrow\mathbb{F}_2$ can be computed by a 1QC with $L_C=n$ and $T_Q=(2p-1)n + 2p$.
\end{lemma}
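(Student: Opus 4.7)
The plan is to mimic the QSP-based construction from the proof of Theorem~\ref{thm:sym_boole_qsp_1qc}, but exploit the fact that $\mathsf{Mod}_{p,j}$ is invariant under shifting $|\mathbf{x}|$ by $p$. Whereas Maslov \textit{et al.}\ chose the signal rotation $R_X(|\mathbf{x}|\pi/(n+1))$ so that the $n+1$ possible Hamming weights map to $n+1$ distinct rotations in $[0,\pi)$, I would instead use the signal $R_X(2\pi|\mathbf{x}|/p)$. For odd $p$, this signal takes only $p$ distinct values as $|\mathbf{x}|$ ranges over $\{0,1,\ldots,n\}$ (namely the $p$-th roots of unity sampled on the Bloch sphere equator), collapsing the interpolation problem from $n+1$ to $p$ conditions independent of $n$. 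The odd-$p$ hypothesis ensures the QSP existence conditions can be satisfied cleanly on this sample set.

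The central step is to invoke the QSP existence theorem of Ref.~\cite{haah2019product} to obtain phases $\{\xi_k\}_{k=1}^{K}$ with $K = 2p-1$ such that
\[
U(\mathbf{x}) = \prod_{k=1}^{K} R_Z(\xi_k)\, R_X\!\left(\tfrac{2\pi|\mathbf{x}|}{p}\right) R_Z(\xi_k)^\dagger
\]
satisfies $U(\mathbf{x}) = (iX)^{\mathsf{Mod}_{p,j}(\mathbf{x})}$. The count $K = 2p-1$ arises because the Laurent polynomial implemented by the QSP sequence has degree at most $K-1$, and interpolating the $p$ prescribed output values (one per residue class) through the $p$ signal points on the unit circle requires a trigonometric polynomial of degree $p-1$. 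Given such a sequence, the resource accounting is routine: each $R_X(2\pi|\mathbf{x}|/p)$ block factors into a product of $n$ commuting, individually conditioned rotations $R_X(2\pi x_m/p)$, one per input bit, yielding $Kn = (2p-1)n$ conditioned $X$-rotations; the unconditioned factors $R_Z(\xi_k)^\dagger R_Z(\xi_{k+1})$ between consecutive blocks merge into a single rotation, leaving $K+1 = 2p$ total $Z$-rotations. Summing gives $T_Q = (2p-1)n + 2p$, and since each $x_m$ is reused across all $K$ blocks but occupies a single classical memory slot, $L_C = n$.

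The main obstacle is verifying that the QSP feasibility conditions can be satisfied with precisely $2p-1$ phases. Concretely, one must exhibit complementary polynomials $(P, Q)$ in the signal variable of degree at most $p-1$ satisfying $|P|^2 + |Q|^2 = 1$ on the unit circle, with the appropriate entry of $U(\mathbf{x})$ evaluating to the correct 0-or-nonzero pattern on the $p$-th roots of unity dictated by the mod-$p$ truth table. Lagrange interpolation over the $p$-th roots of unity yields a natural candidate polynomial, which may require a small uniform attenuation to enforce the sub-unit-norm bound before the phase-extraction algorithm of Ref.~\cite{haah2019product} delivers the angles $\{\xi_k\}$; the requisite parity matching follows from odd $p$.
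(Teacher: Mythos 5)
Your proposal follows essentially the same route as the paper: exploit the period-$p$ invariance of $\mathsf{Mod}_{p,j}$ in the Hamming weight to replace the signal $R_X(|\mathbf{x}|\pi/(n+1))$ of Maslov \emph{et al.}\ by one proportional to $|\mathbf{x}|/p$, so that the QSP interpolation conditions collapse to a set whose size depends only on $p$, yielding $2p-1$ phases; the gate count $(2p-1)n+2p$ is obtained exactly as you describe, and the paper likewise defers existence of the phases to an appendix, where a $2p\times 2p$ linear system (value plus derivative conditions at the sample points) is asserted to have a unique solution and the angles are computed numerically. One concrete correction: the paper's signal is $R_X(4\pi|\mathbf{x}|/p)$, not $R_X(2\pi|\mathbf{x}|/p)$. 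The QSP Laurent polynomials live in the half-angle variable $z=e^{i\phi/2}$, so your choice samples $z$ at $2p$-th roots of unity (with $z\mapsto -z$ under $|\mathbf{x}|\mapsto|\mathbf{x}|+p$) rather than the $p$-th roots you invoke, and the operator identity $U(\mathbf{x})=(iX)^{\mathsf{Mod}_{p,j}(\mathbf{x})}$ then holds only up to a sign $(-1)^{\lfloor|\mathbf{x}|/p\rfloor}$. Because $2p-1$ is odd the relevant polynomials are odd in $z$, the paired conditions at $\pm z$ are automatically consistent and the measurement statistics are unaffected, so your construction still works; but doubling the signal angle, as the paper does, makes the period-$p$ reduction exact at the level of the unitary and matches the roots-of-unity picture you describe.
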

\begin{proof}
In Appendix~\ref{sec:Previous_1QC_Work} we demonstrate that there exist a set of $2p-1$ angles $\{\xi_j\in[0,2\pi)\}_{j=1}^{2p-1}$ such that the unitary
\begin{align}
    U(\mathbf{x}) = \prod_{j=1}^{2p-1} R_Z(\xi_j)  R_X\left(\frac{4\pi |\mathbf{x}|}{p}\right) R_Z(\xi_j)^\dagger
    \label{eq:mod_p_qsp_circuit}
\end{align}
satisfies $|\langle y |U(\mathbf{x}) | 0\rangle|^2 = \delta_y^{\mathsf{Mod}_{p,j}(\mathbf{x})}$.  We also explain in detail an algorithm to determine these angles numerically.  These angles are listed in Tab.~\ref{tab:mod_p_qsp_angles} for $j=0$ and $p<9$.
\end{proof}

\begin{table}[]
    \centering
    \begin{tabular}{c||c|c|c|c}
        $p$ & 3 & 5 & 7 & 9 \\
        \hline
        $\xi_1$ & -0.21032 & 0.25795 & 0.24598 & -1.32875 \\
        $\xi_2$ & 0.62099 & 0.08709 & 0.21709 & -0.79511 \\
        $\xi_3$ & 2.64302 & -0.47767 & 0.00603 & -0.10787 \\
        $\xi_4$ & 1.75347 & -1.55500 & -0.42033 & 0.88376 \\
        $\xi_5$ & 2.39109 & 2.89580 & -1.11688 & 3.0817 \\
        $\xi_6$ & - & -1.78858 & -2.14572 & 1.53016 \\
        $\xi_7$ & - & -1.80615 & 2.37267 & 1.07858 \\
        $\xi_8$ & - & -2.17667 & -2.04731 & 1.15532 \\
        $\xi_9$ & - & -2.64310 & -1.72877 & 1.68658 \\
        $\xi_{10}$ & - & - & -1.82919 & 2.26212 \\
        $\xi_{11}$ & - & - & -2.08722 & 2.61176 \\
        $\xi_{12}$ & - & - & -2.41079 & 3.02345 \\
        $\xi_{13}$ & - & - & -2.75310 & -2.32569 \\
        $\xi_{14}$ & - & - & - & 1.54469 \\
        $\xi_{15}$ & - & - & - & 1.19266 \\
        $\xi_{16}$ & - & - & - & 1.26558 \\
        $\xi_{17}$ & - & - & - & 1.45910 
    \end{tabular}
    \caption{A table of the angles $\{\xi_j\}_{j=1}^{2p-1}$ in radians used in the quantum signal processing approach to computing the function $\mathrm{mod}_p^{(0)}:\mathbb{F}_2^n\rightarrow\mathbb{F}_2$ for the first few nontrivial values of $p$.  These angles were determined numerically using the algorithm described in Appendix~\ref{sec:Previous_1QC_Work} and are thus approximations of the true angles required.  In the worst case, the probability of failure for a 1QC using these finite decimal approximations is less than $10^{-10}$.  We also remark that for each additional least significant digit thrown out the probability of failure increases by a factor of $10^{2}$.}
    \label{tab:mod_p_qsp_angles}
\end{table}

Recall from the proof of Theorem~\ref{thm:l2-MBQC_QSP_1DC} that the number of times the mod-2 linear classical side-processor is called is twice the number of $X$-rotations by the Hamming weight that appear in the circuit.  In Eq.~(\ref{eq:mod_p_qsp_circuit}) this number is $2(2p-1)$, which is independent of $n$. Hence, the unitary in Eq.~(\ref{eq:mod_p_qsp_circuit}) can be implemented via adaptive $l2$-MBQC with only a constant number of rounds of interaction with the mod-2 linear classical side-processor.  This $l2$-MBQC scheme is described in the following theorem.
\begin{theorem}
For any odd number $p$, the function $\mathsf{Mod}_{p,j}:\mathbb{F}_2^n\rightarrow\mathbb{F}_2$ can be computed via an adaptive $l2$-MBQC on a 1D cluster state with resource costs $L_C=n+2$, $L_Q= (4p-2)(n+1) - 1$, $T_C=4p-2$, and $T_Q=3$.
\label{thm:mod_p_mbqc}
\end{theorem}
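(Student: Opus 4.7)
The plan is to mirror the construction of Theorem~\ref{thm:aMQC_mod3} but with the QSP circuit of Lemma~\ref{lem:nc1QCmod_p} playing the role of the underlying 1QC. Starting from Eq.~(\ref{eq:mod_p_qsp_circuit}), I would first observe that the rightmost $R_Z(\xi_{2p-1})^\dagger$ acts on $|0\rangle$ and the leftmost $R_Z(\xi_1)$ acts just before a computational-basis measurement, so both contribute only global phases and may be dropped from the effective unitary. Combining each adjacent pair $R_Z(-\xi_j)R_Z(\xi_{j+1})=R_Z(\xi_{j+1}-\xi_j)$ between consecutive QSP blocks then yields an alternating circuit consisting of $2p-1$ Hamming-weight $X$-rotations separated by $2p-2$ fixed $Z$-rotations, with no $Z$-rotations at the outer ends.

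Next I would decompose each Hamming-weight rotation as
\begin{align}
R_X\!\left(\tfrac{4\pi|\mathbf{x}|}{p}\right) = R_X\!\left(\tfrac{2\pi n}{p}\right)\prod_{k=1}^n R_X\!\left((-1)^{x_k+1}\tfrac{2\pi}{p}\right),
\end{align}
using $x_k=(1-(-1)^{x_k})/2$, so every $X$-rotation is parameterized by a single binary setting (the constant rotation plays the role of an unconditioned rotation, exactly as in the proof of Theorem~\ref{thm:l2-MBQC_QSP_1DC}). Inserting identities $R_Z(0)$ between consecutive $X$-rotations within each block casts the full unitary into the alternating $X\,Z\,X\,Z\cdots X$ form accommodated by Eq.~(\ref{eq:adaptive_MBQC_su2}). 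This produces $(2p-1)(n+1)$ many $X$-rotations and $(2p-1)n+(2p-2)$ many $Z$-rotations, which is exactly one fewer than the number of $X$-rotations, so the 1D cluster state required has length $L_Q=2(2p-1)(n+1)-1=(4p-2)(n+1)-1$.

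To achieve $T_C=4p-2$, I would schedule the measurements block by block: within any block all $n+1$ odd-positioned qubits (simulating the $X$-rotations) can be measured in a single round, since the adaptation matrix of Eq.~(\ref{eq:1DC_adaptation_matrix}) couples only qubits of opposite parity and the intervening even-positioned identity-$Z$ qubits may be measured simultaneously in Pauli-$X$; then one subsequent round measures the single non-trivial $Z$-rotation qubit separating this block from the next, with its sign adapted from the running parity of the preceding odd outcomes. This gives $2$ rounds per block for blocks $1,\dots,2p-2$, one round for block $2p-1$, and one final round in which the side-processor outputs $y=\bigoplus_{j\ \mathrm{odd}}m_j$, totalling $T_C=2(2p-2)+1+1=4p-2$. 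As in the $\mathsf{Mod}_{3,0}$ scheme, the classical memory carries the $n$ input bits plus two running-parity registers (one for odd, one for even outcomes), giving $L_C=n+2$; the cluster state is prepared in depth $T_Q=3$.

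The main obstacle is bookkeeping: I must verify that at every round the preprocessing bit $s_j=P_j\mathbf{x}\oplus (A\mathbf{m})_j$ driving each measurement can be assembled from only the two running parities and the single input bit $x_{(j/2)}$ or $x_{(j-2n-2)/2}$ appearing in the analogous formulas of Theorem~\ref{thm:aMQC_mod3}, generalized to $2p-1$ blocks. I would handle this by writing out the explicit bit formulas round by round, exactly paralleling steps (1)--(5) in the proof of Theorem~\ref{thm:aMQC_mod3}, with the angles $\pm\pi/3$ and $2\alpha$ replaced by the $\pm 2\pi/p$ and $\xi_{j+1}-\xi_j$ angles dictated by the QSP decomposition and Tab.~\ref{tab:mod_p_qsp_angles}; correctness of the final parity then follows from Eq.~(\ref{eq:adaptive_MBQC_su2}) applied to the alternating circuit constructed above.
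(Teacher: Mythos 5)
Your construction is essentially the paper's: you start from the circuit of Lemma~\ref{lem:nc1QCmod_p}, strip the two outer $Z$-rotations, merge adjacent $Z$-rotations into $R_Z(\xi_{\mu+1}-\xi_\mu)$, decompose each Hamming-weight rotation into $n$ conditioned rotations by $\pm 2\pi/p$ plus one unconditioned rotation by $2\pi n/p$, and pad with $R_Z(0)$ identities to match the alternating form of Eq.~(\ref{eq:adaptive_MBQC_su2}). Your qubit count $L_Q=(4p-2)(n+1)-1$, the classical memory accounting $L_C=n+2$, and $T_Q=3$ all agree with the paper.

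There is, however, a genuine flaw in your measurement schedule. You propose that within each block the $n+1$ odd-positioned qubits (the $X$-rotations) and the intervening even-positioned $R_Z(0)$ qubits be measured \emph{in the same round}. But by Eq.~(\ref{eq:1DC_adaptation_matrix}) the sign of the measurement angle for an odd-labeled qubit $j$ is adapted by the parity $\bigoplus_{k<j,\,k\equiv 0\ \mathrm{mod}\ 2} m_k$, which includes the outcomes of the identity qubits at positions $j-1, j-3,\dots$ \emph{inside the same block}. Those $Z^{m_k}$ byproducts genuinely flip the signs of all subsequent $X$-rotations (since $Z R_X(\theta) Z = R_X(-\theta)$ and $R_X(2\pi/p)$ is non-Clifford for odd $p>1$), so they cannot be deferred to a final Pauli frame; the odd qubits' bases cannot be chosen before those outcomes are known, and your round as described violates the causal ordering required of the adaptation matrix. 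The paper's fix is to measure \emph{all} of the $R_Z(0)$ qubits nonadaptively in the Pauli-$X$ basis in a single dedicated first round (their angles are $0$, so no adaptation is needed for them), and only then sweep through the blocks, with two adaptive rounds per block (one for the $n+1$ $X$-rotation qubits, one for the nontrivial $Z$-rotation qubit). This gives $1+2(2p-2)+1=4p-2$ rounds once the output parity is folded into the last measurement round, so the claimed $T_C$ survives the repair; but as written your protocol is not a valid adaptive $l2$-MBQC.
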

\begin{proof}
We begin by noticing that the first and last $Z$-rotations in the circuit in Eq.~(\ref{eq:mod_p_qsp_circuit}) do not change $|\langle y|U(\mathbf{x})|0\rangle|^2$ and thus do not affect the probability for $y$ to be 0 or 1. Call this unitary obtained from excluding these rotations $V(\mathbf{x})$, i.e., $V(\mathbf{x}) = R_Z(\xi_{2p-1})^\dagger U(\mathbf{x}) R_Z(\xi_1)$.  By further decomposing each $X$-rotation as
\begin{align}
     R_X\left(\frac{4\pi |\mathbf{x}|}{p}\right) = &R_X\left( \frac{2\pi n}{p} \right)\times \\
    &\prod_{k=1}^n R_X\left( (-1)^{x_k+1}\frac{2\pi}{p} \right),\nonumber
\end{align}
we may implement the circuit $V(\mathbf{x})$ on a $\{(4p-2)(n+1) - 1\}$-qubit cluster state via the following sequence of adaptive measurements.
\begin{itemize}
    \item[(1)] Each qubit with label $j=0\textrm{ mod }2$, except those with label $j=k(2n+2)$ for $k\in[2p-2]$, are measured in the Pauli-$X$ basis.  Each measurement outcome $m_j\in\mathbb{F}_2$ is returned to the mod-2 linear classical side-processor.
    \item[] The next $4p-4$ measurements proceed as follows for each $\mu\in[2p-2]$.
    \begin{itemize}
        \item[($2\mu$)] For each qubit with label $j=1\mod2$ and $(\mu-1)(2n+2)+1\leq j<\mu(2n+2)-1$ the side-processor computes and returns the bit 
        \begin{align}
            s_j = \left(\bigoplus_{\substack{k<j \\ k = 0\textrm{ mod }2}}m_k\right)\oplus x_{\iota_\mu(j)},
        \end{align}
        where $\iota_\mu(j) = (j+1)/2 - (\mu-1)(n+1)$ is a map denoting which classical bit $x_i$ each qubit $j$ is conditioned on.  Each qubit is then measured in the eigenbasis of $X(\theta_j)$ where $\theta_j = (-1)^{s_j+1}2\pi/p$.  Meanwhile, for qubit $\mu(2n+2)-1$ the side-processor computes and returns the bit
        \begin{align}
             s_{\mu(2n+2)-1} = \bigoplus_{\substack{k<\mu(2n+2)-1 \\ k = 0\textrm{ mod }2}}m_k.
        \end{align}
        The qubit is then measured in the eigenbasis of $X(\theta_{\mu(2n+2)-1})$ where $\theta_{\mu(2n+2)-1} = (-1)^{s_{\mu(2n+2)-1}}2\pi n/p$.
        Each measurement outcome $m_j\in\mathbb{F}_2$ is returned to the side-processor.
        \item[($2\mu + 1$)] For qubit $\mu(2n+2)$ the side-processor computes and returns the bit
        \begin{align}
             s_{\mu(2n+2)} = \bigoplus_{\substack{k<\mu(2n+2) \\ k = 1\textrm{ mod }2}}m_k.
        \end{align}
        The qubit is then measured in the eigenbasis of $X(\theta_{\mu(2n+2)})$ where $\theta_{\mu(2n+2)} = (-1)^{s_{\mu(2n+2)}}(\xi_{\mu+1} - \xi_\mu)$.  The measurement outcome $m_{\mu(2n+2)}\in\mathbb{F}_2$ is returned to the side-processor.
    \end{itemize}
    \item[($4p-2$)] For each qubit with label $j=1\mod2$ and $(4p-4)(n+1)+1\leq j<(4p-2)(n+1)-1$ the side-processor returns the bit 
        \begin{align}
            s_j = \left(\bigoplus_{\substack{k<j \\ k = 0\textrm{ mod }2}}m_k\right)\oplus x_{\iota(j)},
        \end{align}
        where $\iota(j) = (j+1)/2 - (2p-2)(n+1)$.  Each qubit is then measured in the eigenbasis of $X(\theta_j)$ where $\theta_j = (-1)^{s_j+1}2\pi/p$.  Meanwhile, for qubit $(4p-2)(n+1)-1$ the side-processor returns the bit
        \begin{align}
             s_{(4p-2)(n+1)-1} = \bigoplus_{\substack{k<(4p-2)(n+1)-1 \\ k = 0\textrm{ mod }2}}m_k.
        \end{align}
        The qubit is measured in the eigenbasis of $X(\theta_{(4p-2)(n+1)-1})$ where $\theta_{(4p-2)(n+1)-1} = (-1)^{s_{(4p-2)(n+1)-1}}2\pi n/p$.
        Each measurement outcome $m_j\in\mathbb{F}_2$ is returned to the side-processor, which then computes and returns the computational output
        \begin{align}
            y = \bigoplus_{j=1\textrm{ mod }2} m_j.
        \end{align}
\end{itemize}
It follows from Eq.~\ref{eq:adaptive_MBQC_su2} that $y=\mathsf{Mod}_{p,j}(\mathbf{x})$.
\end{proof}

\section{Quantum advantages via adaptive $l2$-MBQC}
\label{sec:classical_comparison}

The adaptive $l2$-MBQCs we have considered are capable of outperforming various classical computational models. The weakest such classical models are the noncontextual hidden variable models \cite{raussendorf2013contextuality, abramsky2017contextual} (denoted $\mathsf{NCHVM}$), which have the same the power as the mod-2 linear classical side-processor alone \cite{hoban2011generalized}.  The probability for the classical side-processor to compute a Boolean function $f:\mathbb{F}_2^n\rightarrow\mathbb{F}_2$ on an input $\mathbf{x}\in\mathbb{F}_2^n$ chosen uniformly at random can be written as a positive sum of conditional probabilities that is bounded as,
\begin{align}
        \frac{1}{2^n}\sum_{\mathbf{x}\in\mathbb{F}_2^n}\sum_{\substack{\mathbf{m}\in\mathbb{F}_2^{L_Q}\\\mathbf{o}^T\mathbf{m}=f(\mathbf{x})}} \mathrm{pr}(\mathbf{m}|\mathbf{s}=P\mathbf{x} + A\mathbf{m})\leq_{\mathsf{NCHVM}}\beta.
        \label{eq:adaptive_bell}
\end{align}
By Fine's theorem \cite{fine1982hidden}, this bound is proportional to the Hamming distance between $f$ and the closest linear Boolean function.   Eq.~(\ref{eq:walsh_hadamard_def}) thus implies that,
\begin{align}
    \beta= \frac{1 + \hat{f}_\mathrm{max}}{2}.
\end{align}
In the nonadaptive setting, Eq.~(\ref{eq:adaptive_bell}) corresponds to a so-called $(n,2,2)$ Bell scenario, which have been fully characterized in Ref.~\cite{werner2001all}.  Meanwhile, our adaptive $l2$-MBQC protocols give examples of Bell inequalities with an additional spatio-temporal ordering that lie beyond this characterization.

While a violation of the Bell inequality in Eq.~(\ref{eq:adaptive_bell}) shows that the $l2$-MBQC is in some sense nonclassical, Theorems~\ref{thm:aMQC_mod3} and \ref{thm:mod_p_mbqc} indicate a stronger separation in the context of circuit complexity that was first shown in Ref.~\cite{moore1999quantum}.  To understand the content of this theorem, we first define a few terms.  Let $\mathsf{QNC}^0$ denote the class of polynomial-size constant-depth quantum circuits consisting of gates with bounded fan-in.  Furthermore, let $\mathsf{QNC}^0[2]$ denote the same class of circuits, but with the assistance of a mod-2 linear classical side-processor (i.e., with oracle access to the function $\mathsf{Mod}_{2,0}:\mathbb{F}_2^m \rightarrow \mathbb{F}_2$ for arbitrarily large $m$) such that $T_Q+T_C$ is some constant, independent of $n$. Let $\mathsf{AC}^0$ denote the class of polynomial-size constant-depth classical circuits consisting of AND, OR, and NOT gates with unbounded fan-in and fan-out.  Furthermore, let $\mathsf{AC}^0[p]$ denote the same class of circuits, but with the oracle access to the function $\mathsf{Mod}_{p,0}:\mathbb{F}_2^m \rightarrow \mathbb{F}_2$ for arbitrarily large $m$.

All $l2$-MBQC algorithms discussed in this paper violate the inequality in Eq.~(\ref{eq:adaptive_bell}) and thus constitute a proof of the rather trivial separation $\mathsf{NCHVM}\subset \mathsf{QNC}^0[2]$ (c.f., Tab.~\ref{tab:resource_cost_table}). However, our construction in Sec.~\ref{sec:adaptive_n_n} can be used to prove a more powerful separation that concerns the notions of $\mathsf{AC}^0$-reducibility and $\mathsf{QNC}^0$-reducibility. A Boolean function $f$ is said to be $\mathsf{AC}^0$-reducible (respectively, $\mathsf{QNC}^0$-reducibile) to another function $g$ if a $\mathsf{AC}^0$ circuit (respectively, a $\mathsf{QNC}^0$ circuit) with oracular access to $g$ can compute $f$. Thm.~2 of Ref.~\cite{smolensky1987algebraic} showed that for every odd prime number $p$, $\mathsf{Mod}_{p,0}$ is not $\mathsf{AC}^0$-reducible to $\mathsf{Mod}_{2,0}$.  However, our construction in Thm.~\ref{thm:mod_p_mbqc} gives a $\mathsf{QNC}^0$ reduction.  Therefore, we have the following corollary. 
\begin{corollary}
For every prime number $p$ the function $\mathsf{Mod}_{p,0}$ is $\mathsf{QNC}^0$-reducible to $\mathsf{Mod}_{2,0}$.
\end{corollary}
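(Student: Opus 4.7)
The plan is to observe that the corollary for $p=2$ is immediate (a single oracle call to $\mathsf{Mod}_{2,0}$ suffices), so the substance lies in the odd prime case, which I would deduce directly from Theorem~\ref{thm:mod_p_mbqc}. The key observation is that an adaptive $l2$-MBQC is, almost by definition, a $\mathsf{QNC}^0$ circuit augmented with oracle calls to $\mathsf{Mod}_{2,0}$, provided $L_Q$ is polynomial in $n$ and $T_Q + T_C$ is constant; I just need to confirm these criteria for the construction of Theorem~\ref{thm:mod_p_mbqc}.

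First I would verify that the quantum resource is cheap enough: the required cluster state $|\mathrm{1DC}_{(4p-2)(n+1) - 1}\rangle$ is prepared by a depth-$3$ circuit of $H$ and nearest-neighbor $\mathrm{C}Z$ gates, both of bounded fan-in, acting on $O(pn)$ qubits, which is polynomial in $n$ for fixed $p$. Next I would check that each of the $T_C = 4p - 2$ adaptive rounds can be implemented in $O(1)$ depth: all the single-qubit rotations and measurements in a given round act on disjoint qubits and hence commute, while each classical update of a measurement basis $s_j = \bigoplus_{k<j,\, k = 0\textrm{ mod }2} m_k \oplus x_{\iota(j)}$ (and likewise the final readout $y = \bigoplus_{j = 1 \textrm{ mod } 2} m_j$) is a parity of $O(pn)$ bits, i.e., a single call to an unbounded fan-in $\mathsf{Mod}_{2,0}$ oracle of the kind permitted inside $\mathsf{QNC}^0[2]$.

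Composing these pieces gives a hybrid circuit of total depth $T_Q + T_C = 3 + (4p - 2) = 4p + 1$ and size $O(pn)$, both satisfying the constant-depth, polynomial-size requirements of a $\mathsf{QNC}^0$ circuit with $\mathsf{Mod}_{2,0}$ oracle access. The main conceptual obstacle is the one step I would treat with care rather than wave at: confirming that the measurement-based formalism used in Theorem~\ref{thm:mod_p_mbqc} fits the exact definition of a $\mathsf{QNC}^0$ oracle circuit, rather than a more liberal hybrid model. Since the entire feedback loop of $l2$-MBQC is channeled through parity functions of measurement outcomes and input bits, and these are precisely what a $\mathsf{Mod}_{2,0}$ oracle evaluates, this reduction is essentially a relabeling, which is why the corollary follows as directly as stated.
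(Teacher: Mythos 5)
Your proposal is correct and follows essentially the same route as the paper, which likewise derives the corollary directly from Theorem~\ref{thm:mod_p_mbqc} by noting that the adaptive $l2$-MBQC there has constant $T_Q+T_C=4p+1$, polynomial qubit count $\Theta(pn)$, and only mod-2 linear classical processing, i.e., it is a $\mathsf{QNC}^0$ circuit with $\mathsf{Mod}_{2,0}$ oracle access. Your explicit handling of the trivial $p=2$ case (which the theorem, stated for odd $p$, does not cover) is a small but welcome addition to what the paper leaves implicit.
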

\noindent Furthermore, Ref.~\cite{smolensky1987algebraic} also showed that for any two primes $p$ and $r$ with $p<r$, $\mathsf{Mod}_{r,0}$ is not $\mathsf{AC}^0$-reducible to $\mathsf{Mod}_{p,0}$.  However, every prime number $r$ is $\mathsf{QNC}^0$-reducible to $\mathsf{Mod}_{2,0}$.  Hence, we have the following theorem.
\begin{corollary}
For every prime number $p$, $\mathsf{QNC}^0[2]\not\subset \mathsf{AC}^0[p]$.
\end{corollary}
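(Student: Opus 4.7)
The plan is to exhibit a single Boolean function that witnesses the separation: namely, for any fixed prime $p$, choose a prime $r$ with $r > p$ and consider the function $\mathsf{Mod}_{r,0}$. I will argue that $\mathsf{Mod}_{r,0}$ lies in $\mathsf{QNC}^0[2]$ but outside $\mathsf{AC}^0[p]$, which immediately yields $\mathsf{QNC}^0[2]\not\subset\mathsf{AC}^0[p]$.

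The first step is to certify that $\mathsf{Mod}_{r,0}\in\mathsf{QNC}^0[2]$. This is the previous corollary applied at prime $r$: Theorem~\ref{thm:mod_p_mbqc} produces an adaptive $l2$-MBQC on the 1D cluster state with $L_Q=\Theta(rn)$, $L_C=n+2$, $T_Q=3$ and $T_C=4r-2$, so for fixed $r$ the total depth $T_Q+T_C$ is a constant and the qubit count is polynomial in $n$. What remains is to wrap this MBQC into a bona fide $\mathsf{QNC}^0[2]$ computation: the cluster state is prepared in depth $3$ by Hadamards and nearest-neighbour $\mathrm{C}Z$ gates, the single-qubit adaptive measurements in bases $X(\theta_j)$ are realised by Clifford rotations followed by computational-basis measurements, and the mod-$2$ linear classical bookkeeping in each of the $O(1)$ adaptive rounds is a single call to the parity oracle. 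Thus the whole protocol fits in $\mathsf{QNC}^0[2]$.

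The second step is to rule out $\mathsf{Mod}_{r,0}\in\mathsf{AC}^0[p]$. This is exactly the content of Smolensky's theorem invoked in the paragraph preceding the corollary: for distinct primes $p<r$, $\mathsf{Mod}_{r,0}$ is not $\mathsf{AC}^0$-reducible to $\mathsf{Mod}_{p,0}$, so no polynomial-size constant-depth family of AND/OR/NOT/$\mathsf{Mod}_{p,0}$ circuits can compute $\mathsf{Mod}_{r,0}$ on inputs of every length. Combining the two bullets at the chosen $r$ furnishes a language in $\mathsf{QNC}^0[2]\setminus\mathsf{AC}^0[p]$, completing the proof.

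The substantive content is already in place, so I do not expect a true obstacle; the only care required is bookkeeping around the definitions. In particular, one must check that the $l2$-MBQC in Theorem~\ref{thm:mod_p_mbqc} really compiles to a $\mathsf{QNC}^0[2]$ circuit with $T_Q+T_C=O(1)$ independent of $n$, which is immediate once $r$ is fixed, and that the oracle query convention used in $\mathsf{AC}^0[p]$ in the invoked Smolensky bound matches the one used in defining $\mathsf{QNC}^0[2]$ so that the comparison is apples to apples. With those conventions aligned, the corollary follows with no further computation.
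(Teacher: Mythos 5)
Your proposal is correct and follows essentially the same route as the paper: pick a prime $r>p$, place $\mathsf{Mod}_{r,0}$ in $\mathsf{QNC}^0[2]$ via the constant-time cluster-state construction of Theorem~\ref{thm:mod_p_mbqc}, and exclude it from $\mathsf{AC}^0[p]$ via Smolensky's non-reducibility theorem. Your write-up is somewhat more explicit than the paper's about compiling the $l2$-MBQC into a $\mathsf{QNC}^0[2]$ circuit, but the argument is the same.
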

\begin{proof}
For any prime number $p$ consider the next prime number $r$ with $p<r$.  Since $\mathsf{Mod}_{r,0}$ is not $\mathsf{AC}^0$-reducible to $\mathsf{Mod}_{p,0}$, it lies outside the class $\mathsf{AC}^0[p]$.  Therefore, for any $p$ there is always a function that lives in the class $\mathsf{QNC}^0[2]$ but not in $\mathsf{AC}^0[p]$.  Hence, $\mathsf{QNC}^0[2]\not\subset \mathsf{AC}^0[p]$.
\end{proof}
\noindent The above separation was originally proven in Ref.~\cite{moore1999quantum}.  As shown in Tab.~\ref{tab:resource_cost_table}, our construction gives an overall better scaling in the number of mid-circuit measurements required, namely $\Theta(p)$ versus $O(p^2\log^3(p))$.  We also remark that  Ref.~\cite{takahashi2016collapse} demonstrates the more powerful separation $\mathsf{ACC}^0\subseteq \mathsf{QNC}^0[2]$ where $\mathsf{ACC}^0 = \cup_{p} \mathsf{AC}^0[p]$.

Recently, nonadaptive $l2$-MBQCs that use a cluster state resource and no post processing have been leveraged to demonstrate separations between the class of constant-depth quantum circuits (denoted $\mathsf{QNC}^0$) and constant-depth classical circuits of gates with unbounded fan-out and either bounded fan-in ($\mathsf{NC}^0$ \cite{bravyi2018quantum}) or unbounded fan-in ($\mathsf{AC}^0$ \cite{watts2019exponential, grier2019interactive}).  It remains an open problem to show whether or not $\mathsf{QNC}^0\not\subset\mathsf{AC}^0[p]$.  While Ref.~\cite{grier2019interactive} showed $\mathsf{QNC}^0\not\subset\mathsf{AC}^0[6]$, our techniques may be useful in going beyond this.  A first step might be to understand whether our construction in Thm.~\ref{thm:aMQC_mod3} can give a strategy for a game akin to the parity bending problem of \cite{watts2019exponential} when we remove the mod-2 linear side-processor.

\section{Conclusion and outlook}
\label{sec:conclusion_outlook}

As summarized in the flow chart of Fig~\ref{fig:flow_chart}, we have shown how adaptive $l2$-MBQC with 1D cluster states leads to exponential reductions in the space-time resources required to compute symmetric Boolean functions.  Namely, we showed how the family of mod-$p$ functions can be computed using a linear-size 1D cluster state with only a constant number of queries to a mod-2 linear classical side-processor.  Our constructions are based on an extension of the quantum signal processing technique used in Ref.~\cite{maslov2020quantum} to compute Boolean functions via 1QC.  While our $l2$-MBQCs are guaranteed to be contextual, violating a corresponding noncontextuality inequality in Eq.~(\ref{eq:adaptive_bell}), we also sketch how our results recover the stronger separation against constant-depth classical circuits  $\mathsf{QNC}^0[2]\not\subset\mathsf{AC}^0[p]$ for every prime number $p$, which was first shown in Ref.~\cite{moore1999quantum}.   We also remark that this work sheds light on a number of new examples of contextual quantum computations with nontrivial temporal order that may be of interest in further developing the cohomological framework in \cite{raussendorf2022putting}.  Though all the algorithms we have studied here are deterministic, a detailed analysis of the bounded error case \cite{hoyer2005quantum, takahashi2016collapse, mori2018periodic} may give further insight to the role of contextuality in these constructions.  In particular, it would be interesting to see if the known algorithms saturate the upper bound on the success probability for an $l2$-MBQC given by the so-called contextual fraction \cite{abramsky2017contextual, okay2018cohomological}.

In terms of an interesting near-term implementation, we remark that our construction in Thm.~\ref{thm:mod_p_mbqc} gives the best known scaling for the number of resources required to compute the mod-$p$ functions.  Several recent works have used small nonadaptive $l2$-MBQCs of Boolean functions to benchmark near-term devices \cite{demirel2021correlations, sheffer2022playing, daniel2022quantum}.  It may be interesting to implement our simplest nontrivial example (i.e., the function $\mathsf{Mod}_{3,0}:\mathbb{F}_2^4\rightarrow\mathbb{F}_2$ via the algorithm in Theorem~\ref{thm:aMQC_mod3}) to characterize the quality of a device capable of performing mid-circuit measurements. Our protocol using cluster states only requires 21 qubits whereas a quick calculation shows that the periodic Fourier decomposition, the OR-reduction, and Moore's counting circuit require 29, 34, and 40 qubits, respectively. Furthermore, while our constant-depth implementation of $\mathsf{Mod}_{p,0}:\mathbb{F}_2^n\rightarrow\mathbb{F}_2$ has better resource scaling than previous methods (c.f. Tab.~\ref{tab:resource_cost_table}), we do not know if it is optimal.  We wonder if there a figure of merit similar to the periodic Fourier sparsity that bounds the number of conditional $\mathrm{U}(2)$ transformations required to represent a particular Boolean function. Perhaps ideas from optimal synthesis of single-qubit unitaries may be of use \cite{kliuchnikov2013fast, forest2015exact}.

Another interesting direction is developing an understanding of resource states for $l2$-MBQC from the perspective of condensed matter physics. Ground states of many-body Hamiltonians with 1D symmetry-protected topological order (a quantum phase of matter having similar properties to the cluster state) have been shown to be useful for a variety of tasks in $l2$-MBQC \cite{daniel2021quantum}.  A similar advantage was extended to a whole ferromagnetic phase in Ref.~\cite{bulchandani2022playing}.  We wonder if ground states of a quantum phase of matter might have uniform capability to perform adaptive $l2$-MBQC tasks.  It has been noted that the ability of 1D SPTO ground states to be converted to ``GHZ-like" states via local measurements, a key property for a measurement-based implementation of quantum fan-out, is a general property of 1D SPTO phases \cite{tantivasadakarn2021long, verresen2021efficiently}.

\acknowledgments
We thank Alex Fischer and Benjamin Goff for reading the manuscript and providing feedback.
This work was supported partially by the National Science Foundation STAQ Project (PHY-1818914), PHY-1915011.  Support is also acknowledged from the U.S. Department of Energy, Office of Science, National Quantum Information Science Research Centers, Quantum Systems Accelerator.

\begin{widetext}

\appendix

\section{Details on Boolean functions}
\label{sec:boolean_appendix}

\subsection{Algebraic normal forms for the mod-$p$ functions}
\label{sec:mod3ANF}

The algebraic normal form of any symmetric Boolean function can be expressed as a combination of the complete $\mu$-tic functions $C_n^\mu(\mathbf{x})$, which have the ANFs
\begin{align}
    C_n^\mu(\mathbf{x}) = \bigoplus_{\substack{S\subseteq [n] \\ |S|=\mu}} \prod_{j\in S} x_j.
\end{align}
In terms of these functions, we have the following.
\begin{lemma}
The ANFs for the $\mathrm{mod}$-3 functions are
\begin{align}
    \mathsf{Mod}_{3,j}(\mathbf{x}) &= \bigoplus_{\substack{\mu=0 \\ \mu \neq 3-j \textrm{ }\mathrm{mod}\textrm{ }3}}^n C_n^\mu(\mathbf{x}). \label{eq:mod_3_anf_j}
\end{align}
\end{lemma}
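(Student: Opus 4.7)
The plan is to verify the identity as an equality of Boolean functions on $\mathbb{F}_2^n$; since the ANF of a Boolean function is unique, this suffices. Both sides of Eq.~(\ref{eq:mod_3_anf_j}) are manifestly symmetric, so it is enough to compare their values at each Hamming weight $w = |\mathbf{x}| \in \{0, \ldots, n\}$. Invoking the standard identity $C_n^\mu(\mathbf{x}) \equiv \binom{w}{\mu} \pmod{2}$ (exactly $\binom{w}{\mu}$ of the size-$\mu$ subsets of $[n]$ lie entirely in the support of $\mathbf{x}$), the right-hand side reduces to $T_j(w) := \sum_{\mu \not\equiv 3-j\,(\mathrm{mod}\,3)} \binom{w}{\mu} \bmod 2$. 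Writing $S_r(w) := \sum_{\mu \equiv r\,(\mathrm{mod}\,3)} \binom{w}{\mu}$ and using $\sum_{\mu} \binom{w}{\mu} = 2^w$, I would then rewrite this as $T_j(w) \equiv 2^w - S_{3-j}(w) \pmod{2}$.

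The technical heart of the argument is the parity of $S_r(w)$. Pascal's rule gives the linear recurrence $S_r(w+1) = S_r(w) + S_{r-1\,\mathrm{mod}\,3}(w)$, with initial data $(S_0, S_1, S_2)(0) = (1, 0, 0)$. Iterating mod 2 for a handful of steps exposes a period-3 pattern beginning at $w=1$: the triple $(S_0, S_1, S_2)(w) \bmod 2$ cycles through $(1,1,0), (1,0,1), (0,1,1)$ as $w \bmod 3$ runs through $1, 2, 0$. Equivalently, for all $w \ge 1$, $S_r(w) \equiv [w \not\equiv -r \,(\mathrm{mod}\,3)] \pmod{2}$. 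Substituting $r = 3-j$ (noting $-r \equiv j \pmod{3}$) yields $T_j(w) \equiv [w \not\equiv j\,(\mathrm{mod}\,3)] = \mathsf{Mod}_{3,j}(w)$ for every $w \ge 1$. The base case $w = 0$ must be checked separately because $2^0 = 1$ shifts the parity; a direct computation gives $T_j(0) = 1 - [j = 0] = \mathsf{Mod}_{3,j}(0)$.

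The main obstacle is essentially bookkeeping around the $w = 0$ case, where the parity of $2^w$ flips relative to $w \ge 1$ and so the simple formula $T_j(w) \equiv S_{3-j}(w) \pmod{2}$ cannot be applied uniformly. As an alternative route, the same period-3 behavior of $S_r \bmod 2$ follows from the roots-of-unity filter $S_r(w) = \tfrac{1}{3}\sum_{k=0}^{2} \omega^{-rk}(1+\omega^k)^w$ with $\omega = e^{2\pi i/3}$, simplifying via $1+\omega = -\omega^2$ and $1+\omega^2 = -\omega$ before reducing modulo 2. Either way, the period-3 parity pattern of $S_r$ mirrors the period-3 truth table of $\mathsf{Mod}_{3,j}$, closing the proof.
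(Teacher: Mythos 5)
Your proof is correct, but it takes a genuinely different route from the one in the paper. The paper argues by induction on $n$: it verifies the $n=3$ base case directly, uses the Shannon-type expansion $\mathsf{Mod}_{3,j}(x_1,\ldots,x_{n+1}) = (1\oplus x_{n+1})\,\mathsf{Mod}_{3,j}(x_1,\ldots,x_n)\oplus x_{n+1}\,\mathsf{Mod}_{3,j-1}(x_1,\ldots,x_n)$, and then collapses terms with the recursion $C_{n+1}^\mu = C_n^\mu \oplus x_{n+1} C_n^{\mu-1}$. You instead fix $n$ and verify the identity pointwise in the Hamming weight $w$, using uniqueness of the ANF, the evaluation $C_n^\mu(\mathbf{x}) \equiv \binom{w}{\mu} \pmod 2$, and the period-3 parity pattern of the residue-class binomial sums $S_r(w)$ driven by Pascal's rule $S_r(w+1)=S_r(w)+S_{r-1}(w)$ from the seed $(1,0,0)$. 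All the steps check out, including your careful handling of the $w=0$ case where the parity of $2^w$ breaks the uniform formula, and the period-3 cycle $(1,1,0),(1,0,1),(0,1,1)$ is correct. What your approach buys is directness and a cleaner path to generalization: the same roots-of-unity filter works verbatim for $\mathsf{Mod}_{p,j}$ with any $p$, whereas the paper's induction is tailored to the mod-3 recursion (and, notably, the paper's subsequent Lemma on degree-$n$ ANFs for general $p$ essentially re-derives your recurrence in the dual form $\mathbf{a}_\mu = M\mathbf{a}_{\mu-1}$ on the ANF coefficients). What the paper's induction buys is that it constructs the ANF coefficient by coefficient, making the recursive structure of the coefficients explicit, which it then reuses in the follow-up lemma; your argument treats the stated ANF as a candidate to be verified rather than derived.
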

\begin{proof}
The proof works by induction.  As a base case it is easily checked that $\mathsf{Mod}_{3,0}(x_1,x_2,x_3) = x_1 \oplus x_2 \oplus x_3 \oplus x_1x_2 \oplus x_1x_3 \oplus x_2 x_3$, $\mathsf{Mod}_{3,1}(x_1,x_2,x_3) = 1\oplus x_1 \oplus x_2 \oplus x_3 \oplus x_1x_2x_3$, and $\mathsf{Mod}_{3,2}(x_1,x_2,x_3) = 1\oplus x_1x_2 \oplus x_1x_3 \oplus x_2 x_3 \oplus x_1x_2x_3$.  For the induction step we will make use of the fact that
\begin{align}
\mathsf{Mod}_{3,j} (x_1,\ldots,x_{n+1}) &= (1\oplus x_{n+1}) \mathsf{Mod}_{3,j}(x_1,\ldots,x_n) \oplus x_{n+1} \mathsf{Mod}_{3,j-1}(x_1,\ldots,x_n).
\end{align}
In the above, the superscript is taken modulo 3.
Upon substituting the corresponding ANFs into the above equations, we obtain
\begin{align}
\mathsf{Mod}_{3,j} (x_1,\ldots,x_{n+1}) &= (1\oplus x_{n+1})  \bigoplus_{\substack{\mu=0 \\ \mu \neq 3-j \textrm{ mod }3}}^n C_n^\mu(\mathbf{x}) \oplus x_{n+1}  \bigoplus_{\substack{\mu=0 \\ \mu \neq 1-j \textrm{ mod }3}}^n C_n^\mu(\mathbf{x}).
\end{align}
Expanding the above equations by collecting terms in powers of $x_{n+1}$ gives
\begin{align}
\mathsf{Mod}_{3,j} (x_1,\ldots,x_{n+1}) &=   \bigoplus_{\substack{\mu=0 \\ \mu \neq 3-j \textrm{ mod }3}}^n C_n^\mu(\mathbf{x}) \oplus x_{n+1}  \bigoplus_{\substack{\mu=0 \\ \mu \neq 2-j \textrm{ mod }3}}^n C_n^\mu(\mathbf{x}).
\end{align}
We can reindex the second sum in each equation to sum over a variable $\nu = \mu + 1$, giving
\begin{align}
\mathsf{Mod}_{3,j} (x_1,\ldots,x_{n+1}) &=   \bigoplus_{\substack{\mu=0 \\ \mu \neq 3-j \textrm{ mod }3}}^n C_n^\mu(\mathbf{x}) \oplus x_{n+1}  \bigoplus_{\substack{\nu=0 \\ \nu \neq 3-j \textrm{ mod }3}}^n C_n^{\nu-1}(\mathbf{x}).
\end{align}
Finally, using the fact that
\begin{align}
C_{n+1}^\mu (x_1,\ldots,x_{n+1}) = C_n^{\mu}(x_1,\ldots,x_n)\oplus x_{n+1}C_{n}^{\mu-1}(x_1,\ldots,x_n),
\end{align}
the terms in each sum can be combined to give
\begin{align}
\mathsf{Mod}_{3,j} (x_1,\ldots,x_{n+1}) &= \sum_{\substack{\mu=0 \\ \mu \neq 3-j \textrm{ mod }3}}^{n+1} C_{n+1}^\mu(x_1,\ldots,x_{n+1}).
\end{align}
Thus, by induction Eq.~(\ref{eq:mod_3_anf_j}) holds.
\end{proof}

\begin{lemma}
For each positive integer $n$ and prime number $p$ at least one of the functions $\mathsf{Mod}_{p,j}:\mathbb{F}_2^n\rightarrow\mathbb{F}_2$ has degree-$n$ ANF.
\end{lemma}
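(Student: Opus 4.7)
The plan is to show directly that the coefficient of the top monomial $x_1 x_2 \cdots x_n$ in the ANF of $\mathsf{Mod}_{p,j}$ can be made nonzero by appropriate choice of $j$. By the Reed--Muller (Möbius) inversion formula on the Boolean lattice, the ANF coefficient indexed by $S = [n]$ is
\begin{align}
a_{[n]}^{(j)} = \bigoplus_{T \subseteq [n]} \mathsf{Mod}_{p,j}(\mathbf{1}_T).
\end{align}
Since $\mathsf{Mod}_{p,j}$ is symmetric, this collapses to a sum of binomial coefficients,
\begin{align}
a_{[n]}^{(j)} \equiv \sum_{k=0}^n \binom{n}{k}\,\mathbb{1}[k \not\equiv j \bmod p] \pmod 2.
\end{align}
For $n \geq 1$, $\sum_{k=0}^n \binom{n}{k} = 2^n$ is even, so $a_{[n]}^{(j)} \equiv S_j \pmod 2$ where $S_j := \sum_{k \equiv j \bmod p}\binom{n}{k}$. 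The entire lemma thus reduces to showing that at least one $S_j$ is odd.

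The key trick is to package the residues $\{S_j \bmod 2\}_{j=0}^{p-1}$ as the coefficients of a single polynomial in $\mathbb{F}_2[x]$. Using $x^p \equiv 1 \pmod{x^p + 1}$, one obtains
\begin{align}
(1+x)^n \bmod (x^p+1) \;=\; \sum_{j=0}^{p-1} (S_j \bmod 2)\, x^j \quad\text{in }\mathbb{F}_2[x]/(x^p+1).
\end{align}
Hence showing some $S_j$ is odd is equivalent to showing that $(x^p+1) \nmid (1+x)^n$ in $\mathbb{F}_2[x]$.

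For the nondivisibility, I would factor $x^p+1 = (x+1)\,\phi(x)$ in $\mathbb{F}_2[x]$, where $\phi(x) = 1 + x + \cdots + x^{p-1}$. The only irreducible factor of $(1+x)^n$ is $x+1$, so it suffices to show that $x+1$ and $\phi(x)$ are coprime. Evaluating, $\phi(1) = p$, and since $p$ is an odd prime this equals $1$ in $\mathbb{F}_2$, confirming $x+1 \nmid \phi(x)$. Therefore $\phi(x) \nmid (1+x)^n$, so $(x^p+1) \nmid (1+x)^n$, finishing the proof.

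The main obstacle is really just the first step: translating the ANF question into a sum of binomial coefficients via Möbius inversion and then packaging those sums into a polynomial divisibility statement; once in that form, the odd-prime hypothesis enters in a single line via $\phi(1) = p$, and everything else is routine. I would also include a brief remark that the edge case $p=2$ must be handled separately (since $\mathsf{Mod}_{2,j}$ is linear and therefore has degree $n$ only when $n \leq 1$), consistent with the paper's focus on odd primes where $\mathsf{Mod}_{p,j}$ lies beyond the reach of the mod-$2$ classical side-processor.
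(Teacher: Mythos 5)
Your proof is correct, but it reaches the conclusion by a genuinely different route than the paper. The paper works with the expansion of $\mathsf{Mod}_{p,j}$ in the complete $\mu$-tic functions $C_n^\mu$, derives the recurrence $a_{\mu+1}^{(j)} = a_\mu^{(j)} \oplus a_\mu^{(j-1)}$ for the coefficients of all $p$ functions simultaneously, encodes it as iteration of the circulant matrix $M = I^{(p)} \oplus X^{(p)}$ over $\mathbb{F}_2$ starting from $\mathbf{a}_0=(0,1,\dots,1)^T$, and argues the orbit never reaches $\mathbf{0}$ because $\ker M = \{\mathbf{0},\mathbf{1}\}$ while $\mathbf{1}$ is not in the image of $M$. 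You instead extract the single top coefficient in closed form via M\"obius inversion, reduce it to the parity of the binomial sums $S_j$, and package those as $(1+x)^n \bmod (x^p+1)$ over $\mathbb{F}_2$, with nonvanishing following from $\phi(1)=p$ being odd. The two arguments are secretly the same algebra --- $M$ is multiplication by $1+x$ on $\mathbb{F}_2[x]/(x^p+1)$, and $(1+x)^n(1+\phi(x)) \equiv (1+x)^n$ for $n\ge 1$, so your $S_j$ vector is exactly the paper's $\mathbf{a}_n$ --- but your presentation has two advantages: it gives a closed form rather than an induction, and it isolates in a single line exactly where oddness of $p$ enters, making manifest that the lemma holds for every odd $p$ (prime or not), which matches the generality actually needed in Lemma~\ref{lem:nc1QCmod_p}. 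The paper's route earns its extra length by also producing the full coefficient recurrence, which it reuses to derive the explicit ANF of the mod-3 functions. Your remark on $p=2$ is a genuine catch: the lemma as stated fails for $p=2$ and $n\ge 2$ (both $\mathsf{Mod}_{2,j}$ are affine), and the paper's own argument silently breaks there because $\mathbf{1}$ \emph{is} in the image of $M$ when $p$ is even.
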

\begin{proof}
Taking the ansatz that the ANF for the mod-$p$ functions is
\begin{align}
    \mathsf{Mod}_{p,j}(\mathbf{x}) = \bigoplus_{\mu=0}^n a_\mu^{(j)} C_n^\mu(\mathbf{x})
\end{align}
where each $a_\mu^{(j)}\in\mathbb{F}_2$ and applying the same steps as in the above induction proof we obtain the relation
\begin{align}
    a_{\mu+1}^{(j)} = a_{\mu}^{(j)} \oplus a_{\mu}^{(j-1)} 
\end{align}
where the superscripts are taken modulo $p$.  Arranging the coefficients $a_\mu^{(j)}$ into a length-$p$ binary vector for each $\mu$, denoted $\mathbf{a}_\mu$ we obtain the update rule $\mathbf{a}_\mu = M \mathbf{a}_{\mu-1}$ where $M = I^{(p)} \oplus X^{(p)}$ where $I^{(p)}$ is the $p\times p$ identity matrix and is the $p\times p$ generalized Pauli-$X$ matrix (i.e., $X^{(p)} = \sum_{j=0}^{p-1} |j+1\rangle \langle j|$).  Since the base case is given by $\mathbf{a}_{0} = (0,1,1,\ldots,1)^T$, one can explicitly calculate the coefficients in general.  They are
\begin{align}
    a_\mu^{(j)} = 2^{\mu} \left[1 - \frac{1}{p} \sum_{k=0}^{p-1} \cos^{\mu}\left(\frac{k\pi}{p} \right) e^{i\frac{k\pi}{p} (\mu - 2j)} \right] \textrm{ mod }2.
\end{align}
Although, calculating these quantities modulo-2 quickly becomes difficult due to the factor of $2^\mu$ in front.
Meanwhile, it can be checked that over the field $\mathbb{F}_2$ the matrix $M$ has a one dimensional null space spanned by the vector $\mathbf{v}_\mathrm{null} = (1,1,\ldots,1)^T$.  It can further be check that there is no such vector $\mathbf{b}\in\mathbb{F}_2^p$ such that $M\mathbf{b}=\mathbf{v}_\mathrm{null}$.  Hence, for each natural number $\mu$ the vector $\mathbf{a}_\mu$ will always have at least one nonzero entry.  Therefore, for each positive integer $n$ and prime number $p$, at least one of the functions $\mathsf{Mod}_{p,j}:\mathbb{F}_2^n\rightarrow\mathbb{F}_2$ has degree-$n$ ANF. 
\end{proof}

\subsection{Functions with maximal periodic Fourier sparsity: A proof of Lemma~\ref{lem:pfslem_label}}
\label{sec:proof_of_ntic_lemma}

\pfslem*
\begin{proof}
By Eq.~(\ref{eq:periodic_fourier}), the periodic Fourier decomposition of a general function $f$ must satisfy
\begin{align}
\exp\left({i\pi\sum_{\mathbf{p}\in\mathbb{F}_2^n} (\mathbf{p}\cdot\mathbf{x}) \phi_\mathbf{p}}\right) = (-1)^{f(\mathbf{x})+f(\mathbf{0})}.
\end{align}
Notice that we have factored out a $\pi$ from each of the $2^n$ real angles $\phi_\mathbf{p}$.  If any linear function $\mathbf{p}\cdot\mathbf{x}$ is unnecessary for the quantum protocol, a solution will yield $\phi_\mathbf{p}\in\mathbb{Z}$.  The corresponding measurements would either have no effect or can accounted for via mod-2 linear post-processing. 

By the Mobius inversion formula, the angles $\phi_\mathbf{p}$ can be related to the coefficients $a_\mathbf{y}$ of the ANF.  We do this by first observing that 
\begin{align}
\prod_{\mathbf{x}\subseteq\mathbf{y}} e^{i\pi\sum_{\mathbf{p}\in\mathbb{F}_2^n} (\mathbf{p}\cdot\mathbf{x}) \phi_\mathbf{p}} &= \prod_{\mathbf{x}\subseteq\mathbf{y}}(-1)^{f(\mathbf{x})} \\
\Rightarrow  \exp\left\{i\pi\sum_{\mathbf{p}\in\mathbb{F}_2^n}\left(\sum_{\mathbf{x}\subseteq\mathbf{y}} (\mathbf{p}\cdot\mathbf{x})\right) \phi_\mathbf{p}\right\} &= (-1)^{a_\mathbf{y}}.
\end{align}
  Note that the term in the sum on the left hand side corresponding to $\mathbf{p}=\mathbf{0}$ contributes nothing to the sum.  Therefore, the angles $\phi_\mathbf{p}$ are solutions to the system of linear equations
\begin{align}
\sum_{\substack{\mathbf{p}\in\mathbb{F}_2^n\\ \mathbf{p}\neq \mathbf{0}}}M_{\mathbf{y},\mathbf{p}}\phi_\mathbf{p} = a_\mathbf{y} + 2\mathbb{Z},
\end{align}
where $M_{\mathbf{y},\mathbf{p}}=\sum_{\mathbf{x}\subseteq\mathbf{y}} (\mathbf{p}\cdot\mathbf{x})$.  Furthermore, the $2\mathbb{Z}$ on the right hand side denotes that we may add any even integer to the $a_\mathbf{y}$ and still obtain a valid periodic Fourier decomposition of the function.  We thus wish to determine the maximum number of angles $\phi_\mathbf{p}\in\mathbb{Z}$ that we can obtain from some choice of inhomogenous term in the above equations.

To solve this linear system, we first derive a general expression for the matrix $M$ and then determine an expression for it's inverse.  First, notice that
\begin{align}
\sum_{\mathbf{x}\subseteq\mathbf{y}} (\mathbf{p}\cdot\mathbf{x}) =
\begin{cases}
0 \textrm{ if }\mathbf{p}\cdot_\mathbb{R}\mathbf{y}=0 \\
2^{|\mathbf{y}|-1} \textrm{ if }\mathbf{p}\cdot_\mathbb{R}\mathbf{y}\neq 0
\end{cases},
\end{align}
where the $\cdot_\mathbb{R}$ denotes that the arithmetic performed in the dot product is over the reals (i.e., $\mathbf{p}\cdot_\mathbb{R}\mathbf{y} = p_1y_1 + \cdots + p_ny_n$).  As a slight abuse of notation, let $\mathbf{p}\cap\mathbf{y} = \textrm{supp}(\mathbf{p})\cap\textrm{supp}(\mathbf{y})$ where $\textrm{supp}(\mathbf{p}) = \{j\in[n]~|~p_j=1\}$.  Then the matrix $M$ can be expressed as
\begin{align}
M_{\mathbf{y},\mathbf{p}} = 2^{|\mathbf{y}|-1}\chi_{\mathbf{p}\cap\mathbf{y}},
\end{align}
where $\chi_{\mathbf{s}\cap\mathbf{y}}=0$ if $\mathbf{s}\cap\mathbf{y} = \varnothing$ and $\chi_{\mathbf{s}\cap\mathbf{y}}=1$ if $\mathbf{s}\cap\mathbf{y} \neq \varnothing$.  The matrix $\chi_{\mathbf{p}\cap\mathbf{y}}$ has the same fractal structure as the Sierpinski gasket, which can be generated by cellular automata rule 195.  Furthermore, this matrix has inverse given by
\begin{align}
\left[M^{-1}\right]_{\mathbf{p},\mathbf{y}} = \frac{(-1)^{\mathbf{p}\cdot\mathbf{y}+1}}{2^{|\mathbf{y}|-1}}(1-\chi_{(\mathbf{p}\oplus\mathbf{1})\cap(\mathbf{y}\oplus\mathbf{1})}).
\label{eq:sierpinski_inverse}
\end{align}
We explicitly show that this is the inverse in Appendix~\ref{sec:proof_sierpinski_inverse}.

Finally, we can use this inverse to find a general expression for the angles $\phi_\mathbf{p}$.  We have that
\begin{align}
\phi_\mathbf{p} = \sum_{\substack{\mathbf{y}\in\{0,1\}^n\\ \mathbf{y}\neq\mathbf{0}}} \frac{(-1)^{\mathbf{p}\cdot\mathbf{y}+1}}{2^{|\mathbf{y}|-1}}(1-\chi_{(\mathbf{p}\oplus\mathbf{1})\cap(\mathbf{y}\oplus\mathbf{1})}) a_\mathbf{y}.
\end{align}
It then follows that if $a_\mathbf{1}=1$, then $\forall\mathbf{p}\in\{0,1\}^n\backslash\mathbf{0}$, $\phi_\mathbf{p}\not\in\mathbb{Z}$.

We prove this by contradiction.  Suppose that $a_\mathbf{1}=1$ and that there is some $\mathbf{p}\in\mathbb{F}_2^n\backslash\mathbf{0}$ such that $\phi_\mathbf{p}\in\mathbb{Z}$.  Notice that for every $\mathbf{p}\in\{0,1\}^n\backslash\mathbf{0}$, $\chi_{\mathbf{p}\cap\mathbf{1}}=1$ and thus $\chi_{(\mathbf{p}\oplus\mathbf{1})\cap\mathbf{0}}=0$.  Therefore, for every $\mathbf{p}\in\mathbb{F}_2^n\backslash\mathbf{0}$ we have that
\begin{align}
 2^{n-1}\phi_\mathbf{p}&= 2^{n-1} \sum_{\substack{\mathbf{y}\in\{0,1\}^n\\ \mathbf{y}\neq\mathbf{0},\mathbf{1}}} [M^{-1}]_{\mathbf{p},\mathbf{y}} a_\mathbf{y} + (-1)^{|\mathbf{p}|+1} \\
&= 2m\pm 1
\end{align}
for some $m\in\mathbb{Z}$.  Now if $\phi_\mathbf{p}\in\mathbb{Z}$, then $2^{n-1}\phi_\mathbf{s}\in 2\mathbb{Z}$. However, the right hand side of the equation is an odd number.  Contradiction.  Therefore, $\forall\mathbf{p}\in\{0,1\}^n\backslash\mathbf{0}$, $\phi_\mathbf{p}\not\in\mathbb{Z}$.  Hence, any Boolean function containing the monomial $x_1\cdots x_n$ in its ANF has periodic Fourier sparsity $\hat{p}_f=2^n-1$.
\end{proof}

\subsection{Proof of Eq.~\eqref{eq:sierpinski_inverse} \label{sec:proof_sierpinski_inverse}}

\begin{lemma}
Let $M_{\mathbf{y},\mathbf{s}} = 2^{|\mathbf{y}|-1}\chi_{\mathbf{s}\cap\mathbf{y}},$, then $\left[M^{-1}\right]_{\mathbf{s},\mathbf{y}} = \frac{(-1)^{\mathbf{s}\cdot\mathbf{y}+1}}{2^{|\mathbf{y}|-1}}(1-\chi_{(\mathbf{s}\oplus\mathbf{1})\cap(\mathbf{y}\oplus\mathbf{1})})$.
\end{lemma}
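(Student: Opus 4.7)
The plan is to verify $M M^{-1} = I$ entrywise by reparameterizing the sum over $\mathbf{s}$ in a way that makes the indicator factors trivial. Writing
\begin{align}
(MM^{-1})_{\mathbf{y},\mathbf{y}'} = 2^{|\mathbf{y}|-|\mathbf{y}'|}\sum_{\mathbf{s}\neq\mathbf{0}} \chi_{\mathbf{s}\cap\mathbf{y}}\,(-1)^{\mathbf{s}\cdot\mathbf{y}'+1}\bigl(1-\chi_{(\mathbf{s}\oplus\mathbf{1})\cap(\mathbf{y}'\oplus\mathbf{1})}\bigr),
\end{align}
the factor $1-\chi_{(\mathbf{s}\oplus\mathbf{1})\cap(\mathbf{y}'\oplus\mathbf{1})}$ is nonzero exactly when the support of $\mathbf{s}$ contains that of $\mathbf{y}'\oplus\mathbf{1}$, so I would write $\mathbf{s} = \mathbf{t} \cup (\mathbf{y}'\oplus\mathbf{1})$ with $\mathbf{t}\subseteq\mathbf{y}'$. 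Under this parameterization $\mathbf{s}\cdot\mathbf{y}' = |\mathbf{t}|$ since only positions in $\mathrm{supp}(\mathbf{y}')$ contribute, so the sign reduces to $(-1)^{|\mathbf{t}|+1}$ and the sum turns into a sum over $\mathbf{t}\subseteq\mathbf{y}'$.

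Next I would split into three cases based on how $\mathbf{y}$ relates to $\mathbf{y}'$. In the diagonal case $\mathbf{y}=\mathbf{y}'$, the condition $\chi_{\mathbf{s}\cap\mathbf{y}}=1$ collapses to $\mathbf{t}\neq\mathbf{0}$, and the sum evaluates to $\sum_{\mathbf{0}\neq\mathbf{t}\subseteq\mathbf{y}'}(-1)^{|\mathbf{t}|+1} = 1 - (1-1)^{|\mathbf{y}'|} = 1$, yielding $(MM^{-1})_{\mathbf{y},\mathbf{y}}=1$. In the case $\mathbf{y}\not\subseteq\mathbf{y}'$, there is a coordinate where $\mathbf{y}$ is $1$ and $\mathbf{y}'$ is $0$, so $\mathbf{y}\cap(\mathbf{y}'\oplus\mathbf{1})\neq\varnothing$ and $\chi_{\mathbf{s}\cap\mathbf{y}}=1$ identically; the sum then becomes $-\sum_{\mathbf{t}\subseteq\mathbf{y}'}(-1)^{|\mathbf{t}|} = -(1-1)^{|\mathbf{y}'|} = 0$. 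Finally, when $\mathbf{y}\subsetneq\mathbf{y}'$, I would decompose $\mathbf{t}=\mathbf{t}_1\sqcup\mathbf{t}_2$ with $\mathbf{t}_1\subseteq\mathbf{y}$ and $\mathbf{t}_2\subseteq\mathbf{y}'\setminus\mathbf{y}$; the indicator $\chi_{\mathbf{s}\cap\mathbf{y}}=1$ becomes $\mathbf{t}_1\neq\mathbf{0}$, and the sum factorizes as a product whose $\mathbf{t}_2$ factor is $(1-1)^{|\mathbf{y}'|-|\mathbf{y}|}=0$ since $\mathbf{y}'\setminus\mathbf{y}$ is nonempty.

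The one thing I expect to need to handle carefully is the corner case $\mathbf{y}'=\mathbf{1}$, where $\mathbf{y}'\oplus\mathbf{1}=\mathbf{0}$ and the reparameterization becomes $\mathbf{s}=\mathbf{t}$, making the constraint $\mathbf{s}\neq\mathbf{0}$ from the original summation active. A quick check shows this is harmless: in Case 1 it already coincides with the condition $\mathbf{t}\neq\mathbf{0}$ derived there, Case 2 is vacuous because every $\mathbf{y}$ is a subset of $\mathbf{1}$, and in Case 3 the requirement $\mathbf{t}_1\neq\mathbf{0}$ forces $\mathbf{t}\neq\mathbf{0}$ automatically. Apart from this bookkeeping, the entire proof reduces to two invocations of the elementary identity $\sum_{\mathbf{t}\subseteq S}(-1)^{|\mathbf{t}|}=\delta_{S,\varnothing}$, so no real obstacle arises.
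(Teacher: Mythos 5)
Your proof is correct and follows essentially the same route as the paper's: both verify the product is the identity entrywise by noting that the factor $1-\chi_{(\mathbf{s}\oplus\mathbf{1})\cap(\mathbf{y}'\oplus\mathbf{1})}$ forces the support of the summation index to contain $\mathrm{supp}(\mathbf{y}'\oplus\mathbf{1})$, reparameterizing accordingly, and splitting into the same three cases ($\mathbf{y}=\mathbf{y}'$, $\mathbf{y}\not\subseteq\mathbf{y}'$, $\mathbf{y}\subsetneq\mathbf{y}'$) resolved by the alternating subset-sum identity. The only differences are cosmetic: you check $MM^{-1}=I$ where the paper checks $M^{-1}M=I$, and in the proper-subset case you factor the sum over $\mathbf{t}=\mathbf{t}_1\sqcup\mathbf{t}_2$ directly where the paper invokes the Vandermonde identity.
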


\begin{proof}
Here we will make use of a one-to-one mapping between elements of $\mathbb{F}_2^n$ and the power set of $[n]$, denoted $\mathcal{P}([n])$.  Namely, each binary string $\mathbf{s}\in\mathbb{F}_2^n$ is uniquely specified by it's support $\mathrm{supp}(\mathbf{s}) = \{j\in[n]\mid s_j=1\}$, which is a subset of $[n]$.  For the binary strings $\mathbf{s}\in\mathbb{F}_2^n$ and $\mathbf{y}\in\mathbb{F}_2^n$, let us denote their corresponding supports as $S\subseteq[n]$ and $Y\subseteq[n]$, respectively.  Notice that $\mathrm{supp}(\mathbf{s}\oplus \mathbf{1}) = \bar{S}$ where $\bar{S}=[n]\backslash S$ denotes the compliment of $S$ in $[n]$.  Furthermore, $(-1)^{\mathbf{s}\cdot\mathbf{y}} = (-1)^{|S\cap Y|}$.  In this notation, Eq.~(\ref{eq:sierpinski_inverse}) can be written
\begin{align}
[M^{-1}]_{S,Y} = \frac{(-1)^{|S\cap Y| + 1}}{2^{|Y|-1}}(1-\chi_{\bar{S}\cap\bar{Y}}).
\end{align}

Notice that $1-\chi_{\bar{S}\cap\bar{Y}} = 1$ if and only if $\bar{S}\cap\bar{Y} = \varnothing$.  By de-Morgan's law this is equivalent to $S\cup Y = [n]$.  Therefore, we may write
\begin{align}
[M^{-1}M]_{S,S'} &= \sum_{Y\subseteq[n]} {(-1)^{|S\cap Y| + 1}}(1-\chi_{\bar{S}\cap\bar{Y}})\chi_{Y\cap S'} \\
&= -\sum_{\substack{Y\subseteq[n] \\ Y\cup S = [n] \\ Y\cap S' \neq \varnothing}} {(-1)^{|S\cap Y| }}.
\end{align}
Since $Y\cup S = [n]$, every term can be written $Y = \bar{S}\cup A$ for some $A\in \mathcal{P}(S)$ satisfying $(\bar{S}\cup A)\cap S' \neq \varnothing$.  Denoting the set of all such subsets $A$ as $K(S,S')$ we may rewrite the above expression as
\begin{align}
    [M^{-1}M]_{S,S'} = -\sum_{A\in K(S,S')} (-1)^{|A|} 
\end{align}

We now analyze what this sum evaluates to based on how $S'$ relates to $S$.
\begin{itemize}
\item[(1)] If $S'\not\subseteq S$, then $K(S,S') = \mathcal{P}(S)$.  The summation gives,
\begin{align}
[M^{-1}M]_{S,S'} &= - \sum_{A\in \mathcal{P}(S)} (-1)^{|A|} \\
&= - \sum_{k=0}^{|S|} {|S|\choose k}(-1)^{k} \\
&= -(1-1)^{|S|} = 0.
\end{align}
\item[(2)] If $S'\subsetneq S$, then we get $K(S,S') = \{A\in \mathcal{P}(S)~|~A\cap S' \neq \varnothing\}$.  Note that the number of subsets $A\in K(S,S')$ with $|A|=k$ is $\sum_{j=1}^k {|S'|\choose j}{|S|-|S'| \choose k-j} = {|S|\choose k}-{|S|-|S'|\choose k}$, by the Vandermonde identity.  Thus we have,
\begin{align}
[M^{-1}M]_{S,S'} &= - \sum_{A\in K(S,S')} (-1)^{|A|} \\
&= -\sum_{k=1}^{|S|} \left( {|S|\choose k}-{|S|-|S'|\choose k} \right)(-1)^k \\
&= -\left[(1-1)^{|S|}-1 -\left((1-1)^{|S|-|S'|}-1 \right) \right] \\
&= 0.
\end{align}
\item[(3)] If $S'=S$, then $K(S,S') = \mathcal{P}(S)\backslash\varnothing$.  Thus we have,
\begin{align}
[M^{-1}M]_{S,S'} &= - \sum_{A\in K(S,S')} (-1)^{|A|} \\
&= -\sum_{k=1}^{|S|}{|S| \choose k}(-1)^{k} \\
&= -\left((1-1)^{|S|} - 1\right) = 1.
\end{align}
\end{itemize}
Therefore,
\begin{align}
[M^{-1}M]_{S,S'} = \delta_{S,S'}.
\end{align}
\end{proof}

\section{MBQC realizations of quantum circuits}
\label{sec:MBQC-1QC_correspondence}

The state $|\mathrm{GHZ}_N\rangle$ can be expressed as the following matrix product state
\begin{align}
    |\mathrm{GHZ}_N\rangle = \sum_{j_1,\ldots,j_N=0}^1 \langle 0 |\Pi_X^{(j_N)}\cdots \Pi_X^{(j_1)} |0 \rangle~~|j_1\cdots j_N\rangle.
\end{align}
Here, $\Pi_X^{(j)} = H|j\rangle\langle j|H$. Since 
\begin{align}
|m(\theta)\rangle = \frac{e^{-i\theta/2}|0\rangle + (-1)^m e^{i\theta/2}|1\rangle}{\sqrt{2}}
\end{align}
it follows that
\begin{subequations}
\begin{align}
    \langle\mathbf{m}(\boldsymbol{\theta})|\mathrm{GHZ}_N\rangle &=  \frac{1}{\sqrt{2^N}} \langle 0 |\prod_{k=1}^N \left(e^{-i\theta_k/2}\Pi_X^{(0)} + (-1)^{m_k} e^{i\theta_k/2}\Pi_X^{(1)}\right) |0 \rangle\\
    &=  \frac{1}{\sqrt{2^N}} \langle 0 |\prod_{k=1}^N X^{m_k}R_X(\theta_k) |0 \rangle \\
    &=\frac{1}{\sqrt{2^N}} \langle \bigoplus_{k=1}^N m_k | \prod_{k=1}^N R_X(\theta_k) |0\rangle.
\end{align}
\end{subequations}
Therefore,
\begin{align}
    |\langle \mathbf{m}(\boldsymbol{\theta})|\mathrm{GHZ}_N\rangle|^2 &\propto |\langle \bigoplus_{j=1}^N m_j | \prod_{j=1}^N R_X( \theta_j) |0\rangle |^2.
\end{align}

The state $|\mathrm{1DC}_{2N+1}\rangle$ can be expressed as the following matrix product state
\begin{align}
    |\mathrm{1DC}_{2N+1}\rangle = \sum_{j_1,\ldots,j_{2N+1}=0}^1 \langle 0 | \Pi_X^{(j_{2N+1})}\prod_{k=1}^{N} \Pi_Z^{(j_{2k})} \Pi_X^{(j_{2k-1})} |0\rangle~~|j_1\cdots j_{2N+1}\rangle.
\end{align}
Following a similar calculation sketched above, we get
\begin{align}
    \langle \mathbf{m}(\boldsymbol{\theta}) | \mathrm{1DC}_{2N+1} \rangle = \frac{1}{\sqrt{2^N}} \langle 0 | X^{m_{2N+1}} R_X(\theta_{2N+1}) \prod_{k=1}^N Z^{m_{2k}} R_Z(\theta_{2k}) X^{m_{2k-1}} R_X(\theta_{2k-1}) |0\rangle. 
\end{align}
Pushing all Pauli operators to the left gives
\begin{align}
    \langle \mathbf{m}(\boldsymbol{\theta}) | \mathrm{1DC}_{2N+1} \rangle = \frac{1}{\sqrt{2^N}} \langle 0 | \left(X^{m_{2N+1}}\prod_{l=1}^N Z^{m_{2k}}X^{m_{2k-1}}\right) \left(R_X(\tilde{\theta}_{2N+1}) \prod_{k=1}^N R_Z(\tilde{\theta}_{2k}) R_X(\tilde{\theta}_{2k-1})\right) |0\rangle
\end{align}
where $\tilde{\theta}_j = (-1)^{\sum_{k=1}^n A_{jk}m_k}\theta_{j}$ and
\begin{align}
    A_{j,k} = 
    \begin{cases}
    1 &\textrm{if }k<j\textrm{ and }k\neq j\textrm{ mod }2 \\
    0 & \textrm{otherwise}
    \end{cases}.
\end{align}
Absorbing all the Pauli operators into the bra $\langle 0 |$ simply flips this state and adds phases conditional on $\mathbf{m}$.  Therefore, we obtain
\begin{align}
    |\langle \mathbf{m}(\boldsymbol{\theta})| \mathrm{1DC}_{2N+1}\rangle|^2 \propto 
    |\langle \bigoplus_{j=0}^N m_{2j+1} | R_X(\tilde{\theta}_{2N+1}) \prod_{j=1}^N R_Z(\tilde{\theta}_{2j})R_X(\tilde{\theta}_{2j-1}) |0\rangle|^2.
\end{align}

\section{Adaptive $l2$-MBQC is universal}
\label{sec:adaptive_universality}

In this section we show how to implement any commutative $l2$-1QC via adaptive $l2$-MBQC in constant-time.  It then follows that adaptive $l2$-MBQC can compute any Boolean function, albeit not necessarily efficiently. 

\begin{theorem}
Any nonadaptive $l2$-MBQC that can be performed with an $N$-qubit GHZ state can be performed adaptively with a $(2N+1)$-qubit 1D cluster state and overall resource costs $L_C=N$, $L_Q=2N+1$, $T_C=2$, and $T_Q=3$.
\label{thm:aMQC_universal}
\end{theorem}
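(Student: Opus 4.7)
My plan is to exploit the standard measurement-based reduction from the 1D cluster state to the GHZ state, combined with the amplitude identity in Eq.~\eqref{eq:adaptive_MBQC_su2} that relates cluster-state measurements to a one-qubit alternating-rotation circuit. Given a nonadaptive $l2$-MBQC on $|\mathrm{GHZ}_N\rangle$ with measurement angles $\{\theta_j\}_{j=1}^N$ determined by a preprocessing matrix $P$, I aim to emulate it on a $(2N+1)$-qubit 1D cluster state by choosing the effective $X$-rotation angles to be $(\theta_1,\ldots,\theta_N,0)$ at the $N+1$ odd-positioned qubits, and choosing every $Z$-rotation angle to be $0$ at the $N$ even-positioned qubits. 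Because $R_Z(0)=I$, the right-hand side of Eq.~\eqref{eq:adaptive_MBQC_su2} collapses to a product of $N$ commuting $X$-rotations acting on $|0\rangle$, which by Remark~\ref{rmk:nmbqc_c1qc} reproduces the GHZ-measurement amplitude that the nonadaptive protocol already computes.

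The order of operations, and hence the timing, comes out of the adaptation matrix in Eq.~\eqref{eq:1DC_adaptation_matrix}: an odd qubit $j$ depends only on the outcomes of earlier \emph{even} qubits, and an even qubit $j$ depends only on the outcomes of earlier \emph{odd} qubits. First I would measure all $N$ even-positioned qubits at angle $0$ in parallel; since the sign $(-1)^{\sum_k A_{jk}m_k}$ is irrelevant when the angle is zero, these measurements require no side-processor call and can be performed simultaneously with the input-dependent classical preprocessing. Next, in a second round, I would measure each of the $N+1$ odd-positioned qubits in the basis $X(\tilde\theta_j)$ with $\tilde\theta_j=(-1)^{s_j}\theta_{(j+1)/2}$, where $s_j$ is the parity of the input-derived setting $(P\mathbf{x})_{(j+1)/2}$ XORed with the parities of previously obtained even-qubit outcomes, all of which are available to the side-processor at this point. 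This yields $T_C=2$, while $T_Q=3$ comes from preparing the 1D cluster state. The classical output is the parity of the odd-qubit outcomes, matching the post-processing of the original GHZ protocol.

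The main subtlety, and what I expect to be the only real bookkeeping obstacle, is to verify that the overall setup fits the formal $l2$-MBQC definition: namely, that the per-qubit choice remains binary (i.e.\ $\pm\theta$ rather than $\{0,\pm\theta\}$), that the preprocessing matrix and adaptation matrix together reproduce exactly the $s_j$'s described above, and that no stray Pauli byproduct operators contaminate the final parity. These can be handled by a direct application of Eq.~\eqref{eq:adaptive_MBQC_su2}, since the only byproduct of the eliminated $Z$-rotations is a global phase, and the byproducts of the $X$-rotations simply permute the odd-qubit outcomes before taking their parity, which preserves the parity. Collecting the counts gives $L_Q=2N+1$, $L_C=N$ (one bit per original GHZ angle), $T_Q=3$, and $T_C=2$, as claimed.
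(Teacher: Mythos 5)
Your overall strategy is the same as the paper's: prepare the $(2N+1)$-qubit chain, measure the $N$ even-positioned qubits in the Pauli-$X$ basis in a first round (no adaptation needed at angle $0$), then adaptively measure the $N+1$ odd-positioned qubits in a second round, reading out the parity of the odd outcomes. The timing analysis ($T_C=2$ from the parity structure of the adaptation matrix in Eq.~\eqref{eq:1DC_adaptation_matrix}, $T_Q=3$ from cluster-state preparation) is correct.

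However, there is a genuine gap in how you assign the odd-qubit angles, and it is precisely the subtlety you flag but do not resolve. In the nonadaptive GHZ protocol the $j$-th angle is $\theta_j=(\mathbf{p}_j\cdot\mathbf{x})\phi_j\in\{0,\phi_j\}$, and on the cluster state the adaptation multiplies the measured angle by $(-1)^{a_j}$ where $a_j$ is an outcome parity. Keeping the angles $(\theta_1,\ldots,\theta_N,0)$ as you propose therefore forces the measurement basis of each odd qubit into the three-valued set $\{X(0),X(+\phi_j),X(-\phi_j)\}$, selected by a \emph{nonlinear} function of the input bit and the adaptation bit; this cannot be written as a single setting $s_k=(P\mathbf{x}\oplus A\mathbf{m})_k$ and so violates items 1 and 4 of Definition~\ref{def:l2_MBQC}. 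The paper's fix is the reparametrization $R_X(x\phi)=R_X(\phi/2)\,R_X\bigl((-1)^{x+1}\phi/2\bigr)$, so that every conditioned rotation is by $\pm\phi_j/2$ (a genuinely binary, sign-only choice into which the input bit and the adaptation bit combine linearly), while the commuting constant factors accumulate into $R_X\bigl(\sum_{j}\phi_j/2\bigr)$ carried by the $(2N+1)$-th qubit, whose basis is $X\bigl(\pm\sum_j\phi_j/2\bigr)$ with sign set by adaptation alone. Your construction sets that last qubit's angle to $0$, so it plays no role; if you adopt the half-angle bases without it, the effective circuit is $R_X\bigl(\sum_j(\mathbf{p}_j\cdot\mathbf{x})\phi_j-\sum_j\phi_j/2\bigr)$ rather than $R_X\bigl(\sum_j(\mathbf{p}_j\cdot\mathbf{x})\phi_j\bigr)$, and the output parity no longer equals $f(\mathbf{x})$ in general (also note the paper's output includes the constant $c=f(\mathbf{0})$). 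The claim that the binary-choice issue "can be handled by a direct application of Eq.~\eqref{eq:adaptive_MBQC_su2}" is therefore not enough: the half-angle decomposition and the constant-offset role of the final qubit are the actual content of the proof, not bookkeeping.
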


\begin{proof}
Universality of the 1D cluster state follows from the well known fact that by measuring every other qubit in the Pauli-$X$ basis recovers a GHZ state up to Pauli-$X$ corrections on the unmeasured qubits \cite{briegel2001persistent}.  However, we will instead describe explicitly how to implement any commutative $l2$-1QC via adaptive $l2$-MBQC as some ideas will be of use to us later.

A Boolean function with periodic Fourier decomposition $\cos(\sum_{j=1}^N (\mathbf{p}_j\cdot\mathbf{x})\phi_j)$ can be computed by the $l2$-1QC $U(\mathbf{x})=\prod_{j=1}^N R_X((\mathbf{p}_j\cdot\mathbf{x})\phi_j)$. Since any binary number $x\in\mathbb{F}_2$ satisfies $x = (1-(-1)^{x})/2$, we can rewrite $U(\mathbf{x})$ as
\begin{align}
    U(\mathbf{x}) = R_X(\sum_{j=1}^N \phi_j/2)\prod_{j=1}^N R_X((-1)^{(\mathbf{p}_j\cdot\mathbf{x})+1}\phi/2).
\end{align}

Now consider a $(2N+1)$-qubit 1D cluster state.  The above circuit can be implemented in a measurement-based manner via the following sequence of measurements,
\begin{itemize}
    \item[(1)] Each qubit with label $j=0\textrm{ mod 2}$ is measured in the Pauli-$X$ basis (i.e., $\theta_j=0$) and each measurement outcome $m_j\in\mathbb{F}_2$ is returned to the mod-2 linear classical side-processor.
    \item[(2)] For each qubit with label $j=1\textrm{ mod }2$---except $j=2N+1$---the side-processor computes and returns the bit
    \begin{align}
        s_j = \left(\bigoplus_{\substack{k<j \\ k=0\textrm{ mod 2}}} m_k\right)\oplus (\mathbf{p}_{(j-1)/2}\cdot\mathbf{x}).
    \end{align}
    Each qubit is then measured in the eigenbasis of $X(\theta_j)$ where $\theta_j = \frac{1}{2}(-1)^{s_j+1}\phi_\mu$.  Meanwhile, for qubit $2N+1$ the side-processor computes and returns the bit
    \begin{align}
         s_{2N+1} = \bigoplus_{\substack{k<{2N+1} \\ k=0\textrm{ mod 2}}} m_k.
    \end{align}
    The qubit then is measured in the eigenbasis of $X(\theta_{2N+1})$ where $\theta_{2N+1} = (-1)^{s_{2N+1}}\sum_{j=1}^N \phi_j/2$
    The corresponding measurement outcomes $m_j\in\mathbb{F}_2$ are returned to the side-processor, which then computes and returns the computational output
    \begin{align}
        y = \left(\bigoplus_{j=1\textrm{ mod }2} m_j\right)\oplus f(\mathbf{0}).
    \end{align}
\end{itemize}
It follows from Eq.~\ref{eq:adaptive_MBQC_su2} that $y=f(\mathbf{x})$.
\end{proof}

This demonstrates that adaptive $l2$-MBQC with 1D cluster states can compute any function $f:\mathbb{F}_2^n\rightarrow\mathbb{F}_2$ using $L_Q=2\hat{p}_f+1$ many qubits.

\section{Quantum signal processing}
\label{sec:Previous_1QC_Work}

The quantum signal processing technique \cite{low2016methodology, low2017optimal, haah2019product} gives necessary and sufficient conditions for a single qubit unitary $U(\phi)$ to be implemented by a sequence of $L$ many rotations by an angle $\phi$ about some axes in the $XY$-plane of the Bloch sphere.  Namely, a single qubit unitary with Pauli decomposition $U(\phi) = A(\phi)I + iB(\phi)X + iC(\phi)Y + i D(\phi) Z$ can be decomposed as the sequence of rotations
\begin{align}
    U(\phi) = \left(\prod_{j=1}^L R_Z(\xi_j)R_X(\phi)R_Z(\xi_j)^\dagger\right)R_Z(\xi_0).
    \label{eq:QSP_Decomp}
\end{align}
for some angles each $\xi_j\in[0,2\pi)$ whenever---under the change of variables $z=e^{i\phi/2}$---the functions $A(z),B(z),C(z),D(z)$ have the following properties.
\begin{itemize}
    \item[(QSP1)] $A(z)^2 + B(z)^2 + C(z)^2 + D(z)^2=1$.
    \item[(QSP2)] $A,B,C,D$ are Laurent polynomials in $z$ of degree at most $L$ and at least one has degree equal to $L$.
    \item[(QSP3)] $A,B,C,D$ are even/odd functions of $z$ whenever $L$ is even/odd, respectively.
    \item[(QSP4)] $A$ and $D$ are reciprocal, i.e. $A(z)=A(1/z)$, whereas $B$ and $C$ are antireciprocal, i.e. $B(z)=-B(1/z)$.
\end{itemize}
Furthermore, there is an efficient algorithm to determine the decomposition in Eq.~(\ref{eq:QSP_Decomp}) discussed in Ref.~\cite{haah2019product}.  

Properties (QSP2), (QSP3), and (QSP4) imply that $A(\phi)$ and $B(\phi)$  resemble Fourier cosine and sine series, respectively. Namely, if $L$ is odd, then
\begin{subequations}
\begin{align}
    A(\phi) &= \sum_{j=1}^L a_j \cos(j\phi/2) \\
    B(\phi) &= \sum_{j=1}^L b_j\sin(j\phi/2)
\end{align}
\end{subequations}
where $a_j=b_j=0$ for all even values of $j$.

Ref.~\cite{maslov2020quantum} used quantum signal processing to construct 1QCs that compute symmetric Boolean functions.  Their general construction is based on setting $\phi_{|\mathbf{x}|} = \pi |\mathbf{x}|/ (n+1)$ and determining the coefficients ${a_j,b_j}_{j=1}^{L}$ by solving the system of linear equations 
\begin{subequations}
\label{eq:QSP_Linear_System}
\begin{align}
    A(\phi_{|\mathbf{x}|}) &= 1-f(|\mathbf{x}|) \\
    B(\phi_{|\mathbf{x}|}) &= f(|\mathbf{x}|) \\
    \frac{dA}{d\phi}\Bigr|_{\phi_{|\mathbf{x}|}} &= 0 \\
    \frac{dB}{d\phi}\Bigr|_{\phi_{|\mathbf{x}|}} &= 0
\end{align}
\end{subequations}
for each value of the Hamming weight $|\mathbf{x}|\in\{0,\ldots,n\}$.  The latter two equations guarantee the existence of polynomials $C(z)$ and $D(z)$ such that condition (QSP1) is satisfied.  We will discuss in detail how $C(z)$ and $D(z)$ are determined in the next subsection.

While Ref.~\cite{maslov2020quantum} showed that setting $\phi_{|\mathbf{x}|} = \pi |\mathbf{x}|/ (n+1)$ is sufficient to construct a 1QC with $L=4n+1$, it is not always optimal.  For functions with a high degree of symmetry, the angles $\phi_{|\mathbf{x}|}$ can be chosen so as to reduce the size of the above system of equations.

\subsection{The mod-$p$ functions}

The mod-$p$ functions have a high of symmetry, which can be leveraged in the quantum signal processing to construct 1QCs with $L=2p-1$.  This would imply that the overall depth of the 1QC is $O(n)$ and that the circuit consists of a constant number of blocks of mutually commuting unitaries occurring sequentially.  Such a construction implies the constant-time adaptive $l2$-MBQC protocol described in Thm.~\ref{thm:mod_p_mbqc}.

The symmetry we leverage is the invariance of $\mathsf{Mod}_{p,j}(|\mathbf{x}|)$ to adding multiples of $p$ to the Hamming weight.  Hence, it is reasonable to make the angle $\phi_{|\mathbf{x}|}/2$ to also have this property.  Analyzing the situation when $\phi_{|\mathbf{x}|} = 4\pi|\mathbf{x}|/p$ gives the following result.
\begin{lemma}
A single qubit computation of the mod-$p$ function can be achieved using the quantum signal processing scheme with $L=2p-1$ and $\phi_{|\mathbf{x}|} = 4\pi|\mathbf{x}|/p$.
\end{lemma}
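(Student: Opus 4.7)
The plan is to verify that the QSP conditions (QSP1)--(QSP4) can be met at $L = 2p-1$ with the substitution $\phi_{|\mathbf{x}|} = 4\pi|\mathbf{x}|/p$, after which the algorithmic QSP decomposition of Ref.~\cite{haah2019product} immediately produces the angles $\{\xi_j\}_{j=1}^{2p-1}$ in Eq.~(\ref{eq:mod_p_qsp_circuit}). The crucial structural input is a $p$-fold periodicity matching that of $\mathsf{Mod}_{p,j}$: under this ansatz, $z = e^{i\phi_{|\mathbf{x}|}/2} = e^{2\pi i|\mathbf{x}|/p}$ takes only $p$ distinct values as $|\mathbf{x}|$ sweeps $\{0, 1, \ldots, n\}$, namely the $p$-th roots of unity. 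Consequently the $4(n+1)$ collocation equations in Eq.~(\ref{eq:QSP_Linear_System}) collapse to only $4p$ independent equations indexed by $k \in \{0, 1, \ldots, p-1\}$, matching the period of the target function.

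I would then set up the Fourier ansatz. Since $L = 2p-1$ is odd, conditions (QSP3) and (QSP4) enforce $A(\phi) = \sum_{j \text{ odd},\, 1 \leq j \leq 2p-1} a_j \cos(j\phi/2)$ and $B(\phi) = \sum_{j \text{ odd},\, 1 \leq j \leq 2p-1} b_j \sin(j\phi/2)$, giving $p$ real coefficients for each. Evaluated at $\phi_k = 4\pi k/p$, the basis functions reduce to $\cos(2\pi jk/p)$ and $\sin(2\pi jk/p)$; for $p$ odd, the map $j \mapsto j \bmod p$ is a bijection from odd $j \in \{1, 3, \ldots, 2p-1\}$ onto $\{0, 1, \ldots, p-1\}$, so the evaluation matrix realizes the standard discrete Fourier transform on $p$-th roots of unity and is invertible. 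The $2p$ value equations $A(\phi_k) = 1 - \mathsf{Mod}_{p,j}(k)$ and $B(\phi_k) = \mathsf{Mod}_{p,j}(k)$ therefore uniquely determine the coefficients $\{a_j, b_j\}$.

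The main obstacle is reconciling the $2p$ derivative conditions with these now-fixed coefficients. Progress comes from noting that for any $\{0, 1\}$-valued target one automatically has $A(\phi_k)^2 + B(\phi_k)^2 = 1$, so what is really needed for the QSP decomposition is that each $\phi_k$ be a zero of $1 - A(z)^2 - B(z)^2$ of even order; a Fej\'er--Riesz factorization on the unit circle then yields Laurent polynomials $C, D$ of degree at most $2p-1$ with the correct reciprocal/antireciprocal symmetry satisfying (QSP1). Rather than carrying out this factorization symbolically for arbitrary $p$, I would invoke the constructive decomposition algorithm of Ref.~\cite{haah2019product}, which simultaneously certifies feasibility and extracts the $2p-1$ rotation angles; its numerical outputs for $p \in \{3, 5, 7, 9\}$ are reported in Tab.~\ref{tab:mod_p_qsp_angles}, providing explicit verification of the lemma in those cases.
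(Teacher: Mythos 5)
There is a genuine gap in the middle step, where you claim that the $2p$ value equations alone uniquely determine the coefficients $\{a_j,b_j\}$ via an invertible DFT matrix. The bijection $j\mapsto j\bmod p$ from the odd indices $\{1,3,\ldots,2p-1\}$ onto $\mathbb{Z}_p$ is correct for odd $p$, and it does make the \emph{complex} character matrix $[e^{2\pi i jk/p}]$ invertible; but your unknowns $a_j$ and $b_j$ are independent real coefficients constrained by the real cosine and sine collocation matrices separately, and those are rank-deficient. Since $\cos(2\pi jk/p)$ is unchanged under $k\mapsto p-k$, the $p$ value equations for $A$ contain only $(p+1)/2$ independent conditions, and the sine matrix likewise has rank $(p-1)/2$ (e.g.\ for $p=5$ the cosine columns for $j=1$ and $j=9$ coincide). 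Worse, the sine system $B(\phi_k)=\mathsf{Mod}_{p,0}(k)$ is \emph{inconsistent} as literally posed: $B(\phi_{p-k})=-B(\phi_k)$ because $\sin\bigl(2\pi j(p-k)/p\bigr)=-\sin(2\pi jk/p)$, while the target satisfies $f(k)=f(p-k)=1$ for $k\neq 0$. So the derivative conditions cannot be deferred and ``reconciled'' afterwards, and your Fej\'er--Riesz step inherits the problem because the premise $A(\phi_k)^2+B(\phi_k)^2=1$ at every collocation point is exactly what fails.

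The paper avoids this by keeping the value and derivative conditions together: after exploiting the $k\leftrightarrow p-k$ symmetry it writes a square $2p\times 2p$ system in which the $p$ coefficients $a_{2j-1}$ are fixed by $(p+1)/2$ cosine value equations together with $(p-1)/2$ sine (derivative) equations, and the $b_{2j-1}$ by the complementary $(p-1)/2+(p+1)/2$ equations (Eqs.~(\ref{eq:mod_p_qsp_linsyst_1})--(\ref{eq:mod_p_qsp_linsyst_4})), asserts this system has a unique solution, and only then hands off to the algorithm of Ref.~\cite{haah2019product} to produce $C$, $D$, and the angles $\{\xi_j\}$. Your opening observation (the $p$-fold periodicity collapsing the collocation set) and your closing step (invoking Ref.~\cite{haah2019product} and Tab.~\ref{tab:mod_p_qsp_angles}) match the paper; to repair the proposal you should replace the DFT-inversion argument with the mixed value/derivative square system, or at minimum relax the value constraints to $B(\phi_k)^2=f(k)$ with antisymmetric sign choices before arguing solvability.
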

\begin{proof}
Since both the truth table of the mod-$p$ function and the functions $\cos(j\phi_{|\mathbf{x}|}/2)$ and $\sin(j\phi_{|\mathbf{x}|}/2)$ have period $p$ with respect to $|\mathbf{x}|$, Eqs.~\ref{eq:QSP_Linear_System} reduce to the following $2p$ equations in $2p$ variables.
\begin{subequations}
\begin{align}
    \sum_{j=1}^p a_{2j-1} \cos\left((2j-1)\frac{2\pi w}{p}\right)&= \delta_{w,0}\label{eq:mod_p_qsp_linsyst_1}\\
    \sum_{j=1}^p b_{2j-1} \sin\left((2j-1)\frac{2\pi w'}{p}\right) &= 0 \label{eq:mod_p_qsp_linsyst_2}\\
    \sum_{j=1}^p a_{2j-1} (2j-1) \sin\left((2j-1)\frac{2\pi w'}{p}\right) &=0 \label{eq:mod_p_qsp_linsyst_3}\\
    \sum_{j=1}^p b_{2j-1} (2j-1) \cos\left((2j-1)\frac{2\pi w}{p}\right) &= 0 \label{eq:mod_p_qsp_linsyst_4}
\end{align}
\end{subequations}
where above $w$ runs from $0$ to $(p-1)/2$ in integer steps and $w'$ runs from $1$ to $(p-1)/2$ in integer steps.  This linear system has a unique solution, which guarantees that $L=2p-1$ for any size of input $n$.  The method of \cite{haah2019product} can then be used to determine the functions $C(\phi)$, $D(\phi)$, and moreover the angles $\{\xi_j\}_{j=0}^{2p-1}$ to a desired numerical precision. 
\end{proof}

\section{The OR-reduction and ancillary qubits for counting}
\label{sec:previous_work_as_cluster_mbqc}

\subsection{The OR-reduction for computing general symmetric Boolean functions}

Reducing the problem of computing $\mathsf{OR}_n:\mathbb{F}_2^n\rightarrow\mathbb{F}_2$ to the task of computing $\mathsf{OR}_{\lceil\log_2(n)\rceil}:\mathbb{F}_2^{\lceil\log_2(n)\rceil}\rightarrow\mathbb{F}_2$ follows from the observation that the $\mathsf{OR}$ of the $n$ bits in the string $\mathbf{x}$ equals the $\mathsf{OR}$ of the $\lceil \log_2(n) \rceil$ bits in the binary string representing the number $|\mathbf{x}|$.  In Ref.~\cite{takahashi2016collapse} a constant-depth quantum circuit with quantum fan-out gates implementing this task was introduced.  Here we describe this scheme as an adaptive $l2$-MBQC using a 1D cluster state.  The algorithm leverages an old result of Ref.~\cite{hoyer2005quantum} which uses an ancillary space of $\kappa = \lceil\log_2(n)\rceil$ many qubits to count---in binary---the Hamming weight of the input string $\mathbf{x}\in\mathbb{F}_2^n$.

 The original construction using a $\kappa$-qubit ancillary space works as follows.  Conditioned on each bit $x_j$ for each $j\in[n]$ of the string $\mathbf{x}$, apply to the ancillary space the unitary
\begin{align}
M= \sum_{\mathbf{z}\in\mathbb{F}_2^\kappa} |\mathrm{bi}(\mathrm{de}(\mathbf{a})+1)\textrm{ mod }2^\kappa\rangle\langle \mathbf{a}|.
\end{align}
Here, ``$\mathrm{de}(\cdot)$" is a function that maps a binary number to its decimal representation and ``$\mathrm{bi}(\cdot)$" maps a decimal number to its binary representation. 
Let us call the state of the classical memory register $|\mathbf{x}\rangle$ and the final state of the ancillary space $|\mathbf{a}_\mathrm{f}\rangle$. Hence, if $\mathbf{x}=\mathbf{0}$ then $\mathbf{a}_\mathrm{f} = \mathbf{0}$ and if $\mathbf{x}\neq\mathbf{0}$, then $\mathbf{a}_\mathrm{f}\neq\mathbf{0}$. So, $\mathsf{OR}(\mathbf{x}) = \mathsf{OR}(\mathbf{a})$.

The while the above circuit is rather complicated, requiring quantum fourier transforms, a similar circuit that implements an equivalent reduction can be implemented in constant depth.  First note that
\begin{align}
    M = U_\mathrm{QFT} D U_\mathrm{QFT}^\dagger
\end{align}
where
\begin{align}
    D = \sum_{\mathbf{a}\in\mathbb{F}_2^n} \exp\left(i\frac{2\pi}{2^\kappa} \sum_{j=0}^{\kappa-1}2^ja_j\right)|\mathbf{a}\rangle\langle\mathbf{a}|.
\end{align}
The unitary $U_{\textrm{OR-reduc}}=\sum_{\mathbf{x}\in\mathbb{F}_2^n}\left( |\mathbf{x}\rangle\langle\mathbf{x}|\otimes M^{|\mathbf{x}|}\right)$ can then be written as
\begin{align}
    U_{\textrm{OR-reduc}} = \sum_{\mathbf{x}\in\mathbb{F}_2^n} |\mathbf{x}\rangle\langle\mathbf{x}|\otimes \left(U_{\mathrm{QFT}} D^{|\mathbf{x}|} U_{\mathrm{QFT}}^\dagger\right).
\end{align}
Notice that
\begin{align}
    D^{|\mathbf{x}|} = \bigotimes_{j=1}^\kappa \sum_{a_j\in\mathbb{F}_2} \exp\left(i\frac{2\pi}{2^\kappa} 2^{j-1}a_j |\mathbf{x}|\right) |a_j\rangle\langle a_j| \propto \bigotimes_{j=1}^\kappa R_Z\left(\frac{2\pi}{2^{j}} |\mathbf{x}|\right)
\end{align}
is a tensor product of single-qubit $Z$-rotations up to a global phase.  Furthermore, since the ancillary qubits are initialized in the $|\mathbf{a}=\mathbf{0}\rangle$ state, we may replace $U_{\mathrm{QFT}}$ with $H^{\otimes n}$.  Furthermore, since in the measurement stage all we care about is if the final string gives all zeros as measurement outcomes, we may replace the final $U_\mathrm{QFT}^\dagger$ with $H^{\otimes n}$.  Hence, the following circuit implements the OR-reduction
\begin{align}
    U_{\mathsf{OR}\textrm{-}\mathrm{reduc}}=\bigotimes_{j=1}^\kappa R_X\left(\frac{2\pi}{2^{j}} |\mathbf{x}|\right).
    \label{eq:or_reduc_1QC}
\end{align}
This circuit is just $\kappa$ many 1QCs of length $T_Q = n$.  Hence, it can be implemented by nonadaptive $l2$-MBQC on $\kappa$ many uncoupled $n$-qubit GHZ states.  

Once the reduction has been implemented by a single round of nonadaptive $l2$-MBQC, the function $\mathsf{OR}_{\lceil\log_2(n)\rceil}:\mathbb{F}_2^{\lceil\log_2(n)\rceil}\rightarrow\mathbb{F}_2$ can be computed via another round of $l2$-MBQC on a $2^{\lceil \log_2(n) \rceil}-1 = O(n)$ qubit GHZ state where the inputs are determined by the previous measurement outcomes.  Namely, given the string $\mathbf{a}$ we consider the $l2$-MBQC implementation of the $l2$-1QC given by the single-qubit circuit
\begin{align}
\label{eq:OR_anc_1QC}
(iX)^{\mathsf{OR}_\kappa(\mathbf{a})} = \prod_{\substack{S\subseteq[\kappa] \\ S\neq\varnothing}}R_X\left(  \pi (-1)^{|S|-1} \frac{2^{\kappa-|S|+1}-1}{2^{\kappa-1}} \left( \bigoplus_{j\in S} a_j \right) \right).
\end{align}

This whole scheme can be expressed as an adaptive $l2$-MBQC with a 1D cluster state as follows.
\begin{theorem}
The function $\mathsf{OR}:\mathbb{F}_2^n\rightarrow\mathbb{F}_2$ can be computed via adaptive $l2$-MBQC on a 1D cluster state with resource costs $L_Q = 2\lceil \log_2(n)\rceil(n+1) + 2^{\lceil \log_2(n)\rceil +1} - 1$, $T_Q=3$, $L_C=(n+2)\lceil \log_2(n)\rceil + 2^{\lceil \log_2(n)\rceil}$, and $T_C=3$.
\end{theorem}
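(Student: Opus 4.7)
The plan is to realize Ref.~\cite{hoyer2005quantum}'s OR-reduction in two adaptive stages, laid out side by side on a single 1D cluster state resource. Write $\kappa:=\lceil\log_2(n)\rceil$. The first stage uses the constant-depth circuit in Eq.~(\ref{eq:or_reduc_1QC}) to map the $n$-bit input $\mathbf{x}$ to a $\kappa$-bit ancilla $\mathbf{a}$ with $\mathsf{OR}_n(\mathbf{x})=\mathsf{OR}_\kappa(\mathbf{a})$, and the second stage computes $\mathsf{OR}_\kappa(\mathbf{a})$ using the periodic Fourier decomposition in Eq.~(\ref{eq:OR_anc_1QC}).

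For the first stage I would allocate $\kappa$ disjoint 1D cluster state chains of length $2(n+1)$. The $j$-th chain simulates the single-qubit circuit $R_X(2\pi|\mathbf{x}|/2^j)$ acting on one ancilla qubit. Using $x_k=(1-(-1)^{x_k})/2$, this rotation factorizes into $n+1$ binary-angled commuting $X$-rotations, which after interposing trivial $R_Z(0)$ factors matches the alternating pattern in Eq.~(\ref{eq:adaptive_MBQC_su2}). Since the factors commute, the even-then-odd adaptive scheme of Theorem~\ref{thm:aMQC_universal} applies directly: measure every even-indexed qubit in the Pauli-$X$ basis, and then adaptively measure every odd-indexed qubit in the basis $X(\pm \pi/2^j)$ with sign determined by the corresponding input bit $x_k$ XOR the running parity of the preceding even outcomes. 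The parity of the odd outcomes along chain $j$ is then precisely $a_j$.

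For the second stage I would attach a single 1D cluster state chain of length $2(2^\kappa-1)+1=2^{\kappa+1}-1$. The product of $2^\kappa-1$ commuting $X$-rotations appearing in Eq.~(\ref{eq:OR_anc_1QC}), each conditioned on a subset parity $\bigoplus_{j\in S}a_j$ for nonempty $S\subseteq[\kappa]$, is implemented by the same even-then-odd strategy, where the side-processor first constructs the required subset parities by post-processing the first-stage outcomes.

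Verifying the advertised resource costs then reduces to bookkeeping. The qubit total $L_Q=2\kappa(n+1)+2^{\kappa+1}-1$ follows by summing chain lengths, and all cluster states can be prepared in parallel by a single depth-3 circuit of Hadamards and $\mathrm{C}Z$ gates, giving $T_Q=3$. The total $T_C=3$ is achieved by executing the nonadaptive Pauli-$X$ measurements on every even-indexed qubit across both stages in one round of classical communication, the adaptive odd-qubit measurements on the first-stage chains in a second round, and finally the adaptive odd-qubit measurements on the second-stage chain in a third round, once $\mathbf{a}$ and its $2^\kappa-1$ subset parities have been computed by the side-processor. The main obstacle is tracking the classical memory carefully to confirm the bound $L_C=(n+2)\kappa+2^\kappa$: this requires accounting for $n$ input bits together with two running-parity registers used per chain in the first stage, and for the $\kappa$ bits of $\mathbf{a}$ together with its $2^\kappa-1$ nonempty subset parities and an additional running-parity register used during the second stage.
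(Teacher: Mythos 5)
Your proposal is correct and follows essentially the same route as the paper: the Høyer--Špalek OR-reduction realized as $\kappa$ parallel chains implementing $R_X(2\pi|\mathbf{x}|/2^{j})$ followed by one chain for the periodic Fourier decomposition of $\mathsf{OR}_\kappa$, with the nonadaptive Pauli-$X$ round first and two adaptive rounds after, giving $T_C=3$. The only cosmetic difference is that the paper keeps a single connected chain and severs the segments by measuring the junction qubits (labels $2\mu(n+1)$) in the Pauli-$Z$ basis during round one, which is where your extra qubit per length-$2(n+1)$ segment goes.
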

\begin{proof}
Consider a ($2\kappa(n+1) + 2^{\kappa +1} - 1$)-qubit 1D cluster state, where $\kappa = \lceil \log_2(n)\rceil$.  The circuit in Eq.~(\ref{eq:or_reduc_1QC}) followed by the circuit in Eq.~(\ref{eq:OR_anc_1QC}) can be implemented in a measurement-based manner via the following sequence of measurements.
\begin{itemize}
    \item[(1)] Each qubit with label $q=2(\mu-1)(n+1) + 2j$ for each $j\in[n]$ and $\mu\in[\kappa]$ is measured in the Pauli $X$-basis.  Meanwhile, each qubit with label $q=2\mu(n+1)$ for $\mu\in[\kappa]$ is measured in the Pauli $Z$-basis.  Furthermore, each qubit with label $q= 2\kappa(n+1) + 2j$ for each $j\in[2^\kappa-1]$ is measured in the Pauli-$X$ basis.  Each measurement outcome $m_q\in\mathbb{F}_2$ is returned to the mod-2 linear classical side-processor.
    \item[(2)] For each qubit with label $q=2(\mu-1)(n+1) + 2j -1$ for each $j\in[n]$ and $\mu\in[\kappa]$, the side-processor computes and returns the bit
    \begin{align}
        s_q = \left( \bigoplus_{l=0}^{j-1} m_{2l + 2(\mu-1)(n+1)} \right)\oplus x_j.
    \end{align}
    Here we take the convention that $m_0=0$.  Each qubit is then measured in the eigenbasis of $X(\theta_q)$ where $\theta_q = (-1)^{s_q} \pi/2^{\mu+1}$.  Meanwhile, for each qubit with label $q=2(\mu-1)(n+1) + 2n+1$ for each $\mu\in[\kappa]$ the side-processor computes and returns the bit
    \begin{align}
        s_q = \bigoplus_{l=0}^{n} m_{2l + 2(\mu-1)(n+1)}.
    \end{align}
    Each qubit is then measured in the eigenbasis of $X(\theta_q)$ where $\theta_q = (-1)^{s_q+1} n\pi/2^{\mu}$.
    \item[(3)] For each qubit with label $q = 2\kappa (n+1) + 2j-1$ for each $j\in[2^\kappa - 1]$ the side-processor computes and returns the bit
    \begin{align}
        s_q = \left(\bigoplus_{0\leq l<j} m_{2\kappa(n+1)+ 2l} \right) \oplus \left( \bigoplus_{l\in\mathrm{Set}(j)} \left[ 
        \bigoplus_{{a}\in[n]} m_{2(l-1)(n+1)+2a-1}\right]\oplus m_{2(l-1)(n+1)}\oplus m_{2l(n+1)}\right).
    \end{align}
    Each qubit is then measured in the eigenbasis of $X(\theta_q)$ where
    \begin{align}
        \theta_q = \frac{\pi}{2^n} (-1)^{|\mathrm{Set}(j)|} \left( 2^{n- |\mathrm{Set}(j)| + 1} - 1 \right)(-1)^{s_q}.
    \end{align}
    Meanwhile, for the qubit with label $q=2\kappa(n+1) + 2^{\kappa+1}-1$ the side-processor computes and returns the bit
    \begin{align}
        s_q = \bigoplus_{l=0}^{2^{\kappa}-1} m_{2\kappa(n+1) + 2l}.
    \end{align}
    The qubit is then measured in the eigenbasis of $X(\theta_q)$ where
    \begin{align}
        \theta_q = (-1)^{s_q+1} \frac{\pi}{2^n}\sum_{\substack{S\subseteq[n] \\ S\neq \varnothing}} (-1)^{|S|-1} \left( 2^{n-|S|+1}-1\right).
    \end{align}
    Each measurement outcome $m_q\in\mathbb{F}_2$ is returned to the mod-2 linear classical side-processor, which then computes and returns the computational output
    \begin{align}
        y=\left( \bigoplus_{j\in[2^\kappa]} m_{2\kappa(n+1) + 2j-1} \right) \oplus m_{2\kappa(n+1)}.
    \end{align}
\end{itemize}
It follows that $y= \mathsf{OR}_n(\mathbf{x})$.
\end{proof}

\subsection{The OR-reduction for mod-$p$ functions}

A similar reduction was derived in Ref~\cite{moore1999quantum} for the mod-$p$ functions.  Namely, the function $\mathsf{Mod}_{p,0} :\mathbb{F}_2^n \rightarrow \mathbb{F}_2$ can be reduced to a computation of $\mathsf{OR}_\kappa:\mathbb{F}_2^\kappa \rightarrow \mathbb{F}_2$ where $\kappa = \lceil\log_2(p)\rceil$. This reduction uses an ancillary space of $\kappa$ many qubits to count---in binary---the number of 1's in the string $\mathbf{x}\in\mathbb{F}_2^n$, resetting the counter to $\mathbf{0}$ each time it reaches the value $p$. 

For convenience, let us denote the state $|\mathbf{a}\rangle \in (\mathbb{C}^2)^{\otimes \kappa}$ by a single ket with $\mathbf{a}$'s decimal representation, $|a\rangle$.  Conditioned on each bit $x_j$ for each $j\in[n]$ in the string $\mathbf{x}$, apply to the ancillary space the unitary
\begin{align}
    M = 
    \begin{cases}
    |a\rangle \mapsto |(a+1)\textrm{ mod }p\rangle &\textrm{if }a< p\\
    |a\rangle \mapsto |a\rangle &a\geq p
    \end{cases}.
\end{align}
This unitary is diagonal in the discrete Fourier basis.  Namely, let 
\begin{align}
    U_{\mathrm{DFT}}^{(p)} = \left(\frac{1}{\sqrt{p}}\sum_{a,b =0}^{p-1} \exp(-i{2\pi ab}/{p}) |a\rangle \langle b|\right) + \sum_{c=p}^{2^\kappa - 1} |c\rangle\langle c|
\end{align}
 be the discrete Fourier transform on the $p$-dimensional subspace spanned by $\{|a\rangle\}_{a=0}^{p-1}$.  It follows that
\begin{align}
    M = U_{\mathrm{DFT}}^{(p)} D {U_{\mathrm{DFT}}^{(p)}}^\dagger
\end{align}
where
\begin{align}
    D = \sum_{a=0}^{p-1} \exp(i2\pi a/p) |a\rangle \langle a| + \sum_{b=p}^{2^{\lceil \log_2(p)\rceil}-1} |b\rangle\langle b|.
\end{align}
The action of the unitary $M$ on the subspace spanned by $\{|b\rangle\}_{b=p}^{2^{\kappa}-1}$ is irrelevant, which allows us to replace $D$ by the single-qubit rotations
\begin{align}
    D = \bigotimes_{j=1}^{\lceil \log_2(p) \rceil} R_Z\left(\frac{2^j\pi}{p}\right).
\end{align}
Hence the unitary $U_{\mathrm{mod}\textrm{-}p\textrm{-reduc}} =\sum_{\mathbf{x}\in\mathbb{F}_2^n} |\mathbf{x}\rangle\langle\mathbf{x}|\otimes \left(U_{\mathrm{DFT}}^{(p)} D^{|\mathbf{x}|} U_{\mathrm{DFT}}^{(p) \dagger}\right)$ implements the desired reduction.

While for the OR-reduction we were able to replace the quantum Fourier transforms by a simple round of Hadamard gates, a similar simplification is not possible for the mod-$p$-reduction.  This is due to the fact that the eigenstates of $U_{\textrm{DFT}}^{(p)}$ with support on $\{|a\rangle\}_{a=0}^{p-1}$ are entangled.  At best, we can replace the unitary $U_{\mathrm{DFT}}^{(p)}$ by any unitary $V \in \mathrm{U}(2^\kappa)$ satisfying
$V|0\rangle = \frac{1}{\sqrt{p}}\sum_{a=0}^{p-1} |a\rangle$.  Implementing this unitary in the circuit model requires an expected $O(p^2\log^3(p))$ many one and two-qubit gates and hence depth $O(p^2\log^3(p))$.  Similarly, the $l2$-MBQC scheme implementing this reduction will require $T_Q+T_C = O(p^2\log^3(p))$ depending on if we implement the Fourier transforms directly on the quantum hardware to prepare the entangled resource state or in a measurement-based manner using a 2D cluster state.  

We now explicitly give an adaptive $l2$-MBQC protocol that computes $\mathsf{mod}_{3,0}$ using a 2D cluster state resource.  For simplicity, we discuss it's minimal implementation, which requires a $4n+35$-qubit resource state shown in Fig.~\ref{fig:mod_3_2d_mbqc}.
\begin{theorem}
The constant-depth quantum circuit introduced in Ref.~\cite{moore1999quantum} that computes the function $\mathsf{Mod}_{3,0}:\mathbb{F}_2^n\rightarrow\mathbb{F}_2$ can be recast as an adaptive $l2$-MBQC on the 2D resource state depicted in Fig.~\ref{fig:mod_3_2d_mbqc} with resource costs $L_C = n+4$, $L_Q = 4n + 35$,  $T_C = 13$, and $T_Q = 4$.
\end{theorem}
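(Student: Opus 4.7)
The plan is to recast Moore's original constant-depth quantum circuit for $\mathsf{Mod}_{3,0}$ as an adaptive $l2$-MBQC on the 2D cluster state depicted in Fig.~\ref{fig:mod_3_2d_mbqc}, then tally the four resource quantities. First I would spell out Moore's circuit explicitly for $p=3$. With $\kappa = \lceil \log_2(3) \rceil = 2$, a 2-qubit ancilla initialized in $|00\rangle$ undergoes four stages: (i) a fixed two-qubit unitary $V$ satisfying $V|00\rangle = \tfrac{1}{\sqrt{3}}\sum_{a=0}^{2}|a\rangle$, decomposed into a constant number of single-qubit rotations plus a single CZ; (ii) $n$ sequential applications of $D^{x_j} = R_Z(2\pi x_j/3)\otimes R_Z(4\pi x_j/3)$ to the ancilla; (iii) the inverse $V^\dagger$; and (iv) the measurement-based $\mathsf{OR}_2$ reduction obtained by entangling the ancilla with a 3-qubit GHZ state and using the periodic Fourier decomposition in Eq.~(\ref{eq:or_n_pfd}).

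Next I would construct the 2D cluster state as a ``ladder'' of two 1D chains, one for each ancillary qubit, joined by additional qubits and CZ links at both ends that host the entangling structure of $V$ and $V^\dagger$, together with a constant-sized appendage that implements the $\mathsf{OR}_2$ reduction. The bulk of each chain carries the conditional $R_Z(\pm 2\pi/3)$ and $R_Z(\pm 4\pi/3)$ rotations encoded in binary measurement-basis choices, exactly in the spirit of Theorems~\ref{thm:aMQC_mod3} and \ref{thm:mod_p_mbqc}. The $4n$ contribution to $L_Q$ then comes from the two chains of length $2n$, the $+35$ from the two DFT blocks and the $\mathsf{OR}_2$ appendage, and $L_C = n+4$ from the input string plus four running parities (odd/even measurement outcomes on each row). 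Preparation takes depth $T_Q = 4$: a Hadamard layer followed by three rounds of CZ on disjoint edge classes of the graph.

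With the resource state in place I would spell out the adaptive schedule in thirteen rounds. Invoking Eq.~(\ref{eq:adaptive_MBQC_su2}) on each chain separately and using the CZ links to propagate byproducts between rows, one organises: a few rounds to implement $V$ at the left end (the number being dictated by the non-Clifford angles in its Euler decomposition), a pair of rounds to carry out all conditional $R_Z$ rotations in the bulk in parallel, with the input bits $x_j$ entering as $\pm$ signs in the measurement angles, the symmetric rounds for $V^\dagger$ at the right end, and a final couple of rounds for the $\mathsf{OR}_2$ block. A routine check that the final parity of odd-labeled outcomes, possibly offset by a constant from $\mathsf{OR}_2(\mathbf{0})$, equals $\mathsf{Mod}_{3,0}(\mathbf{x})$ then closes the argument.

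The main obstacle I foresee is bookkeeping rather than conceptual: I must ensure that every measurement basis lies in the binary set $\{\pm\theta\}$ for some fixed $\theta$, that the adaptation matrix $A$ can be made lower triangular for a valid causal order of measurements, and that byproduct operators generated by the $V$ block are propagated through the bulk and into $V^\dagger$ using only mod-$2$ linear side-processing. This is what forces $T_C = 13$ rather than something smaller, since each non-Clifford angle in the DFT blocks produces a byproduct that cannot be pushed through a subsequent non-Clifford rotation without first being measured and absorbed into the classical control, so these commit to an additional adaptive round at each non-Clifford layer.
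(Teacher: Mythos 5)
Your proposal follows essentially the same route as the paper: decompose Moore's circuit as $V\,D^{|\mathbf{x}|}\,V^\dagger$ with $V|00\rangle = \tfrac{1}{\sqrt{3}}\sum_{a=0}^{2}|a\rangle$ followed by an $\mathsf{OR}_2$ evaluation, embed it on a two-row ladder graph state whose bulk carries the conditional $R_Z(\pm 2\pi/3)$ rotations and whose ends host the DFT blocks and the $\mathsf{OR}_2$ appendage, and attribute $T_C=13$ to the non-commuting layers in $V$ and $V^\dagger$ each forcing an adaptive round. The only cosmetic difference is that the paper executes all bulk conditional rotations in a single adaptive round (both rows in parallel, with the sign $(-1)^{j+1}$ distinguishing the two ancilla rotations) rather than a pair of rounds, but your resource accounting and causal-ordering concerns match the paper's construction.
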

\begin{proof}
Consider the $4n+35$ qubit graph state prepared on the graph shown in Fig.~\ref{fig:mod_3_2d_mbqc}.  The graph is depicted with each vertex positioned on a $2\times (2n+20)$ grid, which we use to label each qubit with a row and column index $(q_{\mathrm{row}},q_{\mathrm{col}})\in[2]\times[2n+20]$.  The circuit $U(\mathbf{x}) = U_{\mathrm{DFT}}^{(p)} D^{|\mathbf{x}|} U_{\mathrm{DFT}}^{(p) \dagger}$, followed by a computation of $\mathsf{OR}_2$, can be implemented in a measurement-based manner via the following sequence of measurements.  
\begin{itemize}
\item[(1)] Each qubit with label $(q_{\mathrm{row}},q_{\mathrm{col}}) =  (j,2k+5)$ for each $j\in[2]$ and $k\in[n+2]$ is measured in the Pauli-$X$ basis.  Moreover, qubits with label  $(q_{\mathrm{row}},q_{\mathrm{col}})=(1,l)$ for $l\in\{4,5,6,2n+10, 2n+ 11, 2n + 12, 2n+17, 2n+ 19\}$ are measured in the Pauli-$X$ basis. Meanwhile, qubits with label $(q_{\mathrm{row}},q_{\mathrm{col}}) = (j,1)$ for $j\in[2]$ are measured in the Pauli-$Y$ basis.  Each measurement outcome $m_{(q_{\mathrm{row}},q_{\mathrm{col}})}\in\mathbb{F}_2$ is returned to the mod-2 linear classical side-processor.
\item[(2)] For the qubits with label $(q_{\mathrm{row}},q_{\mathrm{col}}) = (j,2)$ for $j\in[2]$ the classical side-processor returns the bit
\begin{align}
s_{(j,2)} = m_{(j,1)}.
\end{align}
Each qubit is then measured in the eigenbasis of $X(\theta_{(j,2)})$ where $\theta_{(1,2)} = (-1)^{s_{(1,2)}}\pi/4$ and $\theta_{(2,2)} = (-1)^{s_{(2,2)}}2\theta$.  Here, $\cos(\theta) = 1/\sqrt{3}$ and $\sin(\theta) = \sqrt{2/3}$.  Each measurement outcome $m_{(q_{\mathrm{row}},q_{\mathrm{col}})}\in\mathbb{F}_2$ is then returned to the classical side-processor.
\item[(3)]  For the qubits with label $(q_{\mathrm{row}},q_{\mathrm{col}}) = (j,3)$ for $j\in[2]$ the classical side-processor returns the bit
\begin{align}
s_{(j,3)} = m_{(j,2)}.
\end{align}
Each qubit is then measured in the eigenbasis of $X(\theta_{(j,3)})$ where $\theta_{(j,3)} = (-1)^{s_{(j,3)}+1}\pi/2$.  Each measurement outcome $m_{(q_{\mathrm{row}},q_{\mathrm{col}})}\in\mathbb{F}_2$ is then returned to the classical side-processor.
\item[(4)] For the qubit with label $(2,4)$ the classical side-processor computes and returns the bit
\begin{align}
s_{(2,4)} = m_{(2,1)}\oplus m_{(2,3)}.
\end{align}
The qubit is then measured in the eigenbasis of $X(\theta_{(2,4)})$ where $\theta_{(2,4)} = (-1)^{s_{(2,4)}}\pi/2$.  The corresponding measurement outcome $m_{(2,4)}\in\mathbb{F}_2$ is then returned to the classical side-processor.
\item[(5)] For the qubit with label $(2,5)$ the classical side-processor computes and returns the bit
\begin{align}
s_{(2,5)} = m_{(2,2)}\oplus m_{(2,4)} \oplus m_{(1,1)} \oplus m_{(1,3)}.
\end{align}
The qubit is then measured in the eigenbasis of $X(\theta_{(2,5)})$ where $\theta_{(2,5)} = (-1)^{s_{(2,5)}+1}\pi/4$.  The corresponding measurement outcome $m_{(2,5)}\in\mathbb{F}_2$ is then returned to the classical side-processor.
\item[(6)] For the qubit with label $(2,6)$ the classical side-processor computes and returns the bit
\begin{align}
s_{(2,6)} = m_{(2,1)}\oplus m_{(2,2)} \oplus m_{(2,5)} .
\end{align}
The qubit is then measured in the eigenbasis of $X(\theta_{(2,6)})$ where $\theta_{(2,6)} = (-1)^{s_{(2,6)}+1}\pi/2$.  The corresponding measurement outcome $m_{(2,6)}\in\mathbb{F}_2$ is then returned to the classical side-processor.
\item[(7)] For each qubit with label $(q_{\mathrm{row}},q_{\mathrm{col}}) = (j,2k+6)$ for $j\in[2]$ and $k\in[n]$ the classical side-processor computes and returns the bit
\begin{align}
s_{(j,2k+6)} = \left(\bigoplus_{l=1}^{k+3} m_{(j,2l-1)}\right)\oplus x_k.
\end{align}
Each qubit is then measured in the eigenbasis of $X(\theta_{(j,2k+6)})$ where $\theta_{(j,2k+6)} = (-1)^{s_{(j,2k+6)} + j + 1} \pi/3$.
Meanwhile, for each qubit with label $(q_{\mathrm{row}},q_{\mathrm{col}}) = (j,2n+8)$ for $j\in[2]$ the classical side-processor computes and returns the bit
\begin{align}
s_{(j,2n+8)} = \bigoplus_{l=1}^{n+4} m_{(j,2l-1)}.
\end{align}
Each qubit is then measured in the eigenbasis of $X(\theta_{(j,2n+8)})$ where $\theta_{(j,2n+8)} = (-1)^{s_{(j,2n+8)}+j} n\pi/3$.  Each measurement outcome $m_{(q_{\mathrm{row}},q_{\mathrm{col}})}\in\mathbb{F}_2$ is then returned to the classical side-processor.
\item[(8)] For the qubit with label $(2,2n+10)$ the classical side-processor computes and returns the bit
\begin{align}
s_{(2,2n+10)} = \bigoplus_{l=1}^{n+5}m_{(2,2l-1)}.
\end{align}
The qubit is then measured in the eigenbasis of $X(\theta_{(2,2n+10)})$ where $\theta_{(2,2n+10)} = (-1)^{s_{(2,2n+10)}}\pi/2$.  The corresponding measurement outcome $m_{(2,2n+10)}\in\mathbb{F}_2$ is returned to the classical side-processor.
\item[(9)] For the qubit with label $(2,2n+11)$ the classical side-processor computes and returns the bit
\begin{align}
s_{(2,2n+11)} = \left(\bigoplus_{l=1}^{n+5}m_{(2,2l)}\right) \oplus \left( \bigoplus_{l=4}^{n+5}m_{(1,2l-1)} \right)\oplus m_{(1,1)}\oplus m_{(1,3)}.
\end{align}
The corresponding measurement outcome $m_{(2,2n+11)}\in\mathbb{F}_2$ is returned to the classical side-processor.
\item[(10)] For the qubit with label $(2,2n+12)$ the classical side-processor computes and returns the bit
\begin{align}
s_{(2,2n+12)} = \bigoplus_{l=1}^{n+6}m_{(2,2l-1)}.
\end{align}
The qubit is then measured in the eigenbasis of $X(\theta_{(2,2n+12)})$ where $\theta_{(2,2n+12)} = (-1)^{s_{(2,2n+12)}+1}\pi/2$.  The corresponding measurement outcome $m_{(2,2n+12)}\in\mathbb{F}_2$ is returned to the classical side-processor.
\item[(11)] For the qubits with label $(q_{\mathrm{row}},q_{\mathrm{col}}) = (j,2n+13)$ for $j\in[2]$ the classical side-processor returns the bit
\begin{align}
s_{(j,2n+13)} = \left(\bigoplus_{l=1}^{n+6}m_{(j,2l)}\right)\oplus m_{(3-j,5)}\oplus m_{(3-j,2n+11)}.
\end{align}
Each qubit is then measured in the eigenbasis of $X(\theta_{(j,2n+13)})$ where $\theta_{(j,2n+13)} = (-1)^{s_{(j,2n+13)}}\pi/2$.  Each measurement outcome $m_{(q_{\mathrm{row}},q_{\mathrm{col}})}\in\mathbb{F}_2$ is then returned to the classical side-processor.
\item[(12)] For the qubits with label $(q_{\mathrm{row}},q_{\mathrm{col}}) = (j,2n+14)$ for $j\in[2]$ the classical side-processor returns the bit
\begin{align}
s_{(j,2n+14)} = \bigoplus_{l=1}^{n+7}m_{(j,2l-1)}.
\end{align}
Each qubit is then measured in the eigenbasis of $X(\theta_{(j,2n+13)})$ where $\theta_{(1,2n+13)} = (-1)^{s_{(1,2n+13)}+1}2\theta$ and $\theta_{2,2n+13} = (-1)^{s_{(2,2n+13)}+1}\pi/4$.  Each measurement outcome $m_{(q_{\mathrm{row}},q_{\mathrm{col}})}\in\mathbb{F}_2$ is then returned to the classical side-processor.
\item[(13)] For the qubits with label $(q_{\mathrm{row}},q_{\mathrm{col}}) = (j,2n+15)$ for $j\in[2]$ the classical side-processor returns the bit
\begin{align}
s_{(j,2n+15)} = \left(\bigoplus_{l=1}^{n+7}m_{(j,2l)}\right)\oplus m_{(3-j,5)}\oplus m_{(3-j,2n+11)}.
\end{align}
Each qubit is then measured in the eigenbasis of $X(\theta_{(j,2n+15)})$ where $\theta_{(j,2n+15)} = (-1)^{s_{(j,2n+15)}+1}\pi/2$.  Each measurement outcome $m_{(q_{\mathrm{row}},q_{\mathrm{col}})}\in\mathbb{F}_2$ is then returned to the classical side-processor.
\item[(14)] For qubits $(1,2n+16)$, $(1,2n+18)$, and $(1,2n+20)$ the classical side-processor computes and returns the bits
\begin{subequations}
\begin{align}
s_{(1,2n+16)} &= \bigoplus_{l=1}^{n+8} m_{(1,2l-1)}\\
s_{(1,2n+18)} &= \left(\bigoplus_{l=1}^{n+8} m_{(2,2l-1)}\right)\oplus m_{(1,2n+17)}\\
s_{(1,2n+20)} &= \left(\bigoplus_{l=1}^{n+8} m_{(1,2l-1)}\oplus m_{(2,2l-1)}\right)\oplus m_{(1,2n+17)} \oplus m_{(1,2n+19)}.
\end{align}
\end{subequations}
Each qubit is then measured in the eigenbasis of $X(\theta_{(1,\mu)})$ where
$\theta_{(1,2n+16)} = (s_{1,2n+16}+1)\pi/2$,  $\theta_{(1,2n+18)} = (-1)^{s_{(1,2n+18)}+1} \pi/4$, and $\theta_{(1,2n+20)} = (-1)^{s_{(1,2n+20)}+1}\pi/4$.  Each measurement outcome $m_{(q_{\mathrm{row}},q_{\mathrm{col}})}\in\mathbb{F}_2$ is then returned to the classical side-processor, which then computes and returns the computational output
\begin{align}
y = m_{(1,2n+16)} \oplus m_{(1,2n+18)} \oplus m_{(1,2n+20)}.
\end{align}
\end{itemize}
It follows that $y=\mathsf{Mod}_{3,0}(\mathbf{x})$.
\end{proof}

\begin{figure}
\label{fig:mod_3_2d_mbqc}
\includegraphics[width=\linewidth]{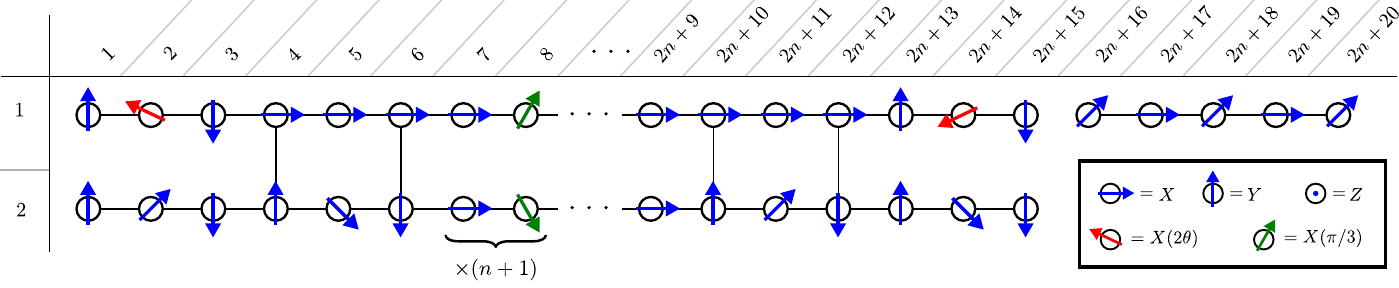}
\caption{The graph state and measurement pattern used to get a minimal implementation of the circuit in Ref.~\cite{moore1999quantum} that computes mod-3 via adaptive $l2$-MBQC.  Each qubit in the rectangular grid is given a row and column index.  Each arrow indicated the basis in which the qubit is measured, $X(\pm\theta)$ where the sign depends on an intermediate output from the mod-2 linear classical side processor.  All slanted blue arrows are $X(\pi/4)$ measurements.  }
\end{figure}

\end{widetext}

\bibliography{ref.bib}

\begin{thebibliography}{61}%
\makeatletter
\providecommand \@ifxundefined [1]{%
 \@ifx{#1\undefined}
}%
\providecommand \@ifnum [1]{%
 \ifnum #1\expandafter \@firstoftwo
 \else \expandafter \@secondoftwo
 \fi
}%
\providecommand \@ifx [1]{%
 \ifx #1\expandafter \@firstoftwo
 \else \expandafter \@secondoftwo
 \fi
}%
\providecommand \natexlab [1]{#1}%
\providecommand \enquote  [1]{``#1''}%
\providecommand \bibnamefont  [1]{#1}%
\providecommand \bibfnamefont [1]{#1}%
\providecommand \citenamefont [1]{#1}%
\providecommand \href@noop [0]{\@secondoftwo}%
\providecommand \href [0]{\begingroup \@sanitize@url \@href}%
\providecommand \@href[1]{\@@startlink{#1}\@@href}%
\providecommand \@@href[1]{\endgroup#1\@@endlink}%
\providecommand \@sanitize@url [0]{\catcode `\\12\catcode `\$12\catcode
  `\&12\catcode `\#12\catcode `\^12\catcode `\_12\catcode `\%12\relax}%
\providecommand \@@startlink[1]{}%
\providecommand \@@endlink[0]{}%
\providecommand \url  [0]{\begingroup\@sanitize@url \@url }%
\providecommand \@url [1]{\endgroup\@href {#1}{\urlprefix }}%
\providecommand \urlprefix  [0]{URL }%
\providecommand \Eprint [0]{\href }%
\providecommand \doibase [0]{https://doi.org/}%
\providecommand \selectlanguage [0]{\@gobble}%
\providecommand \bibinfo  [0]{\@secondoftwo}%
\providecommand \bibfield  [0]{\@secondoftwo}%
\providecommand \translation [1]{[#1]}%
\providecommand \BibitemOpen [0]{}%
\providecommand \bibitemStop [0]{}%
\providecommand \bibitemNoStop [0]{.\EOS\space}%
\providecommand \EOS [0]{\spacefactor3000\relax}%
\providecommand \BibitemShut  [1]{\csname bibitem#1\endcsname}%
\let\auto@bib@innerbib\@empty
\bibitem [{\citenamefont {Raussendorf}\ and\ \citenamefont
  {Briegel}(2001)}]{raussendorf2001a}%
  \BibitemOpen
  \bibfield  {author} {\bibinfo {author} {\bibfnamefont {R.}~\bibnamefont
  {Raussendorf}}\ and\ \bibinfo {author} {\bibfnamefont {H.~J.}\ \bibnamefont
  {Briegel}},\ }\bibfield  {title} {\bibinfo {title} {A one-way quantum
  computer},\ }\href {https://doi.org/10.1103/PhysRevLett.86.5188} {\bibfield
  {journal} {\bibinfo  {journal} {Phys. Rev. Lett.}\ }\textbf {\bibinfo
  {volume} {86}},\ \bibinfo {pages} {5188} (\bibinfo {year}
  {2001})}\BibitemShut {NoStop}%
\bibitem [{\citenamefont {Raussendorf}\ \emph {et~al.}(2003)\citenamefont
  {Raussendorf}, \citenamefont {Browne},\ and\ \citenamefont
  {Briegel}}]{raussendorf2003measurement}%
  \BibitemOpen
  \bibfield  {author} {\bibinfo {author} {\bibfnamefont {R.}~\bibnamefont
  {Raussendorf}}, \bibinfo {author} {\bibfnamefont {D.~E.}\ \bibnamefont
  {Browne}},\ and\ \bibinfo {author} {\bibfnamefont {H.~J.}\ \bibnamefont
  {Briegel}},\ }\bibfield  {title} {\bibinfo {title} {Measurement-based quantum
  computation on cluster states},\ }\href
  {https://doi.org/10.1103/PhysRevA.68.022312} {\bibfield  {journal} {\bibinfo
  {journal} {Phys. Rev. A}\ }\textbf {\bibinfo {volume} {68}},\ \bibinfo
  {pages} {022312} (\bibinfo {year} {2003})}\BibitemShut {NoStop}%
\bibitem [{\citenamefont {Jozsa}(2006)}]{jozsa2006introduction}%
  \BibitemOpen
  \bibfield  {author} {\bibinfo {author} {\bibfnamefont {R.}~\bibnamefont
  {Jozsa}},\ }\bibfield  {title} {\bibinfo {title} {An introduction to
  measurement based quantum computation},\ }\href
  {https://doi.org/https://doi.org/10.48550/arXiv.quant-ph/0508124} {\bibfield
  {journal} {\bibinfo  {journal} {NATO Science Series, III: Computer and
  Systems Sciences. Quantum Information Processing-From Theory to Experiment}\
  }\textbf {\bibinfo {volume} {199}},\ \bibinfo {pages} {137} (\bibinfo {year}
  {2006})}\BibitemShut {NoStop}%
\bibitem [{\citenamefont {Anders}\ and\ \citenamefont
  {Browne}(2009)}]{anders2009computational}%
  \BibitemOpen
  \bibfield  {author} {\bibinfo {author} {\bibfnamefont {J.}~\bibnamefont
  {Anders}}\ and\ \bibinfo {author} {\bibfnamefont {D.~E.}\ \bibnamefont
  {Browne}},\ }\bibfield  {title} {\bibinfo {title} {Computational power of
  correlations},\ }\href {https://doi.org/10.1103/PhysRevLett.102.050502}
  {\bibfield  {journal} {\bibinfo  {journal} {Phys. Rev. Lett.}\ }\textbf
  {\bibinfo {volume} {102}},\ \bibinfo {pages} {050502} (\bibinfo {year}
  {2009})}\BibitemShut {NoStop}%
\bibitem [{\citenamefont {Raussendorf}(2013)}]{raussendorf2013contextuality}%
  \BibitemOpen
  \bibfield  {author} {\bibinfo {author} {\bibfnamefont {R.}~\bibnamefont
  {Raussendorf}},\ }\bibfield  {title} {\bibinfo {title} {Contextuality in
  measurement-based quantum computation},\ }\href
  {https://doi.org/10.1103/PhysRevA.88.022322} {\bibfield  {journal} {\bibinfo
  {journal} {Phys. Rev. A}\ }\textbf {\bibinfo {volume} {88}},\ \bibinfo
  {pages} {022322} (\bibinfo {year} {2013})}\BibitemShut {NoStop}%
\bibitem [{\citenamefont {Hoban}\ \emph
  {et~al.}(2011{\natexlab{a}})\citenamefont {Hoban}, \citenamefont {Campbell},
  \citenamefont {Loukopoulos},\ and\ \citenamefont
  {Browne}}]{hoban2011non-adaptive}%
  \BibitemOpen
  \bibfield  {author} {\bibinfo {author} {\bibfnamefont {M.~J.}\ \bibnamefont
  {Hoban}}, \bibinfo {author} {\bibfnamefont {E.~T.}\ \bibnamefont {Campbell}},
  \bibinfo {author} {\bibfnamefont {K.}~\bibnamefont {Loukopoulos}},\ and\
  \bibinfo {author} {\bibfnamefont {D.~E.}\ \bibnamefont {Browne}},\ }\bibfield
   {title} {\bibinfo {title} {Non-adaptive measurement-based quantum
  computation and multi-party bell inequalities},\ }\href
  {https://doi.org/10.1088/1367-2630/13/2/023014} {\bibfield  {journal}
  {\bibinfo  {journal} {New Journal of Physics}\ }\textbf {\bibinfo {volume}
  {13}},\ \bibinfo {pages} {023014} (\bibinfo {year}
  {2011}{\natexlab{a}})}\BibitemShut {NoStop}%
\bibitem [{\citenamefont {Frembs}\ \emph {et~al.}(2018)\citenamefont {Frembs},
  \citenamefont {Roberts},\ and\ \citenamefont
  {Bartlett}}]{frembs2018contextuality}%
  \BibitemOpen
  \bibfield  {author} {\bibinfo {author} {\bibfnamefont {M.}~\bibnamefont
  {Frembs}}, \bibinfo {author} {\bibfnamefont {S.}~\bibnamefont {Roberts}},\
  and\ \bibinfo {author} {\bibfnamefont {S.~D.}\ \bibnamefont {Bartlett}},\
  }\bibfield  {title} {\bibinfo {title} {Contextuality as a resource for
  measurement-based quantum computation beyond qubits},\ }\href
  {https://doi.org/10.1088/1367-2630/aae3ad} {\bibfield  {journal} {\bibinfo
  {journal} {New Journal of Physics}\ }\textbf {\bibinfo {volume} {20}},\
  \bibinfo {pages} {103011} (\bibinfo {year} {2018})}\BibitemShut {NoStop}%
\bibitem [{\citenamefont {Frembs}\ \emph {et~al.}(2022)\citenamefont {Frembs},
  \citenamefont {Roberts}, \citenamefont {Campbell},\ and\ \citenamefont
  {Bartlett}}]{frembs2022hierarchies}%
  \BibitemOpen
  \bibfield  {author} {\bibinfo {author} {\bibfnamefont {M.}~\bibnamefont
  {Frembs}}, \bibinfo {author} {\bibfnamefont {S.}~\bibnamefont {Roberts}},
  \bibinfo {author} {\bibfnamefont {E.~T.}\ \bibnamefont {Campbell}},\ and\
  \bibinfo {author} {\bibfnamefont {S.~D.}\ \bibnamefont {Bartlett}},\
  }\bibfield  {title} {\bibinfo {title} {Hierarchies of resources for
  measurement-based quantum computation},\ }\bibfield  {journal} {\bibinfo
  {journal} {arXiv preprint arXiv:2203.09965}\ }\href
  {https://doi.org/https://doi.org/10.48550/arXiv.2203.09965}
  {https://doi.org/10.48550/arXiv.2203.09965} (\bibinfo {year}
  {2022})\BibitemShut {NoStop}%
\bibitem [{\citenamefont {Abramsky}\ \emph {et~al.}(2017)\citenamefont
  {Abramsky}, \citenamefont {Barbosa},\ and\ \citenamefont
  {Mansfield}}]{abramsky2017contextual}%
  \BibitemOpen
  \bibfield  {author} {\bibinfo {author} {\bibfnamefont {S.}~\bibnamefont
  {Abramsky}}, \bibinfo {author} {\bibfnamefont {R.~S.}\ \bibnamefont
  {Barbosa}},\ and\ \bibinfo {author} {\bibfnamefont {S.}~\bibnamefont
  {Mansfield}},\ }\bibfield  {title} {\bibinfo {title} {Contextual fraction as
  a measure of contextuality},\ }\href
  {https://doi.org/10.1103/PhysRevLett.119.050504} {\bibfield  {journal}
  {\bibinfo  {journal} {Phys. Rev. Lett.}\ }\textbf {\bibinfo {volume} {119}},\
  \bibinfo {pages} {050504} (\bibinfo {year} {2017})}\BibitemShut {NoStop}%
\bibitem [{\citenamefont {Mori}(2019)}]{mori2018periodic}%
  \BibitemOpen
  \bibfield  {author} {\bibinfo {author} {\bibfnamefont {R.}~\bibnamefont
  {Mori}},\ }\bibfield  {title} {\bibinfo {title} {Periodic fourier
  representation of boolean functions},\ }\href
  {https://doi.org/https://doi.org/10.26421/QIC19.5-6-2} {\bibfield  {journal}
  {\bibinfo  {journal} {Quantum Info. Comput.}\ }\textbf {\bibinfo {volume}
  {19}},\ \bibinfo {pages} {392–412} (\bibinfo {year} {2019})}\BibitemShut
  {NoStop}%
\bibitem [{\citenamefont {Mackeprang}\ \emph {et~al.}()\citenamefont
  {Mackeprang}, \citenamefont {Bhatti}, \citenamefont {Hoban},\ and\
  \citenamefont {Barz}}]{mackeprang2022the}%
  \BibitemOpen
  \bibfield  {author} {\bibinfo {author} {\bibfnamefont {J.}~\bibnamefont
  {Mackeprang}}, \bibinfo {author} {\bibfnamefont {D.}~\bibnamefont {Bhatti}},
  \bibinfo {author} {\bibfnamefont {M.~J.}\ \bibnamefont {Hoban}},\ and\
  \bibinfo {author} {\bibfnamefont {S.}~\bibnamefont {Barz}},\ }\bibfield
  {title} {\bibinfo {title} {The power of qutrits for non-adaptive
  measurement-based quantum computing},\ }\bibfield  {journal} {\bibinfo
  {journal} {arXiv preprint arXiv:2203.12411}\ }\href
  {https://doi.org/10.48550/ARXIV.2203.12411}
  {10.48550/ARXIV.2203.12411}\BibitemShut {NoStop}%
\bibitem [{\citenamefont {Terhal}\ and\ \citenamefont
  {DiVincenzo}(2004)}]{terhal2004adaptive}%
  \BibitemOpen
  \bibfield  {author} {\bibinfo {author} {\bibfnamefont {B.~M.}\ \bibnamefont
  {Terhal}}\ and\ \bibinfo {author} {\bibfnamefont {D.~P.}\ \bibnamefont
  {DiVincenzo}},\ }\bibfield  {title} {\bibinfo {title} {Adaptive quantum
  computation, constant depth quantum circuits and arthur-merlin games},\
  }\href {https://doi.org/10.26421/QIC4.2-5} {\bibfield  {journal} {\bibinfo
  {journal} {Quantum Inf. Comput.}\ }\textbf {\bibinfo {volume} {4}},\ \bibinfo
  {pages} {134} (\bibinfo {year} {2004})}\BibitemShut {NoStop}%
\bibitem [{\citenamefont {Browne}\ \emph {et~al.}(2011)\citenamefont {Browne},
  \citenamefont {Kashefi},\ and\ \citenamefont
  {Perdrix}}]{browne2010computational}%
  \BibitemOpen
  \bibfield  {author} {\bibinfo {author} {\bibfnamefont {D.}~\bibnamefont
  {Browne}}, \bibinfo {author} {\bibfnamefont {E.}~\bibnamefont {Kashefi}},\
  and\ \bibinfo {author} {\bibfnamefont {S.}~\bibnamefont {Perdrix}},\
  }\bibfield  {title} {\bibinfo {title} {Computational depth complexity of
  measurement-based quantum computation},\ }in\ \href
  {https://doi.org/https://doi.org/10.1007/978-3-642-18073-6_4} {\emph
  {\bibinfo {booktitle} {Theory of Quantum Computation, Communication, and
  Cryptography}}}\ (\bibinfo  {publisher} {Springer},\ \bibinfo {address}
  {Berlin, Heidelberg},\ \bibinfo {year} {2011})\ pp.\ \bibinfo {pages}
  {35--46}\BibitemShut {NoStop}%
\bibitem [{\citenamefont {Takahashi}\ and\ \citenamefont
  {Tani}(2016)}]{takahashi2016collapse}%
  \BibitemOpen
  \bibfield  {author} {\bibinfo {author} {\bibfnamefont {Y.}~\bibnamefont
  {Takahashi}}\ and\ \bibinfo {author} {\bibfnamefont {S.}~\bibnamefont
  {Tani}},\ }\bibfield  {title} {\bibinfo {title} {Collapse of the hierarchy of
  constant-depth exact quantum circuits},\ }\href
  {https://doi.org/10.1007/s00037-016-0140-0} {\bibfield  {journal} {\bibinfo
  {journal} {Computational Complexity}\ }\textbf {\bibinfo {volume} {25}},\
  \bibinfo {pages} {849} (\bibinfo {year} {2016})}\BibitemShut {NoStop}%
\bibitem [{\citenamefont {Briegel}\ and\ \citenamefont
  {Raussendorf}(2001)}]{briegel2001persistent}%
  \BibitemOpen
  \bibfield  {author} {\bibinfo {author} {\bibfnamefont {H.~J.}\ \bibnamefont
  {Briegel}}\ and\ \bibinfo {author} {\bibfnamefont {R.}~\bibnamefont
  {Raussendorf}},\ }\bibfield  {title} {\bibinfo {title} {Persistent
  entanglement in arrays of interacting particles},\ }\href
  {https://doi.org/10.1103/PhysRevLett.86.910} {\bibfield  {journal} {\bibinfo
  {journal} {Phys. Rev. Lett.}\ }\textbf {\bibinfo {volume} {86}},\ \bibinfo
  {pages} {910} (\bibinfo {year} {2001})}\BibitemShut {NoStop}%
\bibitem [{\citenamefont {O'Donnell}(2014)}]{o2014analysis}%
  \BibitemOpen
  \bibfield  {author} {\bibinfo {author} {\bibfnamefont {R.}~\bibnamefont
  {O'Donnell}},\ }\href
  {https://doi.org/https://doi.org/10.1017/CBO9781139814782} {\emph {\bibinfo
  {title} {Analysis of Boolean functions}}}\ (\bibinfo  {publisher} {Cambridge
  University Press},\ \bibinfo {year} {2014})\BibitemShut {NoStop}%
\bibitem [{\citenamefont {Ablayev}\ \emph {et~al.}(2005)\citenamefont
  {Ablayev}, \citenamefont {Gainutdinova}, \citenamefont {Karpinski},
  \citenamefont {Moore},\ and\ \citenamefont
  {Pollett}}]{ablayev2005computational}%
  \BibitemOpen
  \bibfield  {author} {\bibinfo {author} {\bibfnamefont {F.}~\bibnamefont
  {Ablayev}}, \bibinfo {author} {\bibfnamefont {A.}~\bibnamefont
  {Gainutdinova}}, \bibinfo {author} {\bibfnamefont {M.}~\bibnamefont
  {Karpinski}}, \bibinfo {author} {\bibfnamefont {C.}~\bibnamefont {Moore}},\
  and\ \bibinfo {author} {\bibfnamefont {C.}~\bibnamefont {Pollett}},\
  }\bibfield  {title} {\bibinfo {title} {On the computational power of
  probabilistic and quantum branching program},\ }\href
  {https://doi.org/https://doi.org/10.1016/j.ic.2005.04.003} {\bibfield
  {journal} {\bibinfo  {journal} {Information and Computation}\ }\textbf
  {\bibinfo {volume} {203}},\ \bibinfo {pages} {145} (\bibinfo {year}
  {2005})}\BibitemShut {NoStop}%
\bibitem [{\citenamefont {Cosentino}\ \emph {et~al.}(2013)\citenamefont
  {Cosentino}, \citenamefont {Kothari},\ and\ \citenamefont
  {Paetznick}}]{cosentino2013dequantizing}%
  \BibitemOpen
  \bibfield  {author} {\bibinfo {author} {\bibfnamefont {A.}~\bibnamefont
  {Cosentino}}, \bibinfo {author} {\bibfnamefont {R.}~\bibnamefont {Kothari}},\
  and\ \bibinfo {author} {\bibfnamefont {A.}~\bibnamefont {Paetznick}},\
  }\bibfield  {title} {\bibinfo {title} {{Dequantizing Read-once Quantum
  Formulas}},\ }in\ \href {https://doi.org/10.4230/LIPIcs.TQC.2013.80} {\emph
  {\bibinfo {booktitle} {8th Conference on the Theory of Quantum Computation,
  Communication and Cryptography (TQC 2013)}}},\ \bibinfo {series} {Leibniz
  International Proceedings in Informatics (LIPIcs)}, Vol.~\bibinfo {volume}
  {22},\ \bibinfo {editor} {edited by\ \bibinfo {editor} {\bibfnamefont
  {S.}~\bibnamefont {Severini}}\ and\ \bibinfo {editor} {\bibfnamefont
  {F.}~\bibnamefont {Brandao}}}\ (\bibinfo  {publisher} {Schloss
  Dagstuhl--Leibniz-Zentrum fuer Informatik},\ \bibinfo {address} {Dagstuhl,
  Germany},\ \bibinfo {year} {2013})\ pp.\ \bibinfo {pages}
  {80--92}\BibitemShut {NoStop}%
\bibitem [{\citenamefont {Clementi}\ \emph {et~al.}(2017)\citenamefont
  {Clementi}, \citenamefont {Pappa}, \citenamefont {Eckstein}, \citenamefont
  {Walmsley}, \citenamefont {Kashefi},\ and\ \citenamefont
  {Barz}}]{clementi2017classical}%
  \BibitemOpen
  \bibfield  {author} {\bibinfo {author} {\bibfnamefont {M.}~\bibnamefont
  {Clementi}}, \bibinfo {author} {\bibfnamefont {A.}~\bibnamefont {Pappa}},
  \bibinfo {author} {\bibfnamefont {A.}~\bibnamefont {Eckstein}}, \bibinfo
  {author} {\bibfnamefont {I.~A.}\ \bibnamefont {Walmsley}}, \bibinfo {author}
  {\bibfnamefont {E.}~\bibnamefont {Kashefi}},\ and\ \bibinfo {author}
  {\bibfnamefont {S.}~\bibnamefont {Barz}},\ }\bibfield  {title} {\bibinfo
  {title} {Classical multiparty computation using quantum resources},\ }\href
  {https://doi.org/10.1103/PhysRevA.96.062317} {\bibfield  {journal} {\bibinfo
  {journal} {Phys. Rev. A}\ }\textbf {\bibinfo {volume} {96}},\ \bibinfo
  {pages} {062317} (\bibinfo {year} {2017})}\BibitemShut {NoStop}%
\bibitem [{\citenamefont {Mansfield}\ and\ \citenamefont
  {Kashefi}(2018)}]{mansfield2018quantum}%
  \BibitemOpen
  \bibfield  {author} {\bibinfo {author} {\bibfnamefont {S.}~\bibnamefont
  {Mansfield}}\ and\ \bibinfo {author} {\bibfnamefont {E.}~\bibnamefont
  {Kashefi}},\ }\bibfield  {title} {\bibinfo {title} {Quantum advantage from
  sequential-transformation contextuality},\ }\href
  {https://doi.org/10.1103/PhysRevLett.121.230401} {\bibfield  {journal}
  {\bibinfo  {journal} {Phys. Rev. Lett.}\ }\textbf {\bibinfo {volume} {121}},\
  \bibinfo {pages} {230401} (\bibinfo {year} {2018})}\BibitemShut {NoStop}%
\bibitem [{\citenamefont {Emeriau}\ \emph {et~al.}(2022)\citenamefont
  {Emeriau}, \citenamefont {Howard},\ and\ \citenamefont
  {Mansfield}}]{emeriau2022quantum}%
  \BibitemOpen
  \bibfield  {author} {\bibinfo {author} {\bibfnamefont {P.-E.}\ \bibnamefont
  {Emeriau}}, \bibinfo {author} {\bibfnamefont {M.}~\bibnamefont {Howard}},\
  and\ \bibinfo {author} {\bibfnamefont {S.}~\bibnamefont {Mansfield}},\
  }\bibfield  {title} {\bibinfo {title} {Quantum advantage in information
  retrieval},\ }\href {https://doi.org/10.1103/PRXQuantum.3.020307} {\bibfield
  {journal} {\bibinfo  {journal} {PRX Quantum}\ }\textbf {\bibinfo {volume}
  {3}},\ \bibinfo {pages} {020307} (\bibinfo {year} {2022})}\BibitemShut
  {NoStop}%
\bibitem [{\citenamefont {Gross}\ and\ \citenamefont
  {Eisert}(2007)}]{gross2007novel}%
  \BibitemOpen
  \bibfield  {author} {\bibinfo {author} {\bibfnamefont {D.}~\bibnamefont
  {Gross}}\ and\ \bibinfo {author} {\bibfnamefont {J.}~\bibnamefont {Eisert}},\
  }\bibfield  {title} {\bibinfo {title} {Novel schemes for measurement-based
  quantum computation},\ }\href {https://doi.org/10.1103/PhysRevLett.98.220503}
  {\bibfield  {journal} {\bibinfo  {journal} {Phys. Rev. Lett.}\ }\textbf
  {\bibinfo {volume} {98}},\ \bibinfo {pages} {220503} (\bibinfo {year}
  {2007})}\BibitemShut {NoStop}%
\bibitem [{\citenamefont {Maslov}\ \emph {et~al.}(2021)\citenamefont {Maslov},
  \citenamefont {Kim}, \citenamefont {Bravyi}, \citenamefont {Yoder},\ and\
  \citenamefont {Sheldon}}]{maslov2020quantum}%
  \BibitemOpen
  \bibfield  {author} {\bibinfo {author} {\bibfnamefont {D.}~\bibnamefont
  {Maslov}}, \bibinfo {author} {\bibfnamefont {J.-S.}\ \bibnamefont {Kim}},
  \bibinfo {author} {\bibfnamefont {S.}~\bibnamefont {Bravyi}}, \bibinfo
  {author} {\bibfnamefont {T.~J.}\ \bibnamefont {Yoder}},\ and\ \bibinfo
  {author} {\bibfnamefont {S.}~\bibnamefont {Sheldon}},\ }\bibfield  {title}
  {\bibinfo {title} {Quantum advantage for computations with limited space},\
  }\bibfield  {journal} {\bibinfo  {journal} {Nature Physics}\ }\textbf
  {\bibinfo {volume} {17}},\ \href {https://doi.org/10.1038/s41567-021-01271-7}
  {10.1038/s41567-021-01271-7} (\bibinfo {year} {2021})\BibitemShut {NoStop}%
\bibitem [{\citenamefont {Low}\ \emph {et~al.}(2016)\citenamefont {Low},
  \citenamefont {Yoder},\ and\ \citenamefont {Chuang}}]{low2016methodology}%
  \BibitemOpen
  \bibfield  {author} {\bibinfo {author} {\bibfnamefont {G.~H.}\ \bibnamefont
  {Low}}, \bibinfo {author} {\bibfnamefont {T.~J.}\ \bibnamefont {Yoder}},\
  and\ \bibinfo {author} {\bibfnamefont {I.~L.}\ \bibnamefont {Chuang}},\
  }\bibfield  {title} {\bibinfo {title} {Methodology of resonant equiangular
  composite quantum gates},\ }\href {https://doi.org/10.1103/PhysRevX.6.041067}
  {\bibfield  {journal} {\bibinfo  {journal} {Phys. Rev. X}\ }\textbf {\bibinfo
  {volume} {6}},\ \bibinfo {pages} {041067} (\bibinfo {year}
  {2016})}\BibitemShut {NoStop}%
\bibitem [{\citenamefont {Low}\ and\ \citenamefont
  {Chuang}(2017)}]{low2017optimal}%
  \BibitemOpen
  \bibfield  {author} {\bibinfo {author} {\bibfnamefont {G.~H.}\ \bibnamefont
  {Low}}\ and\ \bibinfo {author} {\bibfnamefont {I.~L.}\ \bibnamefont
  {Chuang}},\ }\bibfield  {title} {\bibinfo {title} {Optimal hamiltonian
  simulation by quantum signal processing},\ }\href
  {https://doi.org/10.1103/PhysRevLett.118.010501} {\bibfield  {journal}
  {\bibinfo  {journal} {Phys. Rev. Lett.}\ }\textbf {\bibinfo {volume} {118}},\
  \bibinfo {pages} {010501} (\bibinfo {year} {2017})}\BibitemShut {NoStop}%
\bibitem [{\citenamefont {Haah}(2019)}]{haah2019product}%
  \BibitemOpen
  \bibfield  {author} {\bibinfo {author} {\bibfnamefont {J.}~\bibnamefont
  {Haah}},\ }\bibfield  {title} {\bibinfo {title} {Product {D}ecomposition of
  {P}eriodic {F}unctions in {Q}uantum {S}ignal {P}rocessing},\ }\href
  {https://doi.org/10.22331/q-2019-10-07-190} {\bibfield  {journal} {\bibinfo
  {journal} {{Quantum}}\ }\textbf {\bibinfo {volume} {3}},\ \bibinfo {pages}
  {190} (\bibinfo {year} {2019})}\BibitemShut {NoStop}%
\bibitem [{\citenamefont {Moore}(1999)}]{moore1999quantum}%
  \BibitemOpen
  \bibfield  {author} {\bibinfo {author} {\bibfnamefont {C.}~\bibnamefont
  {Moore}},\ }\bibfield  {title} {\bibinfo {title} {Quantum circuits: Fanout,
  parity, and counting},\ }\bibfield  {journal} {\bibinfo  {journal} {arXiv
  preprint quant-ph/9903046}\ }\href
  {https://doi.org/https://doi.org/10.48550/arXiv.quant-ph/9903046}
  {https://doi.org/10.48550/arXiv.quant-ph/9903046} (\bibinfo {year}
  {1999})\BibitemShut {NoStop}%
\bibitem [{\citenamefont {Green}\ \emph {et~al.}(2002)\citenamefont {Green},
  \citenamefont {Homer}, \citenamefont {Moore},\ and\ \citenamefont
  {Pollett}}]{green2002counting}%
  \BibitemOpen
  \bibfield  {author} {\bibinfo {author} {\bibfnamefont {F.}~\bibnamefont
  {Green}}, \bibinfo {author} {\bibfnamefont {S.}~\bibnamefont {Homer}},
  \bibinfo {author} {\bibfnamefont {C.}~\bibnamefont {Moore}},\ and\ \bibinfo
  {author} {\bibfnamefont {C.}~\bibnamefont {Pollett}},\ }\bibfield  {title}
  {\bibinfo {title} {Counting, fanout and the complexity of quantum acc},\
  }\href {https://doi.org/https://doi.org/10.26421/QIC2.1-3} {\bibfield
  {journal} {\bibinfo  {journal} {Quantum Info. Comput.}\ }\textbf {\bibinfo
  {volume} {2}},\ \bibinfo {pages} {35–65} (\bibinfo {year}
  {2002})}\BibitemShut {NoStop}%
\bibitem [{\citenamefont {H{\o}yer}\ and\ \citenamefont
  {{\v{S}}palek}(2005)}]{hoyer2005quantum}%
  \BibitemOpen
  \bibfield  {author} {\bibinfo {author} {\bibfnamefont {P.}~\bibnamefont
  {H{\o}yer}}\ and\ \bibinfo {author} {\bibfnamefont {R.}~\bibnamefont
  {{\v{S}}palek}},\ }\bibfield  {title} {\bibinfo {title} {Quantum fan-out is
  powerful},\ }\href {https://doi.org/10.4086/toc.2005.v001a005} {\bibfield
  {journal} {\bibinfo  {journal} {Theory of computing}\ }\textbf {\bibinfo
  {volume} {1}},\ \bibinfo {pages} {81} (\bibinfo {year} {2005})}\BibitemShut
  {NoStop}%
\bibitem [{\citenamefont {Anand}(2022)}]{anand2022power}%
  \BibitemOpen
  \bibfield  {author} {\bibinfo {author} {\bibfnamefont {A.}~\bibnamefont
  {Anand}},\ }\emph {\bibinfo {title} {On the power of interleaved low-depth
  quantum and classical circuits}},\ \href {http://hdl.handle.net/10012/18805}
  {Master's thesis},\ \bibinfo  {school} {University of Waterloo} (\bibinfo
  {year} {2022})\BibitemShut {NoStop}%
\bibitem [{\citenamefont {Demirel}\ \emph {et~al.}(2021)\citenamefont
  {Demirel}, \citenamefont {Weng}, \citenamefont {Thalacker}, \citenamefont
  {Hoban},\ and\ \citenamefont {Barz}}]{demirel2021correlations}%
  \BibitemOpen
  \bibfield  {author} {\bibinfo {author} {\bibfnamefont {B.}~\bibnamefont
  {Demirel}}, \bibinfo {author} {\bibfnamefont {W.}~\bibnamefont {Weng}},
  \bibinfo {author} {\bibfnamefont {C.}~\bibnamefont {Thalacker}}, \bibinfo
  {author} {\bibfnamefont {M.}~\bibnamefont {Hoban}},\ and\ \bibinfo {author}
  {\bibfnamefont {S.}~\bibnamefont {Barz}},\ }\bibfield  {title} {\bibinfo
  {title} {Correlations for computation and computation for correlations},\
  }\href {https://doi.org/https://doi.org/10.1038/s41534-020-00354-2}
  {\bibfield  {journal} {\bibinfo  {journal} {npj Quantum Information}\
  }\textbf {\bibinfo {volume} {7}},\ \bibinfo {pages} {1} (\bibinfo {year}
  {2021})}\BibitemShut {NoStop}%
\bibitem [{\citenamefont {Sheffer}\ \emph {et~al.}(2022)\citenamefont
  {Sheffer}, \citenamefont {Azses},\ and\ \citenamefont
  {Dalla~Torre}}]{sheffer2022playing}%
  \BibitemOpen
  \bibfield  {author} {\bibinfo {author} {\bibfnamefont {M.}~\bibnamefont
  {Sheffer}}, \bibinfo {author} {\bibfnamefont {D.}~\bibnamefont {Azses}},\
  and\ \bibinfo {author} {\bibfnamefont {E.~G.}\ \bibnamefont {Dalla~Torre}},\
  }\bibfield  {title} {\bibinfo {title} {Playing quantum nonlocal games with
  six noisy qubits on the cloud},\ }\href
  {https://doi.org/https://doi.org/10.1002/qute.202100081} {\bibfield
  {journal} {\bibinfo  {journal} {Advanced Quantum Technologies}\ }\textbf
  {\bibinfo {volume} {5}},\ \bibinfo {pages} {2100081} (\bibinfo {year}
  {2022})}\BibitemShut {NoStop}%
\bibitem [{\citenamefont {Daniel}\ \emph {et~al.}(2022)\citenamefont {Daniel},
  \citenamefont {Zhu}, \citenamefont {Alderete}, \citenamefont {Buchemmavari},
  \citenamefont {Green}, \citenamefont {Nguyen}, \citenamefont {Thurtell},
  \citenamefont {Zhao}, \citenamefont {Linke},\ and\ \citenamefont
  {Miyake}}]{daniel2022quantum}%
  \BibitemOpen
  \bibfield  {author} {\bibinfo {author} {\bibfnamefont {A.~K.}\ \bibnamefont
  {Daniel}}, \bibinfo {author} {\bibfnamefont {Y.}~\bibnamefont {Zhu}},
  \bibinfo {author} {\bibfnamefont {C.~H.}\ \bibnamefont {Alderete}}, \bibinfo
  {author} {\bibfnamefont {V.}~\bibnamefont {Buchemmavari}}, \bibinfo {author}
  {\bibfnamefont {A.~M.}\ \bibnamefont {Green}}, \bibinfo {author}
  {\bibfnamefont {N.~H.}\ \bibnamefont {Nguyen}}, \bibinfo {author}
  {\bibfnamefont {T.~G.}\ \bibnamefont {Thurtell}}, \bibinfo {author}
  {\bibfnamefont {A.}~\bibnamefont {Zhao}}, \bibinfo {author} {\bibfnamefont
  {N.~M.}\ \bibnamefont {Linke}},\ and\ \bibinfo {author} {\bibfnamefont
  {A.}~\bibnamefont {Miyake}},\ }\bibfield  {title} {\bibinfo {title} {Quantum
  computational advantage attested by nonlocal games with the cyclic cluster
  state},\ }\href {https://doi.org/10.1103/PhysRevResearch.4.033068} {\bibfield
   {journal} {\bibinfo  {journal} {Phys. Rev. Research}\ }\textbf {\bibinfo
  {volume} {4}},\ \bibinfo {pages} {033068} (\bibinfo {year}
  {2022})}\BibitemShut {NoStop}%
\bibitem [{\citenamefont {Chen}\ \emph {et~al.}(2021)\citenamefont {Chen},
  \citenamefont {Satzinger}, \citenamefont {Atalaya}, \citenamefont {Korotkov},
  \citenamefont {Dunsworth}, \citenamefont {Sank}, \citenamefont {Quintana},
  \citenamefont {McEwen}, \citenamefont {Barends}, \citenamefont {Klimov},
  \citenamefont {Hong}, \citenamefont {Jones}, \citenamefont {Petukhov},
  \citenamefont {Kafri}, \citenamefont {Demura}, \citenamefont {Burkett},
  \citenamefont {Gidney}, \citenamefont {Fowler}, \citenamefont {Paler},
  \citenamefont {Putterman}, \citenamefont {Aleiner}, \citenamefont {Arute},
  \citenamefont {Arya}, \citenamefont {Babbush}, \citenamefont {Bardin},
  \citenamefont {Bengtsson}, \citenamefont {Bourassa}, \citenamefont
  {Broughton}, \citenamefont {Buckley}, \citenamefont {Buell}, \citenamefont
  {Bushnell}, \citenamefont {Chiaro}, \citenamefont {Collins}, \citenamefont
  {Courtney}, \citenamefont {Derk}, \citenamefont {Eppens}, \citenamefont
  {Erickson}, \citenamefont {Farhi}, \citenamefont {Foxen}, \citenamefont
  {Giustina}, \citenamefont {Greene}, \citenamefont {Gross}, \citenamefont
  {Harrigan}, \citenamefont {Harrington}, \citenamefont {Hilton}, \citenamefont
  {Ho}, \citenamefont {Huang}, \citenamefont {Huggins}, \citenamefont {Ioffe},
  \citenamefont {Isakov}, \citenamefont {Jeffrey}, \citenamefont {Jiang},
  \citenamefont {Kechedzhi}, \citenamefont {Kim}, \citenamefont {Kitaev},
  \citenamefont {Kostritsa}, \citenamefont {Landhuis}, \citenamefont {Laptev},
  \citenamefont {Lucero}, \citenamefont {Martin}, \citenamefont {McClean},
  \citenamefont {McCourt}, \citenamefont {Mi}, \citenamefont {Miao},
  \citenamefont {Mohseni}, \citenamefont {Montazeri}, \citenamefont
  {Mruczkiewicz}, \citenamefont {Mutus}, \citenamefont {Naaman}, \citenamefont
  {Neeley}, \citenamefont {Neill}, \citenamefont {Newman}, \citenamefont {Niu},
  \citenamefont {O'Brien}, \citenamefont {Opremcak}, \citenamefont {Ostby},
  \citenamefont {Pat{\'o}}, \citenamefont {Redd}, \citenamefont {Roushan},
  \citenamefont {Rubin}, \citenamefont {Shvarts}, \citenamefont {Strain},
  \citenamefont {Szalay}, \citenamefont {Trevithick}, \citenamefont
  {Villalonga}, \citenamefont {White}, \citenamefont {Yao}, \citenamefont
  {Yeh}, \citenamefont {Yoo}, \citenamefont {Zalcman}, \citenamefont {Neven},
  \citenamefont {Boixo}, \citenamefont {Smelyanskiy}, \citenamefont {Chen},
  \citenamefont {Megrant}, \citenamefont {Kelly},\ and\ \citenamefont
  {AI}}]{chen2021exponential}%
  \BibitemOpen
  \bibfield  {author} {\bibinfo {author} {\bibfnamefont {Z.}~\bibnamefont
  {Chen}}, \bibinfo {author} {\bibfnamefont {K.~J.}\ \bibnamefont {Satzinger}},
  \bibinfo {author} {\bibfnamefont {J.}~\bibnamefont {Atalaya}}, \bibinfo
  {author} {\bibfnamefont {A.~N.}\ \bibnamefont {Korotkov}}, \bibinfo {author}
  {\bibfnamefont {A.}~\bibnamefont {Dunsworth}}, \bibinfo {author}
  {\bibfnamefont {D.}~\bibnamefont {Sank}}, \bibinfo {author} {\bibfnamefont
  {C.}~\bibnamefont {Quintana}}, \bibinfo {author} {\bibfnamefont
  {M.}~\bibnamefont {McEwen}}, \bibinfo {author} {\bibfnamefont
  {R.}~\bibnamefont {Barends}}, \bibinfo {author} {\bibfnamefont {P.~V.}\
  \bibnamefont {Klimov}}, \bibinfo {author} {\bibfnamefont {S.}~\bibnamefont
  {Hong}}, \bibinfo {author} {\bibfnamefont {C.}~\bibnamefont {Jones}},
  \bibinfo {author} {\bibfnamefont {A.}~\bibnamefont {Petukhov}}, \bibinfo
  {author} {\bibfnamefont {D.}~\bibnamefont {Kafri}}, \bibinfo {author}
  {\bibfnamefont {S.}~\bibnamefont {Demura}}, \bibinfo {author} {\bibfnamefont
  {B.}~\bibnamefont {Burkett}}, \bibinfo {author} {\bibfnamefont
  {C.}~\bibnamefont {Gidney}}, \bibinfo {author} {\bibfnamefont {A.~G.}\
  \bibnamefont {Fowler}}, \bibinfo {author} {\bibfnamefont {A.}~\bibnamefont
  {Paler}}, \bibinfo {author} {\bibfnamefont {H.}~\bibnamefont {Putterman}},
  \bibinfo {author} {\bibfnamefont {I.}~\bibnamefont {Aleiner}}, \bibinfo
  {author} {\bibfnamefont {F.}~\bibnamefont {Arute}}, \bibinfo {author}
  {\bibfnamefont {K.}~\bibnamefont {Arya}}, \bibinfo {author} {\bibfnamefont
  {R.}~\bibnamefont {Babbush}}, \bibinfo {author} {\bibfnamefont {J.~C.}\
  \bibnamefont {Bardin}}, \bibinfo {author} {\bibfnamefont {A.}~\bibnamefont
  {Bengtsson}}, \bibinfo {author} {\bibfnamefont {A.}~\bibnamefont {Bourassa}},
  \bibinfo {author} {\bibfnamefont {M.}~\bibnamefont {Broughton}}, \bibinfo
  {author} {\bibfnamefont {B.~B.}\ \bibnamefont {Buckley}}, \bibinfo {author}
  {\bibfnamefont {D.~A.}\ \bibnamefont {Buell}}, \bibinfo {author}
  {\bibfnamefont {N.}~\bibnamefont {Bushnell}}, \bibinfo {author}
  {\bibfnamefont {B.}~\bibnamefont {Chiaro}}, \bibinfo {author} {\bibfnamefont
  {R.}~\bibnamefont {Collins}}, \bibinfo {author} {\bibfnamefont
  {W.}~\bibnamefont {Courtney}}, \bibinfo {author} {\bibfnamefont {A.~R.}\
  \bibnamefont {Derk}}, \bibinfo {author} {\bibfnamefont {D.}~\bibnamefont
  {Eppens}}, \bibinfo {author} {\bibfnamefont {C.}~\bibnamefont {Erickson}},
  \bibinfo {author} {\bibfnamefont {E.}~\bibnamefont {Farhi}}, \bibinfo
  {author} {\bibfnamefont {B.}~\bibnamefont {Foxen}}, \bibinfo {author}
  {\bibfnamefont {M.}~\bibnamefont {Giustina}}, \bibinfo {author}
  {\bibfnamefont {A.}~\bibnamefont {Greene}}, \bibinfo {author} {\bibfnamefont
  {J.~A.}\ \bibnamefont {Gross}}, \bibinfo {author} {\bibfnamefont {M.~P.}\
  \bibnamefont {Harrigan}}, \bibinfo {author} {\bibfnamefont {S.~D.}\
  \bibnamefont {Harrington}}, \bibinfo {author} {\bibfnamefont
  {J.}~\bibnamefont {Hilton}}, \bibinfo {author} {\bibfnamefont
  {A.}~\bibnamefont {Ho}}, \bibinfo {author} {\bibfnamefont {T.}~\bibnamefont
  {Huang}}, \bibinfo {author} {\bibfnamefont {W.~J.}\ \bibnamefont {Huggins}},
  \bibinfo {author} {\bibfnamefont {L.~B.}\ \bibnamefont {Ioffe}}, \bibinfo
  {author} {\bibfnamefont {S.~V.}\ \bibnamefont {Isakov}}, \bibinfo {author}
  {\bibfnamefont {E.}~\bibnamefont {Jeffrey}}, \bibinfo {author} {\bibfnamefont
  {Z.}~\bibnamefont {Jiang}}, \bibinfo {author} {\bibfnamefont
  {K.}~\bibnamefont {Kechedzhi}}, \bibinfo {author} {\bibfnamefont
  {S.}~\bibnamefont {Kim}}, \bibinfo {author} {\bibfnamefont {A.}~\bibnamefont
  {Kitaev}}, \bibinfo {author} {\bibfnamefont {F.}~\bibnamefont {Kostritsa}},
  \bibinfo {author} {\bibfnamefont {D.}~\bibnamefont {Landhuis}}, \bibinfo
  {author} {\bibfnamefont {P.}~\bibnamefont {Laptev}}, \bibinfo {author}
  {\bibfnamefont {E.}~\bibnamefont {Lucero}}, \bibinfo {author} {\bibfnamefont
  {O.}~\bibnamefont {Martin}}, \bibinfo {author} {\bibfnamefont {J.~R.}\
  \bibnamefont {McClean}}, \bibinfo {author} {\bibfnamefont {T.}~\bibnamefont
  {McCourt}}, \bibinfo {author} {\bibfnamefont {X.}~\bibnamefont {Mi}},
  \bibinfo {author} {\bibfnamefont {K.~C.}\ \bibnamefont {Miao}}, \bibinfo
  {author} {\bibfnamefont {M.}~\bibnamefont {Mohseni}}, \bibinfo {author}
  {\bibfnamefont {S.}~\bibnamefont {Montazeri}}, \bibinfo {author}
  {\bibfnamefont {W.}~\bibnamefont {Mruczkiewicz}}, \bibinfo {author}
  {\bibfnamefont {J.}~\bibnamefont {Mutus}}, \bibinfo {author} {\bibfnamefont
  {O.}~\bibnamefont {Naaman}}, \bibinfo {author} {\bibfnamefont
  {M.}~\bibnamefont {Neeley}}, \bibinfo {author} {\bibfnamefont
  {C.}~\bibnamefont {Neill}}, \bibinfo {author} {\bibfnamefont
  {M.}~\bibnamefont {Newman}}, \bibinfo {author} {\bibfnamefont {M.~Y.}\
  \bibnamefont {Niu}}, \bibinfo {author} {\bibfnamefont {T.~E.}\ \bibnamefont
  {O'Brien}}, \bibinfo {author} {\bibfnamefont {A.}~\bibnamefont {Opremcak}},
  \bibinfo {author} {\bibfnamefont {E.}~\bibnamefont {Ostby}}, \bibinfo
  {author} {\bibfnamefont {B.}~\bibnamefont {Pat{\'o}}}, \bibinfo {author}
  {\bibfnamefont {N.}~\bibnamefont {Redd}}, \bibinfo {author} {\bibfnamefont
  {P.}~\bibnamefont {Roushan}}, \bibinfo {author} {\bibfnamefont {N.~C.}\
  \bibnamefont {Rubin}}, \bibinfo {author} {\bibfnamefont {V.}~\bibnamefont
  {Shvarts}}, \bibinfo {author} {\bibfnamefont {D.}~\bibnamefont {Strain}},
  \bibinfo {author} {\bibfnamefont {M.}~\bibnamefont {Szalay}}, \bibinfo
  {author} {\bibfnamefont {M.~D.}\ \bibnamefont {Trevithick}}, \bibinfo
  {author} {\bibfnamefont {B.}~\bibnamefont {Villalonga}}, \bibinfo {author}
  {\bibfnamefont {T.}~\bibnamefont {White}}, \bibinfo {author} {\bibfnamefont
  {Z.~J.}\ \bibnamefont {Yao}}, \bibinfo {author} {\bibfnamefont
  {P.}~\bibnamefont {Yeh}}, \bibinfo {author} {\bibfnamefont {J.}~\bibnamefont
  {Yoo}}, \bibinfo {author} {\bibfnamefont {A.}~\bibnamefont {Zalcman}},
  \bibinfo {author} {\bibfnamefont {H.}~\bibnamefont {Neven}}, \bibinfo
  {author} {\bibfnamefont {S.}~\bibnamefont {Boixo}}, \bibinfo {author}
  {\bibfnamefont {V.}~\bibnamefont {Smelyanskiy}}, \bibinfo {author}
  {\bibfnamefont {Y.}~\bibnamefont {Chen}}, \bibinfo {author} {\bibfnamefont
  {A.}~\bibnamefont {Megrant}}, \bibinfo {author} {\bibfnamefont
  {J.}~\bibnamefont {Kelly}},\ and\ \bibinfo {author} {\bibfnamefont {G.~Q.}\
  \bibnamefont {AI}},\ }\bibfield  {title} {\bibinfo {title} {Exponential
  suppression of bit or phase errors with cyclic error correction},\ }\href
  {https://doi.org/10.1038/s41586-021-03588-y} {\bibfield  {journal} {\bibinfo
  {journal} {Nature}\ }\textbf {\bibinfo {volume} {595}},\ \bibinfo {pages}
  {383} (\bibinfo {year} {2021})}\BibitemShut {NoStop}%
\bibitem [{\citenamefont {Pino}\ \emph {et~al.}(2021)\citenamefont {Pino},
  \citenamefont {Dreiling}, \citenamefont {Figgatt}, \citenamefont {Gaebler},
  \citenamefont {Moses}, \citenamefont {Allman}, \citenamefont {Baldwin},
  \citenamefont {Foss-Feig}, \citenamefont {Hayes}, \citenamefont {Mayer} \emph
  {et~al.}}]{pino2021demonstration}%
  \BibitemOpen
  \bibfield  {author} {\bibinfo {author} {\bibfnamefont {J.~M.}\ \bibnamefont
  {Pino}}, \bibinfo {author} {\bibfnamefont {J.~M.}\ \bibnamefont {Dreiling}},
  \bibinfo {author} {\bibfnamefont {C.}~\bibnamefont {Figgatt}}, \bibinfo
  {author} {\bibfnamefont {J.~P.}\ \bibnamefont {Gaebler}}, \bibinfo {author}
  {\bibfnamefont {S.~A.}\ \bibnamefont {Moses}}, \bibinfo {author}
  {\bibfnamefont {M.}~\bibnamefont {Allman}}, \bibinfo {author} {\bibfnamefont
  {C.}~\bibnamefont {Baldwin}}, \bibinfo {author} {\bibfnamefont
  {M.}~\bibnamefont {Foss-Feig}}, \bibinfo {author} {\bibfnamefont
  {D.}~\bibnamefont {Hayes}}, \bibinfo {author} {\bibfnamefont
  {K.}~\bibnamefont {Mayer}}, \emph {et~al.},\ }\bibfield  {title} {\bibinfo
  {title} {Demonstration of the trapped-ion quantum ccd computer
  architecture},\ }\href {https://doi.org/10.1038/s41586-021-03318-4}
  {\bibfield  {journal} {\bibinfo  {journal} {Nature}\ }\textbf {\bibinfo
  {volume} {592}},\ \bibinfo {pages} {209} (\bibinfo {year}
  {2021})}\BibitemShut {NoStop}%
\bibitem [{\citenamefont {Bravyi}\ \emph {et~al.}(2018)\citenamefont {Bravyi},
  \citenamefont {Gosset},\ and\ \citenamefont {Koenig}}]{bravyi2018quantum}%
  \BibitemOpen
  \bibfield  {author} {\bibinfo {author} {\bibfnamefont {S.}~\bibnamefont
  {Bravyi}}, \bibinfo {author} {\bibfnamefont {D.}~\bibnamefont {Gosset}},\
  and\ \bibinfo {author} {\bibfnamefont {R.}~\bibnamefont {Koenig}},\
  }\bibfield  {title} {\bibinfo {title} {Quantum advantage with shallow
  circuits},\ }\href {https://doi.org/10.1126/science.aar3106} {\bibfield
  {journal} {\bibinfo  {journal} {Science}\ }\textbf {\bibinfo {volume}
  {362}},\ \bibinfo {pages} {308} (\bibinfo {year} {2018})}\BibitemShut
  {NoStop}%
\bibitem [{\citenamefont {Gall}(2019)}]{gall2018average}%
  \BibitemOpen
  \bibfield  {author} {\bibinfo {author} {\bibfnamefont {F.~L.}\ \bibnamefont
  {Gall}},\ }\bibfield  {title} {\bibinfo {title} {{Average-Case Quantum
  Advantage with Shallow Circuits}},\ }in\ \href
  {https://doi.org/10.4230/LIPIcs.CCC.2019.21} {\emph {\bibinfo {booktitle}
  {34th Computational Complexity Conference (CCC 2019)}}},\ \bibinfo {series}
  {Leibniz International Proceedings in Informatics (LIPIcs)}, Vol.\ \bibinfo
  {volume} {137}\ (\bibinfo {year} {2019})\ pp.\ \bibinfo {pages}
  {21:1--21:20}\BibitemShut {NoStop}%
\bibitem [{\citenamefont {Watts}\ \emph {et~al.}(2019)\citenamefont {Watts},
  \citenamefont {Kothari}, \citenamefont {Schaeffer},\ and\ \citenamefont
  {Tal}}]{watts2019exponential}%
  \BibitemOpen
  \bibfield  {author} {\bibinfo {author} {\bibfnamefont {A.~B.}\ \bibnamefont
  {Watts}}, \bibinfo {author} {\bibfnamefont {R.}~\bibnamefont {Kothari}},
  \bibinfo {author} {\bibfnamefont {L.}~\bibnamefont {Schaeffer}},\ and\
  \bibinfo {author} {\bibfnamefont {A.}~\bibnamefont {Tal}},\ }\bibfield
  {title} {\bibinfo {title} {Exponential separation between shallow quantum
  circuits and unbounded fan-in shallow classical circuits},\ }in\ \href
  {https://doi.org/10.1145/3313276.3316404} {\emph {\bibinfo {booktitle}
  {Proceedings of the 51st Annual ACM SIGACT Symposium on Theory of
  Computing}}}\ (\bibinfo {year} {2019})\ pp.\ \bibinfo {pages}
  {515--526}\BibitemShut {NoStop}%
\bibitem [{\citenamefont {Grier}\ and\ \citenamefont
  {Schaeffer}(2019)}]{grier2019interactive}%
  \BibitemOpen
  \bibfield  {author} {\bibinfo {author} {\bibfnamefont {D.}~\bibnamefont
  {Grier}}\ and\ \bibinfo {author} {\bibfnamefont {L.}~\bibnamefont
  {Schaeffer}},\ }\bibfield  {title} {\bibinfo {title} {Interactive shallow
  clifford circuits: quantum advantage against {NC$^{1}$} and beyond},\ }\href
  {https://arxiv.org/abs/1911.02555} {\bibfield  {journal} {\bibinfo  {journal}
  {arXiv preprint arXiv:1911.02555}\ } (\bibinfo {year} {2019})}\BibitemShut
  {NoStop}%
\bibitem [{\citenamefont {Caha}\ \emph {et~al.}(2022)\citenamefont {Caha},
  \citenamefont {Coiteux-Roy},\ and\ \citenamefont {Koenig}}]{caha2022single}%
  \BibitemOpen
  \bibfield  {author} {\bibinfo {author} {\bibfnamefont {L.}~\bibnamefont
  {Caha}}, \bibinfo {author} {\bibfnamefont {X.}~\bibnamefont {Coiteux-Roy}},\
  and\ \bibinfo {author} {\bibfnamefont {R.}~\bibnamefont {Koenig}},\
  }\bibfield  {title} {\bibinfo {title} {Single-qubit gate teleportation
  provides a quantum advantage},\ }\bibfield  {journal} {\bibinfo  {journal}
  {arXiv preprint arXiv:2209.14158}\ }\href
  {https://doi.org/https://doi.org/10.48550/arXiv.2209.14158}
  {https://doi.org/10.48550/arXiv.2209.14158} (\bibinfo {year}
  {2022})\BibitemShut {NoStop}%
\bibitem [{\citenamefont {Budroni}\ \emph {et~al.}(2021)\citenamefont
  {Budroni}, \citenamefont {Cabello}, \citenamefont {G{\"u}hne}, \citenamefont
  {Kleinmann},\ and\ \citenamefont {Larsson}}]{budroni2021quantum}%
  \BibitemOpen
  \bibfield  {author} {\bibinfo {author} {\bibfnamefont {C.}~\bibnamefont
  {Budroni}}, \bibinfo {author} {\bibfnamefont {A.}~\bibnamefont {Cabello}},
  \bibinfo {author} {\bibfnamefont {O.}~\bibnamefont {G{\"u}hne}}, \bibinfo
  {author} {\bibfnamefont {M.}~\bibnamefont {Kleinmann}},\ and\ \bibinfo
  {author} {\bibfnamefont {J.-{\AA}.}\ \bibnamefont {Larsson}},\ }\bibfield
  {title} {\bibinfo {title} {Kochen-specker contextuality},\ }\bibfield
  {journal} {\bibinfo  {journal} {arXiv preprint arXiv:2102.13036}\ }\href
  {https://doi.org/https://doi.org/10.48550/arXiv.2102.13036}
  {https://doi.org/10.48550/arXiv.2102.13036} (\bibinfo {year}
  {2021})\BibitemShut {NoStop}%
\bibitem [{\citenamefont {Raussendorf}\ \emph {et~al.}(2017)\citenamefont
  {Raussendorf}, \citenamefont {Browne}, \citenamefont {Delfosse},
  \citenamefont {Okay},\ and\ \citenamefont
  {Bermejo-Vega}}]{raussendorf2017contextuality}%
  \BibitemOpen
  \bibfield  {author} {\bibinfo {author} {\bibfnamefont {R.}~\bibnamefont
  {Raussendorf}}, \bibinfo {author} {\bibfnamefont {D.~E.}\ \bibnamefont
  {Browne}}, \bibinfo {author} {\bibfnamefont {N.}~\bibnamefont {Delfosse}},
  \bibinfo {author} {\bibfnamefont {C.}~\bibnamefont {Okay}},\ and\ \bibinfo
  {author} {\bibfnamefont {J.}~\bibnamefont {Bermejo-Vega}},\ }\bibfield
  {title} {\bibinfo {title} {Contextuality and wigner-function negativity in
  qubit quantum computation},\ }\href
  {https://doi.org/10.1103/PhysRevA.95.052334} {\bibfield  {journal} {\bibinfo
  {journal} {Phys. Rev. A}\ }\textbf {\bibinfo {volume} {95}},\ \bibinfo
  {pages} {052334} (\bibinfo {year} {2017})}\BibitemShut {NoStop}%
\bibitem [{\citenamefont {Raussendorf}(2019)}]{raussendorf2019cohomological}%
  \BibitemOpen
  \bibfield  {author} {\bibinfo {author} {\bibfnamefont {R.}~\bibnamefont
  {Raussendorf}},\ }\bibfield  {title} {\bibinfo {title} {Cohomological
  framework for contextual quantum computations},\ }\href
  {http://www.rintonpress.com/xxqic19/qic-19-1314/1141-1170.pdf} {\bibfield
  {journal} {\bibinfo  {journal} {Quantum Information {\&} Computation}\
  }\textbf {\bibinfo {volume} {19}},\ \bibinfo {pages} {1141} (\bibinfo {year}
  {2019})}\BibitemShut {NoStop}%
\bibitem [{\citenamefont {Aasn{\ae}ss}(2020)}]{aasnaess2020cohomology}%
  \BibitemOpen
  \bibfield  {author} {\bibinfo {author} {\bibfnamefont {S.}~\bibnamefont
  {Aasn{\ae}ss}},\ }\bibfield  {title} {\bibinfo {title} {Cohomology and the
  algebraic structure of contextuality in measurement based quantum
  computation},\ }in\ \href {https://doi.org/10.4204/EPTCS.318.15} {\emph
  {\bibinfo {booktitle} {{\rm Proceedings 16th International Conference on}
  Quantum Physics and Logic, {\rm Chapman University, Orange, CA, USA., 10-14
  June 2019}}}},\ \bibinfo {series} {Electronic Proceedings in Theoretical
  Computer Science}, Vol.\ \bibinfo {volume} {318},\ \bibinfo {editor} {edited
  by\ \bibinfo {editor} {\bibfnamefont {B.}~\bibnamefont {Coecke}}\ and\
  \bibinfo {editor} {\bibfnamefont {M.}~\bibnamefont {Leifer}}}\ (\bibinfo
  {publisher} {Open Publishing Association},\ \bibinfo {year} {2020})\ pp.\
  \bibinfo {pages} {242--253}\BibitemShut {NoStop}%
\bibitem [{\citenamefont {Raussendorf}(2022)}]{raussendorf2022putting}%
  \BibitemOpen
  \bibfield  {author} {\bibinfo {author} {\bibfnamefont {R.}~\bibnamefont
  {Raussendorf}},\ }\bibfield  {title} {\bibinfo {title} {Putting paradoxes to
  work: contextuality in measurement-based quantum computation},\ }\bibfield
  {journal} {\bibinfo  {journal} {arXiv preprint arXiv:2208.06624}\ }\href
  {https://doi.org/https://doi.org/10.48550/arXiv.2208.06624}
  {https://doi.org/10.48550/arXiv.2208.06624} (\bibinfo {year}
  {2022})\BibitemShut {NoStop}%
\bibitem [{\citenamefont {Smolensky}(1987)}]{smolensky1987algebraic}%
  \BibitemOpen
  \bibfield  {author} {\bibinfo {author} {\bibfnamefont {R.}~\bibnamefont
  {Smolensky}},\ }\bibfield  {title} {\bibinfo {title} {Algebraic methods in
  the theory of lower bounds for boolean circuit complexity},\ }in\ \href
  {https://doi.org/https://doi.org/10.1145/28395.28404} {\emph {\bibinfo
  {booktitle} {Proceedings of the nineteenth annual ACM symposium on Theory of
  computing}}}\ (\bibinfo {year} {1987})\ pp.\ \bibinfo {pages}
  {77--82}\BibitemShut {NoStop}%
\bibitem [{\citenamefont {Mermin}(1990)}]{mermin1990extreme}%
  \BibitemOpen
  \bibfield  {author} {\bibinfo {author} {\bibfnamefont {N.~D.}\ \bibnamefont
  {Mermin}},\ }\bibfield  {title} {\bibinfo {title} {Extreme quantum
  entanglement in a superposition of macroscopically distinct states},\ }\href
  {https://doi.org/10.1103/PhysRevLett.65.1838} {\bibfield  {journal} {\bibinfo
   {journal} {Phys. Rev. Lett.}\ }\textbf {\bibinfo {volume} {65}},\ \bibinfo
  {pages} {1838} (\bibinfo {year} {1990})}\BibitemShut {NoStop}%
\bibitem [{\citenamefont {Moore}\ and\ \citenamefont
  {Nilsson}(1998)}]{moore1998some}%
  \BibitemOpen
  \bibfield  {author} {\bibinfo {author} {\bibfnamefont {C.}~\bibnamefont
  {Moore}}\ and\ \bibinfo {author} {\bibfnamefont {M.}~\bibnamefont
  {Nilsson}},\ }\bibfield  {title} {\bibinfo {title} {Some notes on parallel
  quantum computation},\ }\bibfield  {journal} {\bibinfo  {journal} {arXiv
  preprint quant-ph/9804034}\ }\href
  {https://doi.org/https://doi.org/10.48550/arXiv.quant-ph/9804034}
  {https://doi.org/10.48550/arXiv.quant-ph/9804034} (\bibinfo {year}
  {1998})\BibitemShut {NoStop}%
\bibitem [{\citenamefont {Barrington}\ and\ \citenamefont
  {Therien}(1988)}]{barrington1988finite}%
  \BibitemOpen
  \bibfield  {author} {\bibinfo {author} {\bibfnamefont {D.~A.~M.}\
  \bibnamefont {Barrington}}\ and\ \bibinfo {author} {\bibfnamefont
  {D.}~\bibnamefont {Therien}},\ }\bibfield  {title} {\bibinfo {title} {Finite
  monoids and the fine structure of $\mathsf{NC}^1$},\ }\href
  {https://doi.org/https://doi.org/10.1145/48014.63138} {\bibfield  {journal}
  {\bibinfo  {journal} {Journal of the ACM (JACM)}\ }\textbf {\bibinfo {volume}
  {35}},\ \bibinfo {pages} {941} (\bibinfo {year} {1988})}\BibitemShut
  {NoStop}%
\bibitem [{\citenamefont {Barrington}(1989)}]{barrington1989bounded}%
  \BibitemOpen
  \bibfield  {author} {\bibinfo {author} {\bibfnamefont {D.~A.}\ \bibnamefont
  {Barrington}},\ }\bibfield  {title} {\bibinfo {title} {Bounded-width
  polynomial-size branching programs recognize exactly those languages in
  $\mathsf{NC}^1$},\ }\href
  {https://doi.org/https://doi.org/10.1016/0022-0000(89)90037-8} {\bibfield
  {journal} {\bibinfo  {journal} {Journal of Computer and System Sciences}\
  }\textbf {\bibinfo {volume} {38}},\ \bibinfo {pages} {150} (\bibinfo {year}
  {1989})}\BibitemShut {NoStop}%
\bibitem [{\citenamefont {{Bravyi}}\ \emph {et~al.}(2020)\citenamefont
  {{Bravyi}}, \citenamefont {{Gosset}}, \citenamefont {{Koenig}},\ and\
  \citenamefont {{Tomamichel}}}]{bravyi2020quantum}%
  \BibitemOpen
  \bibfield  {author} {\bibinfo {author} {\bibfnamefont {S.}~\bibnamefont
  {{Bravyi}}}, \bibinfo {author} {\bibfnamefont {D.}~\bibnamefont {{Gosset}}},
  \bibinfo {author} {\bibfnamefont {R.}~\bibnamefont {{Koenig}}},\ and\
  \bibinfo {author} {\bibfnamefont {M.}~\bibnamefont {{Tomamichel}}},\
  }\bibfield  {title} {\bibinfo {title} {Quantum advantage with noisy shallow
  circuits},\ }\href {https://doi.org/10.1038/s41567-020-0948-z} {\bibfield
  {journal} {\bibinfo  {journal} {Nature Physics}\ ,\ \bibinfo {pages} {1}}
  (\bibinfo {year} {2020})}\BibitemShut {NoStop}%
\bibitem [{\citenamefont {Hoban}\ \emph
  {et~al.}(2011{\natexlab{b}})\citenamefont {Hoban}, \citenamefont {Wallman},\
  and\ \citenamefont {Browne}}]{hoban2011generalized}%
  \BibitemOpen
  \bibfield  {author} {\bibinfo {author} {\bibfnamefont {M.~J.}\ \bibnamefont
  {Hoban}}, \bibinfo {author} {\bibfnamefont {J.~J.}\ \bibnamefont {Wallman}},\
  and\ \bibinfo {author} {\bibfnamefont {D.~E.}\ \bibnamefont {Browne}},\
  }\bibfield  {title} {\bibinfo {title} {Generalized bell-inequality
  experiments and computation},\ }\href
  {https://doi.org/10.1103/PhysRevA.84.062107} {\bibfield  {journal} {\bibinfo
  {journal} {Phys. Rev. A}\ }\textbf {\bibinfo {volume} {84}},\ \bibinfo
  {pages} {062107} (\bibinfo {year} {2011}{\natexlab{b}})}\BibitemShut
  {NoStop}%
\bibitem [{\citenamefont {Fine}(1982)}]{fine1982hidden}%
  \BibitemOpen
  \bibfield  {author} {\bibinfo {author} {\bibfnamefont {A.}~\bibnamefont
  {Fine}},\ }\bibfield  {title} {\bibinfo {title} {Hidden variables, joint
  probability, and the bell inequalities},\ }\href
  {https://doi.org/10.1103/PhysRevLett.48.291} {\bibfield  {journal} {\bibinfo
  {journal} {Phys. Rev. Lett.}\ }\textbf {\bibinfo {volume} {48}},\ \bibinfo
  {pages} {291} (\bibinfo {year} {1982})}\BibitemShut {NoStop}%
\bibitem [{\citenamefont {Werner}\ and\ \citenamefont
  {Wolf}(2001)}]{werner2001all}%
  \BibitemOpen
  \bibfield  {author} {\bibinfo {author} {\bibfnamefont {R.~F.}\ \bibnamefont
  {Werner}}\ and\ \bibinfo {author} {\bibfnamefont {M.~M.}\ \bibnamefont
  {Wolf}},\ }\bibfield  {title} {\bibinfo {title} {All-multipartite
  bell-correlation inequalities for two dichotomic observables per site},\
  }\href {https://doi.org/10.1103/PhysRevA.64.032112} {\bibfield  {journal}
  {\bibinfo  {journal} {Phys. Rev. A}\ }\textbf {\bibinfo {volume} {64}},\
  \bibinfo {pages} {032112} (\bibinfo {year} {2001})}\BibitemShut {NoStop}%
\bibitem [{\citenamefont {Okay}\ \emph {et~al.}(2018)\citenamefont {Okay},
  \citenamefont {Tyhurst},\ and\ \citenamefont
  {Raussendorf}}]{okay2018cohomological}%
  \BibitemOpen
  \bibfield  {author} {\bibinfo {author} {\bibfnamefont {C.}~\bibnamefont
  {Okay}}, \bibinfo {author} {\bibfnamefont {E.}~\bibnamefont {Tyhurst}},\ and\
  \bibinfo {author} {\bibfnamefont {R.}~\bibnamefont {Raussendorf}},\
  }\bibfield  {title} {\bibinfo {title} {The cohomological and the
  resource-theoretic perspective on quantum contextuality: common ground
  through the contextual fraction},\ }\href {https://arxiv.org/abs/1806.04657}
  {\bibfield  {journal} {\bibinfo  {journal} {arXiv preprint arXiv:1806.04657}\
  } (\bibinfo {year} {2018})}\BibitemShut {NoStop}%
\bibitem [{\citenamefont {Kliuchnikov}\ \emph {et~al.}(2013)\citenamefont
  {Kliuchnikov}, \citenamefont {Maslov},\ and\ \citenamefont
  {Mosca}}]{kliuchnikov2013fast}%
  \BibitemOpen
  \bibfield  {author} {\bibinfo {author} {\bibfnamefont {V.}~\bibnamefont
  {Kliuchnikov}}, \bibinfo {author} {\bibfnamefont {D.}~\bibnamefont
  {Maslov}},\ and\ \bibinfo {author} {\bibfnamefont {M.}~\bibnamefont
  {Mosca}},\ }\bibfield  {title} {\bibinfo {title} {Fast and efficient exact
  synthesis of single-qubit unitaries generated by {C}lifford and {T} gates},\
  }\href {https://doi.org/https://doi.org/10.26421/QIC13.7-8-4} {\bibfield
  {journal} {\bibinfo  {journal} {Quantum Info. Comput.}\ }\textbf {\bibinfo
  {volume} {13}},\ \bibinfo {pages} {607–630} (\bibinfo {year}
  {2013})}\BibitemShut {NoStop}%
\bibitem [{\citenamefont {Forest}\ \emph {et~al.}(2015)\citenamefont {Forest},
  \citenamefont {Gosset}, \citenamefont {Kliuchnikov},\ and\ \citenamefont
  {McKinnon}}]{forest2015exact}%
  \BibitemOpen
  \bibfield  {author} {\bibinfo {author} {\bibfnamefont {S.}~\bibnamefont
  {Forest}}, \bibinfo {author} {\bibfnamefont {D.}~\bibnamefont {Gosset}},
  \bibinfo {author} {\bibfnamefont {V.}~\bibnamefont {Kliuchnikov}},\ and\
  \bibinfo {author} {\bibfnamefont {D.}~\bibnamefont {McKinnon}},\ }\bibfield
  {title} {\bibinfo {title} {Exact synthesis of single-qubit unitaries over
  clifford-cyclotomic gate sets},\ }\href {https://doi.org/10.1063/1.4927100}
  {\bibfield  {journal} {\bibinfo  {journal} {Journal of Mathematical Physics}\
  }\textbf {\bibinfo {volume} {56}},\ \bibinfo {pages} {082201} (\bibinfo
  {year} {2015})},\ \Eprint
  {https://arxiv.org/abs/https://doi.org/10.1063/1.4927100}
  {https://doi.org/10.1063/1.4927100} \BibitemShut {NoStop}%
\bibitem [{\citenamefont {Daniel}\ and\ \citenamefont
  {Miyake}(2021)}]{daniel2021quantum}%
  \BibitemOpen
  \bibfield  {author} {\bibinfo {author} {\bibfnamefont {A.~K.}\ \bibnamefont
  {Daniel}}\ and\ \bibinfo {author} {\bibfnamefont {A.}~\bibnamefont
  {Miyake}},\ }\bibfield  {title} {\bibinfo {title} {Quantum computational
  advantage with string order parameters of one-dimensional symmetry-protected
  topological order},\ }\href {https://doi.org/10.1103/PhysRevLett.126.090505}
  {\bibfield  {journal} {\bibinfo  {journal} {Phys. Rev. Lett.}\ }\textbf
  {\bibinfo {volume} {126}},\ \bibinfo {pages} {090505} (\bibinfo {year}
  {2021})}\BibitemShut {NoStop}%
\bibitem [{\citenamefont {Bulchandani}\ \emph {et~al.}(2022)\citenamefont
  {Bulchandani}, \citenamefont {Burnell},\ and\ \citenamefont
  {Sondhi}}]{bulchandani2022playing}%
  \BibitemOpen
  \bibfield  {author} {\bibinfo {author} {\bibfnamefont {V.~B.}\ \bibnamefont
  {Bulchandani}}, \bibinfo {author} {\bibfnamefont {F.~J.}\ \bibnamefont
  {Burnell}},\ and\ \bibinfo {author} {\bibfnamefont {S.~L.}\ \bibnamefont
  {Sondhi}},\ }\bibfield  {title} {\bibinfo {title} {Playing nonlocal games
  with phases of quantum matter},\ }\bibfield  {journal} {\bibinfo  {journal}
  {arXiv preprint arXiv:2206.11252}\ }\href
  {https://doi.org/https://doi.org/10.48550/arXiv.2206.11252}
  {https://doi.org/10.48550/arXiv.2206.11252} (\bibinfo {year}
  {2022})\BibitemShut {NoStop}%
\bibitem [{\citenamefont {Tantivasadakarn}\ \emph {et~al.}(2021)\citenamefont
  {Tantivasadakarn}, \citenamefont {Thorngren}, \citenamefont {Vishwanath},\
  and\ \citenamefont {Verresen}}]{tantivasadakarn2021long}%
  \BibitemOpen
  \bibfield  {author} {\bibinfo {author} {\bibfnamefont {N.}~\bibnamefont
  {Tantivasadakarn}}, \bibinfo {author} {\bibfnamefont {R.}~\bibnamefont
  {Thorngren}}, \bibinfo {author} {\bibfnamefont {A.}~\bibnamefont
  {Vishwanath}},\ and\ \bibinfo {author} {\bibfnamefont {R.}~\bibnamefont
  {Verresen}},\ }\bibfield  {title} {\bibinfo {title} {Long-range entanglement
  from measuring symmetry-protected topological phases},\ }\bibfield  {journal}
  {\bibinfo  {journal} {arXiv preprint arXiv:2112.01519}\ }\href
  {https://doi.org/https://doi.org/10.48550/arXiv.2112.01519}
  {https://doi.org/10.48550/arXiv.2112.01519} (\bibinfo {year}
  {2021})\BibitemShut {NoStop}%
\bibitem [{\citenamefont {Verresen}\ \emph {et~al.}(2021)\citenamefont
  {Verresen}, \citenamefont {Tantivasadakarn},\ and\ \citenamefont
  {Vishwanath}}]{verresen2021efficiently}%
  \BibitemOpen
  \bibfield  {author} {\bibinfo {author} {\bibfnamefont {R.}~\bibnamefont
  {Verresen}}, \bibinfo {author} {\bibfnamefont {N.}~\bibnamefont
  {Tantivasadakarn}},\ and\ \bibinfo {author} {\bibfnamefont {A.}~\bibnamefont
  {Vishwanath}},\ }\bibfield  {title} {\bibinfo {title} {Efficiently preparing
  schr{\"o}dinger’s cat, fractons and non-abelian topological order in
  quantum devices},\ }\bibfield  {journal} {\bibinfo  {journal} {arXiv preprint
  arXiv:2112.03061}\ }\href
  {https://doi.org/https://doi.org/10.48550/arXiv.2112.03061}
  {https://doi.org/10.48550/arXiv.2112.03061} (\bibinfo {year}
  {2021})\BibitemShut {NoStop}%
\end{thebibliography}%

\end{document}